\newtheorem{assumption}{\bf Assumption}
\newtheorem{theorem}{\bf Theorem}
\newtheorem{proposition}{\bf Proposition}
\newtheorem{lemma}{\bf Lemma}
\newtheorem{corollary}{\bf Corollary}
\newtheorem{remark}{\bf Remark}
\begin{document}
\title{Periodic optimal control of nonlinear constrained systems using economic model predictive control}
\author{Johannes K\"ohler$^1$, Matthias A. M\"uller$^2$, Frank Allg\"ower$^1$
\thanks{$^1$Johannes K\"ohler and Frank Allg\"ower are with the
Institute for Systems Theory and Automatic Control, University of Stuttgart,
70550 Stuttgart, Germany. (email:$\{$johannes.koehler, frank.allgower$\}$@ist.uni-stuttgart.de).}
\thanks{$^2$Matthias A. M\"uller is with the Institute of Automatic Control, Leibniz University Hannover, 30167 Hannover, Germany.
(email:mueller@irt.uni-hannover.de).}
\thanks{This work was supported by the German Research Foundation (DFG) under Grants
GRK 2198/1 - 277536708, AL 316/12-2, and MU 3929/1-2 - 279734922.} 
}
\maketitle
\begin{abstract}
In this paper, we consider the problem of periodic optimal control of nonlinear systems subject to online changing and periodically time-varying economic performance measures using model predictive control (MPC). 
The proposed economic MPC scheme uses an online optimized artificial periodic orbit to ensure recursive feasibility and constraint satisfaction despite unpredictable changes in the economic performance index. 
We demonstrate that the direct extension of existing methods to periodic orbits does not necessarily yield the desirable closed-loop economic performance. 
Instead, we carefully revise the constraints on the artificial trajectory, which ensures that the closed-loop average performance is no worse than a locally optimal periodic orbit. 
In the special case that the prediction horizon is set to zero, the proposed scheme is a modified version of recent publications using periodicity constraints, with the important difference that the resulting closed loop has more degrees of freedom which are vital to ensure convergence to an optimal periodic orbit.   
In addition, we detail a tailored offline computation of suitable terminal ingredients, which are both theoretically and practically beneficial for closed-loop performance improvement. 
Finally, we demonstrate the practicality and performance improvements of the proposed approach on benchmark examples.
\end{abstract}
\begin{IEEEkeywords}
Nonlinear model predictive control, economic MPC, dynamic real time optimization,  periodic optimal control, changing economic criteria
\end{IEEEkeywords}
 
\IEEEoverridecommandlockouts
\IEEEpubid{\begin{minipage}{\textwidth}\ \\[18pt] \\ \\
         \copyright 2020 Elsevier Ltd.  All rights reserved.
     \end{minipage}}

%\IEEEpeerreviewmaketitle
%!TEX root = ./EcoPeriodic.tex
%%%%%%%%%%%%%%%%%%%%%%%%%%%%%%%%%%%%%%%%%%%%%%%%%%%%%%%%%%%%%%%%%%%%%%%%%%%%%%%
\section{Introduction}
Model Predictive Control (MPC) is a well established control method that can cope with nonlinear dynamics, hard state and input constraints, and the inclusion of general performance criteria~\cite{rawlings2017model}. 
Economic MPC~\cite{angeli2012average,ellis2016economic,muller2017economic,faulwasser2018economic} is a variant of MPC that directly aims at improving a user specified economic performance index instead of stabilizing some given setpoint or trajectory. 
%standard - time-invariant, steady-state
In case the system is optimally operated at a given setpoint, there exist well established methods to design economic MPC schemes with closed-loop performance guarantees~\cite{angeli2012average,amrit2011economic,grune2013economic,faulwasser2018economic}.

\subsubsection*{Motivation}
Many practical problems require a paradigm that goes beyond steady-state operation and embraces dynamic operation and online changing conditions. 
Periodic operation is, e.g., economically beneficial (and necessary) in water distribution networks~\cite{limon2014single,wang2018economic}, electrical networks~\cite{pereira2015periodic}, buildings and HVAC~\cite{serale2018model,rawlings2018economic} due to the inherent periodicity (day-night cycle) in changing price signals (supply and demand) or dynamics (outside temperature).   %ma2012demand,
%
%chemical - nonlinear periodic 
Even in the time-invariant problem of maximizing the production in (nonlinear) continuous stirred-tank reactors (CSTR), periodic operation can be economically beneficial, compare~\cite{bailey1971cyclic,budman2013control}.   
Periodic operations also naturally arise in periodic/cyclic scheduling~\cite{risbeck2019unification} and power generation using kites~\cite{diehl2004efficient}.   
In addition to the challenges related to dynamic/periodic operation, the external operating conditions may change unpredictably and the system is expected to reliably and economically operate despite these changes.  
%what to we do
In this paper, we present an economic MPC framework that provides economic performance guarantees for periodic operation of nonlinear systems. 

\subsubsection*{Related Work}
In~\cite{zanon2017periodic,alessandretti2016convergence,risbeck2019unification} performance guarantees are obtained by using terminal constraints for \textit{a-priori known optimal periodic orbits}.   
Online changes in the optimal system operation cannot be incorporated.

In~\cite{grune2013economic} an economic MPC scheme without terminal ingredients is studied, which has recently been extended to periodic operation~\cite{muller2016economic} and time-varying problems~\cite{grune2018economic}. 
The resulting closed-loop performance guarantees are, however, only valid if a potentially \textit{large prediction horizon} is used. 

In~\cite{houska2017cost,wang2018economic} an economic MPC scheme based on a periodicity constraint is suggested, which minimizes the cost of a periodic orbit starting at the current measured state.
The convergence to the optimal periodic orbit can be theoretically studied based on convergence results for coordinate descent methods~\cite[Lemma~3, Thm.~4]{houska2017cost} or online examined based on dual variables~\cite[Thm.~1]{wang2018economic}. 
Crucially, even in the linear convex case, the closed loop does not necessarily converge to the optimal periodic orbit~\cite[Example~6]{wang2018economic}.

A promising approach to deal with online changing conditions is to simultaneously optimize an artificial reference, which is a well established method in setpoint tracking MPC~\cite{limon2008mpc} and has recently been extended to periodic tracking and nonlinear systems~\cite{limon2016mpc,limon2018nonlinear,JK_periodic_automatica}. 
This idea of using an artificial setpoint in combination with an external update scheme are used in~\cite{muller2013economic,muller2014performance,fagiano2013generalized} to design economic MPC schemes with performance guarantees relative to steady-state operation, compare also~\cite{ferramosca2014economic}.

In~\cite{limon2014single} for linear systems an artificial reference is used to compute the economic optimal periodic orbit and a tracking stage cost is used to ensure stability, compare also~\cite{broomhead2015robust}. 
Recently, in~\cite{gutekunst2020economic} a nonlinear version has been proposed, which also allows to optimize over the period length using a continuous-time formulation. 
Furthermore, instead of a standard tracking stage cost a regularization with respect to the non-periodicity in the input and economic cost is used, which also guarantees convergence to the optimal periodic orbit under appropriate conditions (controllability, no local minima,$\dots$).
However, the usage of a tracking cost/periodic regularization can reduce the transient economic performance, compare, .e.g.~\cite{rawlings2008unreachable}.

 \subsubsection*{Contribution}
We present an economic MPC scheme that ensures recursive feasibility, constraint satisfaction and performance guarantees for nonlinear systems despite unpredictable changes in the economic performance index. 
Recursive feasibility is achieved by including an artificial periodic reference trajectory in the online optimization. 
We use a self-tuning weight for the cost of the artificial reference trajectory in order to obtain suitable bounds on the closed-loop performance, as an extension to~\cite{muller2013economic,muller2014performance}, where this idea was introduced with artificial setpoints. 
We demonstrate by means of a simple motivating example (similar to~\cite{muller2016economic}), that a direct extension of~\cite{muller2013economic,muller2014performance} to the periodic case does not necessarily yield the desired closed-loop performance. 
Instead, we use a novel continuity condition and reformulate the constraints on the optimal periodic orbit  to ensure that:
\begin{enumerate}[label=\alph*)]
\item  the average performance of the artificial reference converges to that of a locally optimal periodic trajectory,
\item  the closed-loop average performance is no worse than that of the (limiting) artificial reference trajectory.
\end{enumerate}
 In the special case of linear systems with convex cost and convex constraints, the closed-loop average performance is no worse than the (globally) optimal periodic orbit. 
Furthermore, if we consider a prediction horizon of $N=0$, we obtain a modified version of recent publications using periodicity constraints~\cite{houska2017cost,wang2018economic}, with the same number of optimization variables and guaranteed convergence to a (local) optimum, which in~\cite{houska2017cost,wang2018economic} can only be ensured under significantly more restrictive conditions. 

Some of the improved performance properties require the usage of suitable terminal ingredients (economic terminal cost and terminal set). 
To this end, we provide a novel design procedure that is applicable to dynamic operation and economic costs, as an extension of the approach in~\cite{amrit2011economic}.

We demonstrate the practicality and performance of the proposed framework using a periodic time-varying  HVAC system~\cite{risbeck2019economic} and a time-invariant nonlinear CSTR~\cite{bailey1971cyclic}. 

\subsubsection*{Outline}
Section~\ref{sec:MPC} introduces the problem setup and demonstrates that the periodic case requires additional care with a simple system. 
The proposed economic MPC framework with theoretical analysis is presented in Section~\ref{sec:main}. 
Additional details and variations are discussed in Section~\ref{sec:ext}. 
Section~\ref{sec:num} illustrates the results with numerical examples.
Section~\ref{sec:sum} concludes the paper.  
Appendix~\ref{app:CSTR} contains additional details regarding the CSTR example.

%!TEX root = ./EcoPeriodic.tex
%%%%%%%%%%%%%%%%%%%%%%%%%%%%%%%%%%%%%%%%%%%%%%%%%%%%%%%%%%%%%%%%%%%%%%%%%%%%%%%
\subsubsection*{Notation}
The identity matrix is $I\in\mathbb{R}^{n\times n}$.  
The interior of a set $\mathcal{X}\subset\mathbb{R}^n$ is $\text{int}(\mathcal{X})$.
A ball with radius $\epsilon$ around a point $x\in\mathbb{R}^n$ is $\mathbb{B}_{\epsilon}(x)=\{y\in\mathbb{R}^n|~\|x-y\|\leq \epsilon\}$. 
By $\mathcal{K}_{\infty}$ we denote the class of functions $\alpha:\mathbb{R}_{\geq 0}\rightarrow\mathbb{R}_{\geq 0}$, which are continuous, strictly increasing, unbounded and satisfy $\alpha(0)=0$. 
\section{Periodic economic model predictive control}
\label{sec:MPC}
In Section~\ref{sec:setup} we introduce the problem setup. 
Some existing economic MPC methods for this problem setup are briefly discussed in Section~\ref{sec:methods}. 
We demonstrate the potential difficulties in periodic problems (compared to steady-state) with a simple system in Section~\ref{sec:pitfal}.
%
%!TEX root = ./EcoPeriodic.tex
%%%%%%%%%%%%%%%%%%%%%%%%%%%%%%%%%%%%%%%%%%%%%%%%%%%%%%%%%%%%%%%%%%%%%%%%%%%%%%% 
\subsection{Problem setup}
\label{sec:setup}
We consider periodic problems with a fixed known period length $T\in\mathbb{N}$. 
For many systems (HVAC, water distribution networks) this periodicity is inherent to the problem setup (in the dynamics and/or cost function). 
In time-invariant problems (chemical reactor) this period length $T$ is a user specified decision variable that influences the possible performance improvement (compared to the steady-state operation).  
Both cases are illustrated in the numerical examples in Section~\ref{sec:num}. 

 %Dynamics
We consider nonlinear periodically time-varying discrete-time systems
\begin{align}
\label{eq:sys}
x(t+1)&=f(x(t),u(t),t)
\end{align}
 with the state $x\in\mathbb{R}^n$, control input $u\in\mathbb{R}^m$, and time step $t\in\mathbb{N}$. 
We assume that the dynamics are periodic with the (known) period length $T$, i.e.,  $f(x,u,t)=f(x,u,t+T)$. 
%Constraints
We impose point-wise in time constraints on the state and input
%\begin{align}
%\label{eq:constraint}
$(x(t),u(t))\in \mathcal{Z}(t)$, 
%\end{align}
with compact periodically time-varying sets $\mathcal{Z}(t)\subset\mathbb{R}^{n+m}$, i.e., $\mathcal{Z}(t)=\mathcal{Z}(t+T)$. 
We consider reference constraint sets $r(t)=(x_r(t),u_r(t))\in\mathcal{Z}_r(t)$, that satisfy $\mathcal{Z}_r(t)=\mathcal{Z}_r(t+T)$ and $\mathcal{Z}_r(t)\subseteq\text{int}(\mathcal{Z}(t))$ for all $t\geq 0$. 
We define the set of feasible $T$-periodic reference trajectories $r_T=(x_{r_T},u_{r_T})\in\mathbb{R}^{(n+m) T}$ as 
\begin{align*}
\mathcal{Z}_T(t)=&\{r_T=(x_{r_T},u_{r_T})|\\
& x_{r_T}(0)=f(x_{r_T}(T-1),u_{r_T}(T-1),t+T-1),\\
&x_{r_T}(k+1)=f(x_{r_T}(k),u_{r_T}(k),t+k),\quad k=0,\dots, T-2,\\
&r_T(k)\in\mathcal{Z}_r(t+k),\quad k=0,\dots, T-1\}.
\end{align*} 
We define a periodic shift matrix $\mathcal{R}_T\in\mathbb{R}^{(n+m)T\times (n+m)T}$, which satisfies $r_T\in\mathcal{Z}_T(t)~\Rightarrow~\mathcal{R}_T r_T\in\mathcal{Z}_T(t+1)$ and $\Pi_{j=0}^{T-1}\mathcal{R}_T=\mathcal{R}_T^T=I$. 
Whenever clear from the context, we denote the first element of the periodic reference $r_T$ by $(x_r,u_r)=r=r_T(0)$.

%Economic cost
The economic performance measure is given by a general (non-convex) periodically time-varying function 
\begin{align*}
\ell(x,u,t,y)=\ell(x,u,t+T,y),
\end{align*}
which can depend on external parameters $y\in\mathbb{Y}$, with $\mathbb{Y}$ compact. 
%Price 
At each time step $t$, the parameters $y(t)\in\mathbb{Y}$ are assumed to be available as an external (user defined) input. 
These parameters might incorporate online changing prices and/or general changes in the desired production/operation. 
For simplicity, we assume that $\ell$ and $f$ are continuous, which (in combination with compact constraints) implies that $\ell$ is bounded. 
\begin{remark}
\label{rk:price}
We consider the setting with a constant parameter $y(t)$ in the predictions to simplify the notation. 
However, the presented guarantees hold equally if we consider a predicted (periodic) sequence of parameters $y(\cdot|t)$. 
In case some of the constraints in $\mathcal{Z}$, $\mathcal{Z}_r$ are relaxed to soft constraints using penalty terms in the stage cost $\ell$, the external parameters $y$ can also model online unpredictably changing constraints sets.
\end{remark}

%Optimal periodic operation
An optimal $T$-periodic orbit at time $t$ is the solution to the following optimization problem
\begin{align}
\label{eq:opt_periodic}
\min_{r_T(\cdot|t)\in\mathcal{Z}_T(t)}&J_T(r_T(\cdot|t),t,y(t))
:=\sum_{j=0}^{T-1} \ell (r_T(j|t),t+j,y(t))
\end{align}
and is denoted by $\overline{r}_T(\cdot|t)$. 
If the external parameters remain constant, i.e. $y(t+1)=y(t)$, then an optimal periodic trajectory at the next time step is given by $\overline{r}_T(\cdot|t+1)=\mathcal{R}_T\overline{r}_T(\cdot|t)$, i.e., $\overline{r}_T(1|t)=\overline{r}_T(0|t+1)$. 

%Control goal
Given some initial state $x(0)$, the closed-loop average economic cost is defined as
\begin{align}
\label{eq:av_cost}
\overline{J}_{cl}(x(0))=\limsup_{K\rightarrow\infty} \dfrac{1}{K}\sum_{k=0}^{K-1} \ell(x(k),u(k),k,y(k)).
\end{align}
The control goal is to minimize this closed-loop average cost~\eqref{eq:av_cost} and achieve constraint satisfaction  $(x(t),u(t))\in\mathcal{Z}(t)$, for all $t\geq 0$.

%
%!TEX root = ./EcoPeriodic.tex
%%%%%%%%%%%%%%%%%%%%%%%%%%%%%%%%%%%%%%%%%%%%%%%%%%%%%%%%%%%%%%%%%%%%%%%%%%%%%%% 
\subsection{Existing methods}
\label{sec:methods} 
One way to approach this problem, would be to solve~\eqref{eq:opt_periodic} offline to obtain $\overline{r}_T$ and then use terminal constraints, similar to~\cite{zanon2017periodic,alessandretti2016convergence,risbeck2019unification}. 
If the parameters $y$ stay constant, this strategy ensures the following performance bound 
\begin{align}
\label{eq:performance_standard}
\overline{J}_{cl}(x(0))\leq J_T(\overline{r}_T)/T. 
\end{align}
However, if the parameters $y$ change online the performance deteriorates. 
Directly recomputing the optimal periodic orbit $r_T$ and adjusting the terminal constraints based on online changes in the parameters $y$ can cause feasibility issues. 
If there exists only a finite set of possible parameter values $y\in\mathbb{Y}$, feasible transition trajectories can be computed offline to avoid such issues, compare~\cite{angeli2016theoretical}. 

The issue of recomputing the optimal orbit $\overline{r}_T$ under changing parameters $y$ can be avoided by using economic MPC schemes without any terminal constraints~\cite{grune2013economic,muller2016economic,grune2018economic} which implicitly find the optimal mode of operation (using turnpike arguments). 
The corresponding theoretical properties may require a-priori assumptions on the optimal mode of operation  (dissipativity, turnpike, \textit{optimal}\footnote{In some cases, a multi-step MPC scheme based on the optimal period length $T$ needs to be implemented, compare~\cite{muller2016economic}.} period length $T$), which can be difficult to verify for practical systems.  
In addition, a potentially very large prediction horizon $N$ may be required to ensure these properties.  

A reliable method to deal with these issues is to simultaneously optimize an artificial periodic reference $r_T\in\mathcal{Z}_T$ in the MPC problem and use a tracking formulation with terminal constraints to stabilize this artificial reference trajectory, compare~\cite{limon2014single} and \cite{limon2008mpc,limon2016mpc,limon2018nonlinear,JK_periodic_automatica}.  
This direct stabilization, however, does not take into account the potential for economic performance improvement and may hence result in severe suboptimality, compare~\cite{rawlings2008unreachable} and~\cite[Sec.~3.4]{faulwasser2018economic}.  

Thus, to ensure theoretical properties (recursive feasibility, performance bounds) and allow for online performance improvement, we will use an artificial periodic reference trajectory $r_T\in\mathcal{Z}_T$ and combine it with a purely economic formulation (without using any tracking costs), as an extension to~\cite{muller2013economic,muller2014performance,fagiano2013generalized}. 
In the numerical example in Section~\ref{sec:CSTR}, we demonstrate the advantages of such a formulation compared to some of the existing methods. 

%
%!TEX root = ./EcoPeriodic.tex
%%%%%%%%%%%%%%%%%%%%%%%%%%%%%%%%%%%%%%%%%%%%%%%%%%%%%%%%%%%%%%%%%%%%%%%%%%%%%%% 
\subsection{Pitfalls - Generalized periodic constraints}
\label{sec:pitfal}
In the following, we show that a direct (naive) extension of existing generalized terminal setpoint constraints in~\cite{muller2013economic,muller2014performance,fagiano2013generalized} to periodic reference trajectories $r_T$, does not necessarily imply the desirable economic performance guarantees and thus requires further modifications (which will be introduced in Section~\ref{sec:scheme}).

Consider a scalar time-invariant system with $f(x,u)=u$, $x\in\{0,1,2\}$ which is depicted as a graph in Figure~\ref{fig:graph} with some arbitrary small positive constant $\epsilon$, similar to~\cite[Example~4]{muller2016economic}. 
The optimal periodic orbit is $x_{\overline{r}_T}=(1,2)$  with cost $J_T(\overline{r}_T)=\epsilon$ and $T=2$. 
\begin{figure}[hbtp]
\begin{center}
\begin{tikzpicture}[->,>=stealth',shorten >=1pt,auto,node distance=2.3cm,
  thick,main node/.style={circle,draw,fill=blue!15,font=\sffamily\large}]
  \node[main node] (0) {$0$};
  \node[main node] (1) [right of=0] {$1$};
  \node[main node] (2) [right of=1] {$2$};
  \path[every node/.style={font=\sffamily\small}]
    (0) edge [loop left] node[left] {$u=0 \atop \ell(x,u)=1$} (0)
    (0)  edge [bend right] node[above] {${u=1 \atop \ell(x,u)=0}$} (1)
    (1)  edge [bend right] node[below] {${u=2 \atop \ell(x,u)=1+\epsilon}$} (2)
    (2)  edge [bend right] node[above] {${u=1 \atop \ell(x,u)=-1}$} (1);
\end{tikzpicture}
\end{center}
\caption{Academic counter example - Illustration of feasible transitions. } 
\label{fig:graph}
\end{figure}
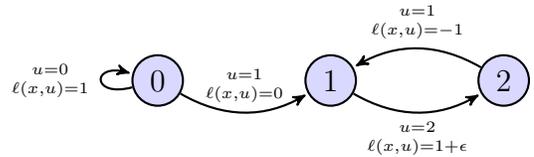

The following economic MPC scheme with an artificial periodic trajectory $r_T$ can be viewed as a generalization of the steady-state methods in~\cite{muller2013economic,muller2014performance,fagiano2013generalized}:
\begin{subequations}
\label{eq:limon_eco_fail}
\begin{align}
&\min_{u(\cdot|t),r_T(\cdot|t)}\sum_{k=0}^{N-1}\ell(x(k|t),u(k|t))+ J_{T}(r_T(\cdot|t))\\
\text{s.t. }&x(k+1|t)=f(x(k|t),u(k|t)),\quad x(0|t)=x(t),\\
&(x(k|t),u(k|t))\in\mathcal{Z},\quad k=0,\dots,N-1,\\  
&r_T(\cdot|t)\in\mathcal{Z}_T,\quad x(N|t)=x_{r_T}(0|t).  
\end{align}
\end{subequations}
This optimization problem computes an open-loop trajectory $x^*(\cdot|t)$ starting at $x(t)$ that ends on some periodic trajectory $r_T\in\mathcal{Z}_T$. 
In closed-loop operation, the optimization problem~\eqref{eq:limon_eco_fail} would be solved in each time step $t$ and the first part of the optimized input trajectory applied to the system, i.e., $u(t)=u^*(0|t)$, $x(t+1)=x^*(1|t)$. 
Although the steady-state schemes~\cite{muller2013economic,muller2014performance,fagiano2013generalized} often have additional modifications (terminal cost, self-tuning weights, additional constraints on $r_T$), the problem we discuss in the following remains the same. 

Consider the initial condition $x(0)=0$ and a prediction horizon of $N=2$. 
The artificial reference is the optimal periodic orbit $r_T(\cdot|t)\in\{1,2\}$. 
The only feasible trajectories that satisfy $x(N|t)=x(2|t)\in\{1,2\}$ are $u(\cdot|t)=(0,1)$ and $u(\cdot|t)=(1,2)$, and the corresponding open-loop cost is $1+0$ and $0+1+\epsilon$, respectively. 
Thus, the optimal solution to~\eqref{eq:limon_eco_fail} satisfies $x^*(1|t)=0=x(t)$. 
Correspondingly, the closed-loop system based on~\eqref{eq:limon_eco_fail} stays at $x(t)=0$ and encounters the economic cost $\ell(x(t),u(t))=1$, $\forall t\geq 0$ and does not achieve the same performance as the artificial periodic reference $r_T$. 
This issue can persist, even if we choose an arbitrarily large (even) prediction horizon $N$. 
In particular, with~\eqref{eq:limon_eco_fail}, we can only ensure 
\begin{align*}
\overline{J}_{cl}(x(0))\leq \max_{k}\ell(r_T(k))=1+\epsilon. 
\end{align*}
This is in contrast to existing results for the steady-state case ($T=1$)~\cite{muller2013economic,muller2014performance,fagiano2013generalized}, which can ensure the superior bound~\eqref{eq:performance_standard}. 
The same problem appears in economic MPC schemes without terminal constraints, compare~\cite[Examples 4 and 18]{muller2016economic}. 
One way to alleviate this problem is to apply the first $T$ components of the open-loop input sequence $u^*(\cdot|t)$ (multi-step MPC)~\cite{muller2016economic,wabersich2018economic}, which transforms the problem to a higher dimensional steady-state problem ($T$-step system~\cite{muller2015role,muller2016economic,wabersich2018economic}). 
Since we wish to consider problems with possibly large period lengths $T$, this solution seems, however, inadequate. 
If we would use an economic MPC scheme based on periodicity constraints~\cite{houska2017cost,wang2018economic}, the closed-loop system would also stay at $x(t)=0$ for all $ t\geq 0$, since there exists only one feasible periodic orbit starting at $x(0)=0$. 
The theoretical results in~\cite{houska2017cost} do not apply, since the one-step controllability condition~\cite[Lemma~4]{houska2017cost} is not satisfied.

To summarize, as also discussed in \cite{risbeck2019unification}, the existing approaches with online optimized periodic reference trajectory $r_T$ \textit{do not come with any closed-loop performance guarantees similar to~\eqref{eq:performance_standard}}.

%!TEX root = ./EcoPeriodic.tex
%%%%%%%%%%%%%%%%%%%%%%%%%%%%%%%%%%%%%%%%%%%%%%%%%%%%%%%%%%%%%%%%%%%%%%%%%%%%%%%
\section{Proposed periodic economic MPC framework}
\label{sec:main}
This section contains the main result of the paper. 
The proposed scheme is detailed in Section~\ref{sec:scheme}. 
Performance guarantees relative to the limiting artificial reference trajectory are derived in Section~\ref{sec:theory_1}. 
In Section~\ref{sec:theory_2}, improved a priori performance bounds are derived based on the self-tuning weight. 
The theoretical properties are summarized in Theorem~\ref{thm:main}. 
%!TEX root = ./EcoPeriodic.tex
%%%%%%%%%%%%%%%%%%%%%%%%%%%%%%%%%%%%%%%%%%%%%%%%%%%%%%%%%%%%%%%%%%%%%%%%%%%%%%%
\subsection{Proposed scheme}
\label{sec:scheme} 
In the following, we detail the proposed scheme and discuss the relation to other existing methods. 
The main idea is to directly minimize the predicted economic stage cost $\ell$ with some terminal cost $V_f$ and terminal set $\mathcal{X}_f$ around the artificial reference trajectory $r_T$, and use an updating scheme to ensure that the optimized artificial reference trajectory $r_T$ converges to the best possible periodic orbit $\overline{r}_T$. 
\subsubsection*{Optimization problem}
The corresponding optimization problem at each time step $t$ is given by 
\begin{subequations}
\label{eq:MPC}
\begin{align}
\label{eq:MPC_pred_eco_cost}
&\min_{u(\cdot|t),r_T(\cdot|t)}\sum_{k=0}^{N-1}\ell(x(k|t),u(k|t),t+k,y(t))\\
\label{eq:MPC_term_cost}
&+V_{f}(x(N|t),r_T(\cdot|t),t+N,y(t))\\
\label{eq:MPC_selftune_cost}
&+\beta(t)\cdot J_{T}(r_T(\cdot|t),t+N,y(t))\\
\label{eq:MPC_dyncon}
\text{s.t. }&x(k+1|t)=f(x(k|t),u(k|t),t+k),~ x(0|t)=x(t),\\
\label{eq:MPC_cons}
&(x(k|t),u(k|t))\in\mathcal{Z}(t+k),\quad k=0,\dots,N-1,\\ 
\label{eq:MPC_term}
& x(N|t)\in\mathcal{X}_f(r_T(\cdot|t),t+N),\\
\label{eq:MPC_periodic}
&r_T(\cdot|t)\in\mathcal{Z}_T(t+N),\\
\label{eq:kappa_con_1}
&\Delta \kappa(t)=\sum_{j=0}^{T-1}\left[ \ell(r_T(j|t),t+N+j,y(t))-\kappa_j(t)\right],\\
\label{eq:kappa_con_2}
&\ell(r_T(j|t),t+N+j,y(t))\leq \kappa_j(t)-c_{\kappa}\Delta \kappa(t) ,\\
& j=0,\dots,T-1, \nonumber
\end{align}
\end{subequations}
with some positive constant $c_{\kappa}$. 
The solution to~\eqref{eq:MPC} are optimal state and input trajectories $x^*(\cdot|t)$, $u^*(\cdot|t)$ and an artificial periodic reference trajectory $r_T^*(\cdot|t)$. 
The input trajectory minimizes the predicted economic cost~\eqref{eq:MPC_pred_eco_cost} with a terminal cost~\eqref{eq:MPC_term_cost}, to be specified later (Ass.~\ref{ass:term_simple}).  
The economic cost of the artificial periodic reference trajectory $r_T$ is weighted with a self-tuning (time-varying) weight $\beta(t)$~\eqref{eq:MPC_selftune_cost}, similar to~\cite{muller2013economic,muller2014performance}. 
The resulting state and input trajectory satisfy the dynamics~\eqref{eq:MPC_dyncon} and the posed state and input constraints~\eqref{eq:MPC_cons}. 
In addition, the terminal state of the predicted state sequence is in a terminal set (see Ass.~\ref{ass:term_simple} below) around the artificial reference trajectory~\eqref{eq:MPC_term}. 
The artificial reference is a feasible periodic orbit~\eqref{eq:MPC_periodic}. 
Conditions~\eqref{eq:kappa_con_1}--\eqref{eq:kappa_con_2} pose additional constraints on the improvement of the economic cost of the artificial reference $r_T$ compared to $\kappa_j$, similar to~\cite{muller2013economic,muller2014performance,fagiano2013generalized}. 
In particular, if $\Delta \kappa$ is negative (the cost $J_T$ of the reference improves), then $\ell(r_T(j|t),t+N+j,y(t))$ can be larger than $\kappa_j(t)$. 
Hence, the constraint~\eqref{eq:kappa_con_2} is less restrictive than $\ell(r_T(j|t),t+N+j,y(t))\leq \kappa_j(t)$. 
The memory states $\kappa_j$ in combination with the self-tuning weight $\beta$ and the constant $c_{\kappa}$ (Ass.~\ref{ass:cont_orbit}) are crucial to establish the desired performance guarantees and are discussed in more detail in the following theoretical analysis. 
For notational simplicity, we define $r_T(T|t):=r_T(0|t)$.  
\subsubsection*{Closed-loop operation}
The closed-loop system with~\eqref{eq:MPC} is given by
\begin{subequations}
\label{eq:close}
\begin{align}
\label{eq:close_1}
x(t+1)=&f(x(t),u^*(0|t),t),\\
\label{eq:close_2}
\beta(t+1)=&\mathcal{B}(\beta(\cdot),\kappa(\cdot),x(\cdot)),\\
\label{eq:close_3}
\kappa_j(t+1)=& \ell(r^*_T(j+1|t),t+N+1+j,y(t+1)),\\
& j=0,\dots,T-1.\nonumber
\end{align}
\end{subequations}
At each time $t$, we measure a state $x(t)$ and an external parameter $y(t)$. 
As in a standard MPC framework, we apply the first part of the optimized input sequence~\eqref{eq:close_1}.  
The cost of the last optimal artificial reference trajectory $r_T^*$ is saved in the memory states $\kappa_j$~\eqref{eq:close_3}.
The tuning weight $\beta(t)$ can be determined by some general (causal) update rule $\mathcal{B}$~\cite{muller2013economic,muller2014performance}, or simply chosen by a user as a time-varying or constant signal (c.f.~\cite[Update rule~1]{muller2013economic} and Sec.~\ref{sec:const_beta}).
The constant $c_\kappa>0$  and the terminal ingredients $V_f,\mathcal{X}_f$ need to be designed offline, which will be detailed later. 
Hence, compared to a standard MPC closed loop, we have $T$ additional scalar memory states $\kappa_j$~\eqref{eq:close_3} and one scalar tuning-variable $\beta$~\eqref{eq:close_2}.
\subsubsection*{Initialization}
We assume that the initial state $x(0)$ is such that there exists a feasible trajectory to some feasible periodic orbit $r_T$. 
The memory states $\kappa_j(0)$ can be initialized arbitrarily large, such that the constraint~\eqref{eq:kappa_con_2} is inactive at $t=0$. 
Correspondingly, the optimization problem~\eqref{eq:MPC} is feasible at $t=0$. 
The tuning variable $\beta$ can be initialized with any positive scalar, most naturally $\beta(0)=1$.

\subsubsection*{Existing schemes = special cases}
In the following, we discuss in detail how various existing methods for economic MPC are contained in this formulation as special cases. 

%Müller 2013
The proposed formulation can best be viewed as an extension to~\cite{muller2013economic,muller2014performance,fagiano2013generalized}, which considers an artificial reference setpoint ($T=1$). 
In particular, if we assume a time-invariant problem setup and choose $T=1$, we get the optimization problem and closed-loop operation in~\cite{muller2013economic,muller2014performance,fagiano2013generalized}. 
For $c_{\kappa}\geq 0$, the constraints~\eqref{eq:kappa_con_1}--\eqref{eq:kappa_con_2} are equivalent to $\ell(r_T(t))\leq\kappa(t)=\ell(r_T^*(t-1))$ which is used in~\cite{muller2013economic,muller2014performance,fagiano2013generalized} to ensure that the cost of the artificial reference $r$ is improving. 
%more details
Although one can directly see that~\cite{muller2013economic,muller2014performance,fagiano2013generalized} is a special case of the posed formulation, it is not obvious from the onset that the extension of~\cite{muller2013economic,muller2014performance,fagiano2013generalized} to periodic problems should be given by the optimization problem~\eqref{eq:MPC}. 
%option 1
A more intuitive extension might be to use the constraint $J_T(r_T(\cdot|t))\leq \kappa(t)=J_T(r_T^*(\cdot|t-1))$ (as an alternative to~\eqref{eq:kappa_con_1}--\eqref{eq:kappa_con_2}).  
The possibly suboptimal performance of such an approach has, however, been illustrated in Section~\ref{sec:pitfal}.  
In Section~\ref{sec:term_cost_mod} we show that we can guarantee the same propertiesIn Section~\ref{sec:term_cost_mod} we show that we can guarantee the same properties with this more intuitive extension, if we instead suitably reformulate the cost function. 
%option 2
Another possible formulation for periodic orbits would be the constraint $\ell(r_T(j|t))\leq \kappa_j(t)$ without the additional term $c_{\kappa}$ in~\eqref{eq:kappa_con_2}. 
This modification is sufficient to avoid the pitfall in Section~\ref{sec:pitfal}, if the artificial reference is initialized as an optimal periodic orbit $\overline{r}_T$. 
 However, this more restrictive constraint can potentially prevent the artificial reference trajectory $r_T$ to converge to the optimal periodic orbit $\overline{r}_T$, compare also Ass.~\ref{ass:cont_orbit} and the numerical example in Section~\ref{sec:num}.  

%Periodicity constraint
If we consider a prediction horizon of $N=0$ and a terminal equality constraint $\mathcal{X}_f(r_T,t)=\{x_{r}\}$, then the proposed formula yields a modified version of the MPC scheme using periodicity constraints~\cite{houska2017cost,wang2018economic}.  
The only difference would be the additional performance constraints on the periodic orbit~\eqref{eq:kappa_con_1}--\eqref{eq:kappa_con_2}, which may not be necessary in many cases, compare Section~\ref{sec:term_cost_mod}. 
Crucially, if we choose a suitable terminal cost with a non-empty terminal set (Ass.~\ref{ass:term}), then we can establish closed-loop performance guarantees (Thm.~\ref{thm:main}), which are in general not valid for MPC schemes using periodicity constraints~\cite{houska2017cost,wang2018economic}. 
In particular, the terminal ingredients (Ass.~\ref{ass:term}) relax the one-step controllability condition~\cite[Lemma~4]{houska2017cost} to an incremental stabilizability condition (Ass.~\ref{ass:term}). 
In Lemma~\ref{lemma:change_r_TEC}, we discuss how to retain these performance guarantees with a terminal equality constraint.

%tracking
The tracking MPC scheme~\cite{limon2014single} can be viewed as a modified version, which uses a tracking stage cost $\ell$ in the optimization problem~\eqref{eq:MPC} and hence does not require the additional tuning variable~$\beta$ or memory state $\kappa_j$.

%standard
The standard economic MPC formulations for periodic orbits~\cite{zanon2017periodic,alessandretti2016convergence} and steady-states~\cite{angeli2012average,amrit2011economic} are contained as a special case, if we fix the artificial reference trajectory $r_T=\overline{r}_T$.

%!TEX root = ./EcoPeriodic.tex
%%%%%%%%%%%%%%%%%%%%%%%%%%%%%%%%%%%%%%%%%%%%%%%%%%%%%%%%%%%%%%%%%%%%%%%%%%%%%%% 
\subsection{Relative performance guarantees}
\label{sec:theory_1}
In Proposition~\ref{prop:feas}, we show that the proposed formulation is recursively feasible. 
For constant parameters $y$, Proposition~\ref{prop:av_perf} shows that the average closed-loop performance is no worse than the performance of the limiting artificial references.

\subsubsection*{Terminal ingredients}
The following assumption captures the (standard) sufficient conditions for the terminal ingredients. 
\begin{assumption}
\label{ass:term_simple}
There exists a terminal set $\mathcal{X}_f(r_T,t)$, a (bounded) terminal cost $V_f(x,r_T,t,y)$ and terminal controller $k_f(x,r_T,t)$, such that at any time $t\in\mathbb{N}$, for any parameters $y\in\mathbb{Y}$, reference $r_T\in\mathcal{Z}_T(t)$ and any $x\in\mathcal{X}_f(r_T,t)$, the following conditions hold
\begin{subequations}
\label{eq:term_simple}
\begin{align}
\label{eq:term_PI}
x^+\in&\mathcal{X}_f(r_T^+,t+1),\\
\label{eq:term_con}
(x,u)\in&\mathcal{Z}(t),\\
\label{eq:term_dec}
&V_f(x^+,r_T^+,t+1,y)-V_f(x,r_T,t,y)\\
\leq &-\ell(x,u,t,y)+\ell(r_T(0),t,y),\nonumber
\end{align}
\end{subequations}
with $x^+=f(x,u,t)$, $r_T^+=\mathcal{R}_Tr_T\in\mathcal{Z}_T(t+1)$, $u=k_f(x,r_T,t)$.  
\end{assumption}
This assumption can always be satisfied by using a terminal equality constraint $\mathcal{X}_f(r_T,t)=\{x_r\}$, $k_f=u_r$, with $(x_r,u_r)=r_T(0)$.
However, for the improved performance guarantees discussed in Section~\ref{sec:theory_2} we will require stronger conditions for the terminal ingredients (Ass.~\ref{ass:term}, Sec.~\ref{sec:term}), compare~\cite{muller2014performance}.

\subsubsection*{Recursive feasibility}
The following proposition shows that feasibility of the proposed scheme is independent of the exogenous parameters~$y$.  
\begin{proposition}
\label{prop:feas}
Let Assumption~\ref{ass:term_simple} hold and assume that~\eqref{eq:MPC} is feasible at $t=0$. 
Then the optimization problem~\eqref{eq:MPC} is recursively feasible for the resulting closed-loop system~\eqref{eq:close}. 
\end{proposition}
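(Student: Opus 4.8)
The plan is to establish recursive feasibility by the standard MPC argument: given a feasible solution at time $t$, construct an explicit candidate solution at time $t+1$ using the tail of the optimal trajectory appended with the terminal controller, and a shifted artificial reference. First I would define the candidate input sequence at $t+1$ as $u(k|t+1) = u^*(k+1|t)$ for $k=0,\dots,N-2$ and $u(N-1|t+1) = k_f(x^*(N|t), r_T^*(\cdot|t), t+N)$, with the associated state trajectory obtained by forward simulation; since $x(t+1) = f(x(t),u^*(0|t),t) = x^*(1|t)$, the candidate trajectory coincides with the shifted optimal one for the first $N-1$ steps, so the constraints~\eqref{eq:MPC_cons} hold there, and~\eqref{eq:term_con} in Assumption~\ref{ass:term_simple} gives constraint satisfaction for the last step. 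For the artificial reference I would take the shifted orbit $r_T(\cdot|t+1) = \mathcal{R}_T r_T^*(\cdot|t)$, which lies in $\mathcal{Z}_T(t+N+1)$ by the defining property of $\mathcal{R}_T$, so~\eqref{eq:MPC_periodic} holds. The terminal constraint~\eqref{eq:MPC_term} at $t+1$ requires $x(N|t+1) \in \mathcal{X}_f(r_T(\cdot|t+1), t+N+1)$; here $x(N|t+1) = f(x^*(N|t), k_f(\cdot), t+N)$, and this is exactly what~\eqref{eq:term_PI} guarantees, since $x^*(N|t)\in\mathcal{X}_f(r_T^*(\cdot|t),t+N)$ by feasibility at $t$.

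The one genuinely new ingredient, compared to the classical argument, is feasibility of the memory-state constraints~\eqref{eq:kappa_con_1}--\eqref{eq:kappa_con_2}. The key observation is the update law~\eqref{eq:close_3}: we set $\kappa_j(t+1) = \ell(r_T^*(j+1|t), t+N+1+j, y(t+1))$, i.e. the memory states at $t+1$ are precisely the stage costs of the \emph{shifted} previous optimal reference. Since the candidate reference at $t+1$ is exactly that shifted orbit, $r_T(j|t+1) = r_T^*(j+1|t)$, we get $\ell(r_T(j|t+1), t+N+1+j, y(t+1)) = \kappa_j(t+1)$ for each $j$, so $\Delta\kappa(t+1) = \sum_{j=0}^{T-1}[\ell(r_T(j|t+1),\cdot) - \kappa_j(t+1)] = 0$. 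Then~\eqref{eq:kappa_con_2} at $t+1$ reads $\ell(r_T(j|t+1),\cdot) \le \kappa_j(t+1) - c_\kappa \cdot 0 = \kappa_j(t+1)$, which holds with equality. Hence the candidate reference satisfies~\eqref{eq:kappa_con_1}--\eqref{eq:kappa_con_2}, and feasibility at $t+1$ follows. Recursive feasibility then follows by induction from the assumed feasibility at $t=0$.

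I expect the main subtlety — rather than a true obstacle — to be bookkeeping the time-index shifts consistently: the terminal time in~\eqref{eq:MPC} is $t+N$, so at $t+1$ the reference must be feasible in $\mathcal{Z}_T(t+N+1)$ and the stage costs in the $\kappa$-constraints are evaluated at $t+N+1+j$; one must check that the shift matrix $\mathcal{R}_T$ and the $T$-periodicity of $f$, $\mathcal{Z}_r$, and $\ell$ line these up correctly, in particular that $\mathcal{R}_T r_T^*(\cdot|t) \in \mathcal{Z}_T(t+N+1)$ whenever $r_T^*(\cdot|t)\in\mathcal{Z}_T(t+N)$, which is the stated defining property of $\mathcal{R}_T$ combined with periodicity. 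A secondary point worth a sentence is that the candidate trajectory's cost need not be optimal — feasibility only requires a valid candidate, which we have exhibited — so no cost comparison is needed here; the performance bounds are deferred to Proposition~\ref{prop:av_perf}. Everything else is a direct transcription of the classical quasi-infinite-horizon feasibility proof, using Assumption~\ref{ass:term_simple} in place of the usual terminal-set invariance and constraint-admissibility conditions.
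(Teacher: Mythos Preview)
Your proposal is correct and follows essentially the same approach as the paper: the shifted reference $r_T(\cdot|t+1)=\mathcal{R}_T r_T^*(\cdot|t)$ together with the tail-plus-terminal-controller candidate input, and the key observation that the update law~\eqref{eq:close_3} renders $\Delta\kappa(t+1)=0$ so that~\eqref{eq:kappa_con_2} holds with equality. Your write-up is in fact more detailed than the paper's own proof, which dispatches the argument in a few lines by citing standard recursive-feasibility results and only explicitly noting the $\Delta\kappa(t+1)=0$ point.
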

\begin{proof}
This result is a straightforward extension of existing results for MPC with artificial reference trajectories~\cite{limon2008mpc,limon2016mpc,limon2018nonlinear,JK_periodic_automatica,muller2013economic,muller2014performance,fagiano2013generalized,ferramosca2014economic,houska2017cost,wang2018economic,limon2014single}. 
Given the feasible reference $r_T^*(\cdot|t)\in\mathcal{Z}_T(t+N)$ at time $t$, the shifted reference $r_T(\cdot|t+1)=\mathcal{R}_Tr_T^*(\cdot|t)$ satisfies~\eqref{eq:MPC_periodic}. 
This reference satisfies the constraints~\eqref{eq:kappa_con_2} with equality, since $\Delta \kappa(t+1)=0$. 
A corresponding candidate input sequence is given by
\begin{align*}
u(k|t+1)=\begin{cases}
u^*(k+1|t)&k\leq N-2\\
k_f(x^*(N|t),r_T^*(\cdot|t),t+N)&k=N-1
\end{cases}. 
\end{align*}
The resulting state and input sequences satisfy the constraints~\eqref{eq:MPC_cons} and the terminal constraint~\eqref{eq:MPC_term} due to Ass.~\ref{ass:term_simple}.  
\end{proof}

\subsubsection*{Self-tuning weight}
Define the change in the weight $\beta$ as $\gamma(t)=\beta(t+1)-\beta(t)$. 
The following assumption characterizes some of the properties the update scheme $\mathcal{B}$~\eqref{eq:close_2} should have, such that performance guarantees hold despite online changing values of $\beta$, compare~\cite{muller2013economic} for a more nuanced discussion and an alternative condition on $\mathcal{B}$ resulting in slightly weaker performance guarantees. 
\begin{assumption}
\label{ass:B1}\cite[Ass.~1]{muller2013economic} 
The sequence $\beta(\cdot)$ satisfies  $\limsup_{t\rightarrow \infty}\gamma(t)\leq 0$ and $\gamma(t)\leq c_{\gamma}$, $\beta(t)\geq 0$ for all $t\geq 0$ with a constant $c_{\gamma}$.
\end{assumption} 
Define the cost of the artificial reference as  
\begin{align}
\label{eq:kappa_def}
\kappa(t)=\sum_{j=0}^{T-1}\kappa_j(t)=J_T(r_T^*(\cdot|t-1),t+N-1,y(t)).
\end{align}
Suppose that the parameters $y(t)$ remain constant, 
then the conditions~\eqref{eq:kappa_con_2},\eqref{eq:close_3} with $c_{\kappa}\geq 0$ ensure that $\Delta \kappa(t)=\kappa(t+1)-\kappa(t)\leq 0$ and thus $\kappa$ is non-increasing. 
Boundedness of $\ell$ implies boundedness of $\kappa(t)$. 
Thus $\kappa(t)$ converges to some limit $\kappa_{\infty}$.

\subsubsection*{Average performance}
The following proposition establishes that the closed-loop performance is no worse than $\kappa_{\infty}$ (the performance of the limiting artificial trajectories $r_T$), as an extension to~\cite[Thm.~1]{muller2013economic}. 
\begin{proposition}
\label{prop:av_perf}
Let Assumptions~\ref{ass:term_simple} and \ref{ass:B1} hold and assume that ~\eqref{eq:MPC} is feasible at $t=0$ and  $y(t)$ is constant, 
then the resulting closed-loop system~\eqref{eq:close} satisfies the following performance bound
\begin{align}
\label{eq:performance_bound1}
\limsup_{K\rightarrow\infty}\dfrac{\sum_{t=0}^{TK-1}\ell(x(t),u(t),t,y(t))}{TK}\leq \kappa_{\infty}/T.
\end{align}
\end{proposition}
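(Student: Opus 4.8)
The plan is to establish the standard telescoping/dissipativity-type argument for economic MPC with artificial references, adapted to the $T$-periodic setting. Let $V(t)$ denote the optimal value of~\eqref{eq:MPC} at time $t$, with optimizers $x^*(\cdot|t)$, $u^*(\cdot|t)$, $r_T^*(\cdot|t)$. First I would use the feasible candidate solution from the proof of Proposition~\ref{prop:feas} at time $t+1$ (shifted input appended with the terminal controller $k_f$, shifted reference $\mathcal{R}_T r_T^*(\cdot|t)$) to get an upper bound on $V(t+1)$ in terms of the stage cost $\ell(x(t),u^*(0|t),t,y)$, the terminal decrease inequality~\eqref{eq:term_dec}, and the term $\beta(t+1)J_T$ of the shifted reference. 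Since $y$ is constant, $J_T(\mathcal{R}_T r_T^*(\cdot|t),\cdot,y) = J_T(r_T^*(\cdot|t),\cdot,y) =: J_T^*(t)$, so I get the key descent inequality
\begin{align*}
V(t+1) \leq V(t) - \ell(x(t),u^*(0|t),t,y) + \ell(r_T^*(0|t),t+N,y) + \gamma(t)J_T^*(t),
\end{align*}
where the term $\ell(r_T^*(0|t),t+N,y)$ comes from the telescoped right-hand side of~\eqref{eq:term_dec} along the candidate and $\gamma(t)=\beta(t+1)-\beta(t)$ accounts for the changed weight.

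Next I would telescope this inequality over one full period at a time, or directly from $t=0$ to $t=TK-1$, and divide by $TK$. The left-over boundary term $(V(TK)-V(0))/(TK)$ vanishes as $K\to\infty$ because $V$ is bounded: $V(t)$ is lower-bounded since $\ell$ is bounded (compactness plus continuity) and $\beta,J_T\geq$ bounded, and $V(t)$ is upper-bounded via the same candidate construction starting from feasibility at $t=0$ together with monotonicity-type arguments; alternatively one shows $\liminf V(TK)/(TK)\geq $ a finite constant, which suffices. The term $\frac{1}{TK}\sum_{t=0}^{TK-1}\gamma(t)J_T^*(t)$ is handled using Assumption~\ref{ass:B1}: $\gamma(t)\leq c_\gamma$, $\limsup_t\gamma(t)\leq 0$, and $J_T^*(t)$ is bounded, so a Cesàro-type argument (as in~\cite[Thm.~1]{muller2013economic}) gives $\limsup_{K\to\infty}\frac{1}{TK}\sum_{t=0}^{TK-1}\gamma(t)J_T^*(t)\leq 0$.

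The remaining term on the right is $\frac{1}{TK}\sum_{t=0}^{TK-1}\ell(r_T^*(0|t),t+N,y)$, and the crux is to show its $\limsup$ is no larger than $\kappa_\infty/T$. Here I would exploit the memory-state update~\eqref{eq:close_3}, which sets $\kappa_j(t+1)=\ell(r_T^*(j+1|t),t+N+1+j,y)$, so that over a block of $T$ consecutive time steps the quantities $\ell(r_T^*(0|t),t+N,y)$ are (up to the periodic shift bookkeeping) exactly the entries $\kappa_j$, and summing over a period gives $\sum_{j=0}^{T-1}\ell(r_T^*(0|t+j),\cdot,y)$ comparable to $\kappa(t+T)$ or $\kappa(t)$. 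Using that $\kappa(t)=\sum_j\kappa_j(t)=J_T(r_T^*(\cdot|t-1),\cdot,y)$ is non-increasing and converges to $\kappa_\infty$ (as already established in the text via~\eqref{eq:kappa_con_2}, $\Delta\kappa(t)\leq 0$), the averaged sum of these stage costs over $[0,TK)$ is $\frac{1}{TK}\sum$ of period-blocks each of which is at most (roughly) $\kappa(t)\to\kappa_\infty$, yielding the bound $\kappa_\infty/T$. Combining the three pieces gives~\eqref{eq:performance_bound1}.

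I expect the main obstacle to be the bookkeeping in the last step: carefully matching $\ell(r_T^*(0|t),t+N,y)$ across the period to the memory states $\kappa_j(t)$ under the periodic shift, and showing that the constraint~\eqref{eq:kappa_con_2} (with the $c_\kappa\Delta\kappa$ slack) still forces the per-period sum of these terms to be dominated by a converging sequence — the $c_\kappa\Delta\kappa(t)$ correction must be shown to be summable-in-average to zero, which follows from $\Delta\kappa(t)\leq 0$, $\sum\Delta\kappa(t)=\kappa_\infty-\kappa(0)$ finite, hence its Cesàro average vanishes. A secondary technical point is rigorously justifying the lower bound on $V(t)$ needed to kill the boundary term, but this is routine given boundedness of $\ell$ and $J_T$ on the compact constraint sets.
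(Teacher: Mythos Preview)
Your proposal is correct and follows essentially the same route as the paper: per-step descent via the shifted candidate and the terminal inequality, splitting the right-hand side into $\ell(r_T^*(0|t),t+N,y)+\gamma(t)\kappa(t+1)$, then a Ces\`aro argument for the $\gamma$-term and a telescoping argument for the $c_\kappa\Delta\kappa$ correction. One point to sharpen: the quantities $\ell(r_T^*(0|t),t+N,y)$ are not ``the entries $\kappa_j$'' at a fixed time---each one equals the \emph{same} component $\kappa_{T-1}(t+1)$ (via~\eqref{eq:close_3} with $j=T-1$ and $r_T^*(T|t)=r_T^*(0|t)$), and to bound $\sum_{k=0}^{T-1}\kappa_{T-1}(t+1+k)$ by $\kappa(t)$ you must \emph{iterate}~\eqref{eq:kappa_con_2} to shift the index down, i.e.\ $\kappa_{T-1}(t+k+1)\leq \kappa_k(t)-c_\kappa\sum_{j=0}^k\Delta\kappa(t+j)$, and only then sum over $k$; this is exactly the ``bookkeeping'' you anticipate, and with it your telescoping of $\Delta\kappa$ goes through as stated.
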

\begin{proof}
Define the value function
\begin{align}
\label{eq:W}
W(t)=&W(x(t),y(t),\beta(t),\kappa_j(t))\\
=&\sum_{k=0}^{N-1} \ell(x^*(k|t),u^*(k|t),t+k,y(t))\nonumber\\
&+V_f(x^*(N|t),r_T^*(\cdot|t),t+N,y(t))\nonumber\\
&+\beta(t)J_T(r_T^*(\cdot|t),t+N,y(t)).\nonumber
\end{align}	
Proposition~\ref{prop:feas} provides a feasible candidate solution $u(\cdot|t+1)$, $r_T(\cdot|t+1)$ to the optimization problem~\eqref{eq:MPC} at time $t+1$. 
Hence, we can use the cost of the candidate solution to upper bound the value function $W(t+1)$, which in combination with $y(t)$ constant and the terminal cost (Ass.~\ref{ass:term_simple}) yields
\begin{align}
\label{eq:W_1}
&W(t+1)-W(t)+\ell(x(t),u(t),t,y(t))\\
\leq& \ell(r_T^*(0|t),t+N,y(t))+\gamma(t)\kappa(t+1)\nonumber\\
\stackrel{\eqref{eq:close_3}}{=} &\kappa_{T-1}(t+1)+\gamma(t)\kappa(t+1),\nonumber
\end{align}
compare~\cite[Thm.~1]{muller2013economic}\cite[Thm.~1]{muller2014performance} for details. 
The definition of $\kappa_j$ in~\eqref{eq:close_3}, constant parameters $y(t)$ and the constraints~\eqref{eq:kappa_con_1},\eqref{eq:kappa_con_2} ensure 
\begin{align}
\label{eq:kappa_bound_0}
\kappa_j(t+1)\stackrel{\eqref{eq:close_3}}=&\ell(r_T^*(j+1|t),t+N+j+1,y(t))\nonumber\\
\stackrel{\eqref{eq:kappa_con_2}}{\leq} &\kappa_{j+1}(t)-c_{\kappa}\Delta \kappa(t),\quad j=0,\dots,T-1,
\end{align}
 with $\kappa_T:=\kappa_0$. Using~\eqref{eq:kappa_bound_0} recursively implies
\begin{align}
\label{eq:kappa_bound_1}
\kappa_{T-1}(t+k+1)\leq& \kappa_{k}(t)-c_{\kappa}\sum_{j=0}^{k}\Delta \kappa(t+j)
\end{align}
for $k=0,\dots,T-1$. 
Using the definition of $\kappa$ in~\eqref{eq:kappa_def}, we can bound the $T$-step sum as 
\begin{align*}
&\sum_{k=0}^{T-1}\kappa_{T-1}(t+1+k)\stackrel{\eqref{eq:kappa_bound_1}}{\leq} \sum_{k=0}^{T-1}\left(\kappa_{k}(t)-c_{\kappa}\sum_{j=0}^k\Delta \kappa(t+j)\right)\\
\leq& \sum_{k=0}^{T-1}\kappa_k(t)-c_{\kappa}T\sum_{k=0}^{T-1}\Delta \kappa(t+k)\\
\stackrel{\eqref{eq:kappa_def}}{=}&\kappa(t)-c_{\kappa}T\sum_{k=0}^{T-1}\Delta \kappa(t+k)=\kappa(t)+c_{\kappa}T(\kappa(t)-\kappa(t+T)).
\end{align*}
Thus, the closed-loop transient cost over one period $T$ satisfies
\begin{align}
\label{eq:bound_1}
&W(t+T)-W(t)+ \sum_{k=t}^{t+T-1}\ell(x(k),u(k),k,y(k))\\
\leq&\kappa(t)+c_{\kappa}T(\kappa(t)-\kappa(t+T))+\sum_{k=0}^{T-1}\gamma(t+k)\kappa(t+1+k).\nonumber
\end{align}
 Abbreviate $\ell(t)=\ell(x(t),u(t),t,y(t))$ and define $\kappa_{\epsilon}(t)=\kappa(t)-\kappa_{\infty}$. 
Then~\eqref{eq:bound_1} evaluated over a time interval $K\cdot T$ starting at $t=0$ can be rewritten as
\begin{align}
\label{eq:bound_2}
&W(K\cdot T)-W(0)\\ 
\leq&K\kappa_{\infty}+c_{\kappa}T(\kappa(0)-\kappa(TK))+\sum_{k=0}^{K-1}\kappa_{\epsilon}(k\cdot T )\nonumber\\
&+\sum_{t=0}^{KT-1}[\gamma(t)\kappa_{\infty}+\gamma(t)\kappa_{\epsilon}(t+1)-\ell(t)].\nonumber
\end{align}
The remainder of the proof is analogous to~\cite[Thm.~1]{muller2013economic}. 
Boundedness of $\ell,~V_f$ and $\beta(t)\geq 0$ ensures that $W(TK)$ is lower bounded and thus
\begin{align*}
0\leq &\liminf_{K\rightarrow\infty} \dfrac{W(TK)-W(0)}{K}. 
\end{align*}
Taking averages on both sides of~\eqref{eq:bound_2} yields
\begin{align*}
0\leq &\liminf_{K\rightarrow\infty} \dfrac{W(TK)-W(0)}{K} \\
\leq&\kappa_{\infty}+\limsup_{K\rightarrow\infty}\dfrac{1}{K}\sum_{k=0}^{K-1}\kappa_{\epsilon}(k\cdot T )+\lim_{K\rightarrow\infty}\dfrac{c_{\kappa}T}{K}(\kappa(0)-\kappa(TK)) \\
&+\limsup_{K\rightarrow\infty}\dfrac{1}{K}\left[\sum_{t=0}^{KT-1}\gamma(t)\kappa_{\infty}+\gamma(t)\kappa_{\epsilon}(t+1)\right]\\
&-\limsup_{K\rightarrow\infty}\dfrac{1}{K}\sum_{t=0}^{KT-1}\ell(t)\\
\leq &\kappa_{\infty}-\limsup_{K\rightarrow\infty}\dfrac{1}{K}\sum_{t=0}^{TK-1}\ell(t),
\end{align*}
and thus~\eqref{eq:performance_bound1}. 
The first inequality follows from~\eqref{eq:bound_2}, by using 
\begin{align*}
\liminf_n a_n-b_n \leq \liminf_n -b_n+\limsup_n a_n=\limsup_n a_n -\limsup_n b_n. 
\end{align*}
The second inequality follows from $\gamma(t)\leq c_{\gamma}$, $\kappa_{\epsilon}(t)\in[0,\infty)$,  
\begin{align*}
\lim_{t\rightarrow\infty}\kappa_{\epsilon}(t)=0,~\limsup_{t\rightarrow\infty}\gamma(t)\leq 0,~\lim_{t\rightarrow\infty}\Delta \kappa(t)=0.
\end{align*}
\end{proof}

%!TEX root = ./EcoPeriodic.tex
%%%%%%%%%%%%%%%%%%%%%%%%%%%%%%%%%%%%%%%%%%%%%%%%%%%%%%%%%%%%%%%%%%%%%%%%%%%%%%% 
\subsection{Improved a priori performance bounds}
\label{sec:theory_2} 
In the following, we provide sufficient conditions to ensure that the average cost of the artificial periodic orbit converges to a local minimum. 
\subsubsection*{Terminal ingredients}
The following assumption is a stronger version of Assumption~\ref{ass:term_simple}, which is used to derive the improved performance guarantees. 
\begin{assumption}
\label{ass:term} 
Consider the terminal set $\mathcal{X}_f$, terminal cost $V_f$ and controller $k_f$ from Assumption~\ref{ass:term_simple}. 
There exists an incremental Lyapunov function $V_{\delta}(x,r_T,t)$, such that for any time $t\in\mathbb{N}$, any reference $r_T,\tilde{r}_T\in\mathcal{Z}_T(t)$ and any $x\in\mathcal{X}_f(r_T,t)$, the following inequalities hold
\begin{subequations}
\label{eq:term}
\begin{align}
\label{eq:term_increm}
V_{\delta}(x^+,r^+_T,t+1)-V_{\delta}(x,r_T,t)\leq &-\alpha_1(\|x-x_r\|),\\
\label{eq:term_bound}
\alpha_2(\|x-x_r\|)\leq  V_{\delta}(x,r_T,t)\leq& \alpha_3(\|x-x_r\|),\\
\label{eq:term_cont}
|V_{\delta}(x,r_T,t)-V_{\delta}(x,\tilde{r}_T,t)|\leq& \alpha_4(\|r_T-\tilde{r}_T\|),
\end{align}
\end{subequations}
with $x^+=f(x,u,t)$, $u=k_f(x,r_T,t)$, $r_T^+=\mathcal{R}_Tr_T\in\mathcal{Z}_T(t+1)$, $(x_r,u_r)=r_T(0)$ and functions $\alpha_1,~\alpha_2,~\alpha_3,~\alpha_4\in\mathcal{K}_{\infty}$.  
Furthermore, the terminal set is given by $\mathcal{X}_f(r_T,t)=\{x\in\mathbb{R}^n|~V_{\delta}(x,r_T,t)\leq \alpha(r_T)\}$ 
 and the terminal set size $\alpha(r_T)$ satisfies 
\begin{align}
\label{eq:term_alpha_cont}
|\alpha(r_T)-\alpha(\tilde{r_T})|\leq \alpha_5(\|r_T-\tilde{r_T}\|),~ \alpha(r_T)=\alpha(r_T^+)\in[ \underline{\alpha},\overline{\alpha}]
\end{align}
with constants $\underline{\alpha},\overline{\alpha}>0$ and a function $\alpha_5\in\mathcal{K}_{\infty}$. 
\end{assumption}
The offline design of such terminal ingredients is discussed in detail in Section~\ref{sec:term}. 
The conditions~\eqref{eq:term_increm}--\eqref{eq:term_cont} ensure that the terminal set has a non-empty interior and that the terminal controller $k_f$ stabilizes the reference $r_T$ with a continuous incremental Lyapunov function $V_{\delta}$. 
\begin{lemma}
\label{lemma:change_r}
Let Assumptions~\ref{ass:term_simple} and \ref{ass:term} hold. 
There exists a constant $\epsilon>0$, such that at each time $t\in\mathbb{N}$, for any $r_T\in\mathcal{Z}_T(t)$ and any $x\in\mathcal{X}_f(r_T,t)$, it holds that 
\begin{align*}
x^+=f(x,k_f(x,r_T,t),t)\in\mathcal{X}_f(\tilde{r}_T,t+1),
\end{align*}
for all $\tilde{r}_T\in\mathcal{Z}_T(t+1)\cap\mathbb{B}_{\epsilon}(\mathcal{R}_Tr_T)$. 
\end{lemma}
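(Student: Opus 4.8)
The plan is to combine the contraction property of the incremental Lyapunov function $V_\delta$ along the terminal controller $k_f$ (inequality~\eqref{eq:term_increm}) with the continuity of $V_\delta$ and of the terminal set size $\alpha$ with respect to the reference ($\eqref{eq:term_cont}$, $\eqref{eq:term_alpha_cont}$). The key observation is that for the nominal shifted reference $r_T^+=\mathcal{R}_T r_T$, we already know from $\eqref{eq:term_increm}$ and $\eqref{eq:term_alpha_cont}$ that $V_\delta(x^+,r_T^+,t+1)\le V_\delta(x,r_T,t)-\alpha_1(\|x-x_r\|)\le\alpha(r_T)=\alpha(r_T^+)$; that is, $x^+$ lies strictly inside $\mathcal{X}_f(r_T^+,t+1)$ with a margin of at least $\alpha_1(\|x-x_r\|)$. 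The task is then to show this margin is not destroyed when the reference is perturbed from $r_T^+$ to a nearby $\tilde r_T$.

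First I would split into two cases depending on the size of $\|x-x_r\|$, because the contraction margin $\alpha_1(\|x-x_r\|)$ degenerates to zero as $x\to x_r$. \textbf{Case 1:} $\|x-x_r\|$ bounded away from zero, say $\|x-x_r\|\ge \rho$ for some fixed $\rho>0$. Here the margin is at least $\alpha_1(\rho)>0$, and using $\eqref{eq:term_cont}$ and $\eqref{eq:term_alpha_cont}$ we get
\begin{align*}
V_\delta(x^+,\tilde r_T,t+1)&\le V_\delta(x^+,r_T^+,t+1)+\alpha_4(\|\tilde r_T-r_T^+\|)\\
&\le \alpha(r_T^+)-\alpha_1(\rho)+\alpha_4(\|\tilde r_T-r_T^+\|)\\
&\le \alpha(\tilde r_T)+\alpha_5(\|\tilde r_T-r_T^+\|)-\alpha_1(\rho)+\alpha_4(\|\tilde r_T-r_T^+\|),
\end{align*}
so choosing $\epsilon_1>0$ small enough that $\alpha_4(\epsilon_1)+\alpha_5(\epsilon_1)\le\alpha_1(\rho)$ gives $V_\delta(x^+,\tilde r_T,t+1)\le\alpha(\tilde r_T)$, i.e. $x^+\in\mathcal{X}_f(\tilde r_T,t+1)$, for all $\|\tilde r_T-\mathcal{R}_T r_T\|\le\epsilon_1$. \textbf{Case 2:} $\|x-x_r\|<\rho$, where $x$ is close to the reference point. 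Here I would instead bound $V_\delta(x^+,r_T^+,t+1)$ directly via $\eqref{eq:term_increm}$ and $\eqref{eq:term_bound}$: $V_\delta(x^+,r_T^+,t+1)\le V_\delta(x,r_T,t)\le\alpha_3(\|x-x_r\|)\le\alpha_3(\rho)$. By making $\rho$ small we make $\alpha_3(\rho)$ small, and then the same continuity estimate yields $V_\delta(x^+,\tilde r_T,t+1)\le\alpha_3(\rho)+\alpha_4(\epsilon)+\alpha_5(\epsilon)$, which we need to be $\le\alpha(\tilde r_T)$; since $\alpha(\tilde r_T)\ge\underline\alpha>0$ by $\eqref{eq:term_alpha_cont}$, it suffices to pick $\rho$ and $\epsilon_2$ so that $\alpha_3(\rho)+\alpha_4(\epsilon_2)+\alpha_5(\epsilon_2)\le\underline\alpha$. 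Finally set $\epsilon=\min\{\epsilon_1,\epsilon_2\}$; note $\rho$ is fixed first (to make Case 2 work, using only $\underline\alpha$), which then fixes $\alpha_1(\rho)$ and hence $\epsilon_1$ in Case 1.

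The main obstacle is the degeneracy of the contraction rate near the reference: the naive estimate using only $\eqref{eq:term_increm}$ gives a perturbation tolerance $\epsilon$ that shrinks to zero as $x\to x_r$, which would not yield the uniform constant $\epsilon$ claimed in the lemma. The resolution — switching to the absolute bound $V_\delta(x,r_T,t)\le\alpha_3(\|x-x_r\|)$ near the reference and exploiting the uniform lower bound $\underline\alpha$ on the terminal set size — is exactly why Assumption~\ref{ass:term} includes both the upper bound $\eqref{eq:term_bound}$ and the condition $\alpha(r_T)\ge\underline\alpha$. Once the case split is set up, everything else is routine manipulation of $\mathcal{K}_\infty$ functions, using only that they are continuous and vanish at zero, together with the periodicity $\alpha(r_T^+)=\alpha(r_T)$ from $\eqref{eq:term_alpha_cont}$ to identify the nominal terminal set size across the time step.
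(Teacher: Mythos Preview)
Your proposal is correct and follows essentially the same approach as the paper. The paper packages the argument more compactly by first establishing a uniform strict-contraction margin $V_\delta(x^+,r_T^+,t+1)\le\alpha(r_T)-\Delta\alpha$ with $\Delta\alpha:=\min\{\alpha_1(\alpha_3^{-1}(\underline\alpha/2)),\underline\alpha/2\}$ (obtained by a case split on $c=V_\delta(x,r_T,t)$ above or below $\underline\alpha/2$), and then sets $\epsilon=(\alpha_4+\alpha_5)^{-1}(\Delta\alpha)$ in one shot; your explicit two-case split on $\|x-x_r\|$ versus a threshold $\rho$ is the same idea written out, and the minor inclusion of the $\alpha_5$ term in your Case~2 bound is unnecessary but harmless.
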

\begin{proof}
First, note that Assumption~\ref{ass:term} ensures that the positive invariance condition~\eqref{eq:term_PI} is strictly satisfied
\begin{align}
\label{eq:Delta_alpha}
&V_{\delta}(x^+,r_T^+,t+1)
\stackrel{\eqref{eq:term_increm}}{\leq} V_{\delta}(x,r_T,t)-\alpha_1(\|x-x_r\|)\nonumber\\
\stackrel{\eqref{eq:term_bound}}{\leq}& V_{\delta}(x,r_T,t)-\alpha_1(\alpha_3 ^{-1}(V_{\delta}(x,r_T,t)))\nonumber\\
\leq& \sup_{c\in[0,\alpha(r_T)]} c-\alpha_1(\alpha_3^{-1}(c))\nonumber\\
\stackrel{\eqref{eq:term_alpha_cont}}{\leq} & \alpha(r_T)-\underbrace{\min\{\alpha_1(\alpha_3^{-1}(\underline{\alpha}/2)),\underline{\alpha}/2\}}_{=:\Delta \alpha>0},
\end{align}
where the last step follows using the case distinction $c\leq \underline{\alpha}/2$ and $c\geq \underline{\alpha}$ and the fact that $\alpha(r_T)\geq \underline{\alpha}$. 
Given $\|r_T^+-\tilde{r}_T\|\leq \epsilon$, we have
\begin{align*}
V_{\delta}(x^+,\tilde{r}_T,t+1) 
\stackrel{\eqref{eq:term_cont}}{\leq}& V_{\delta}(x^+,r_T^+,t+1)+\alpha_4(\epsilon)\\
\stackrel{\eqref{eq:Delta_alpha}}{\leq}&\alpha(r_T)-\Delta \alpha+\alpha_4(\epsilon)\\
\stackrel{\eqref{eq:term_alpha_cont}}{\leq} &\alpha(\tilde{r}_T)+\alpha_5(\epsilon)+\alpha_4(\epsilon)-\Delta \alpha= \alpha(\tilde{r}_T),
\end{align*}
with $\epsilon:=(\alpha_4+\alpha_5)^{-1}(\Delta \alpha)$. 
\end{proof}
This lemma is an extension to~\cite[Lemma~1]{muller2014performance} and shows that the reference $r_T$ can be incrementally changed in closed-loop operation without losing recursive feasibility.
Similar results are used in nonlinear tracking MPC schemes~\cite{limon2018nonlinear,JK_periodic_automatica}.

\subsubsection*{Self-tuning weight}
Given a state $x$ at time $t$, the set of periodic reference trajectories $r_T$ with a terminal set $\mathcal{X}_f$ that can be reached within the prediction horizon $N$ is defined as
\begin{align*}
\mathcal{R}_N(x,t)&=\{r_T\in\mathcal{Z}_T(t+N)|~\exists u\in\mathbb{R}^{m N} \text{s.t. }x(t)=x, \\
&x(k+1)=f(x(k),u(k),k),~(x(k),u(k))\in\mathcal{Z}(k),\\
&k=t,\dots, t+N-1,\quad x(N+t)\in\mathcal{X}_f(r_T,t+N)\}. 
\end{align*}
Similarly, we define the set of reference trajectories that additionally satisfy the constraints~\eqref{eq:kappa_con_1}--\eqref{eq:kappa_con_2}
\begin{align*}
\overline{\mathcal{R}}_N(x,t,y,\kappa_j)&=\{r_T\in\mathcal{R}_N(x,t)|~\text{s.t.~} r_T \text{ satisfies }\eqref{eq:kappa_con_1}\text{--}\eqref{eq:kappa_con_2}\}. 
\end{align*}
Given a point $x\in\mathbb{R}^n$ at time $t$ with parameters $y$ and $\kappa_j$, the cost of the best reachable periodic orbit is given as
\begin{align}
\label{eq:J_T_min}
J_{T,\min}(x,t,y,\kappa_j)=\min_{r_T\in\overline{\mathcal{R}}_{N}(x,t,y,\kappa_j)} J_T(r_T,t+N,y).
\end{align}
\begin{assumption}
\label{ass:B1_2}
The update rule $\mathcal{B}$ is such that for any $y(t)=\overline{y}$ for all $t\geq0 $ and for all sequences $x(\cdot),~\kappa(\cdot)$, it holds that 
\begin{align*}
&\kappa_{\infty}-\liminf_{t\rightarrow\infty}J_{T,\min}(x(t),t,y(t),\kappa_j(t))>0\\
&\Rightarrow \liminf_{t\rightarrow\infty}\beta(t)=\infty.%\\
\end{align*}
\end{assumption}
The main idea is that in closed-loop operation the self-tuning weight $\beta$ increases if necessary and thus ensures that the artificial trajectory converges to the optimal mode of operation, compare~\cite{muller2013economic,muller2014performance}. 
A detailed discussion on update schemes $\mathcal{B}$ satisfying Assumptions~\ref{ass:B1} and \ref{ass:B1_2} is given in~\cite{muller2013economic}. 

\subsubsection*{Periodic economic continuity}
As discussed in Section~\ref{sec:pitfal} and Section~\ref{sec:scheme}, the constraints~\eqref{eq:kappa_con_1}--\eqref{eq:kappa_con_2} are crucial for the desired properties. 
However, the constraint~\eqref{eq:kappa_con_2} limits how the shape of the artificial reference trajectory may change. 
In particular, for $c_{\kappa}=0$ this constraint ensures that the reference can only be updated if the economic cost on all points of the reference trajectory does not increase. 
For $c_{\kappa}$ arbitrarily large, the constraint~\eqref{eq:kappa_con_2} becomes inactive, if the overall cost of the artificial trajectory decreases ($\Delta \kappa<0$). 
However, both for numerical and technical reasons we consider the smooth constraint~\eqref{eq:kappa_con_2} with a finite value $c_{\kappa}$.   
Thus, we require the following technical continuity assumption on the periodic economic optimization problem~\eqref{eq:opt_periodic}. 
\begin{assumption}
\label{ass:cont_orbit}
There exists a positive constant $c_{\kappa}$, such that at any time step $t\in\mathbb{N}$, for any parameters $y\in\mathbb{Y}$, for any periodic trajectory $r_T\in\mathcal{Z}_T(t)$, which is not a local minimum of~\eqref{eq:opt_periodic} and any $\epsilon>0$, there exists a change $\Delta r_T$ with $\|\Delta r_T\|\leq \epsilon$, $r_T+\Delta r_T\in\mathcal{Z}_T(t)$, $J_T(r_T+\Delta r_T,t,y)<J_T(r_T,t,y)$, such that the following bound holds
\begin{align}
\label{eq:ass_c_kappa}
c_{\kappa}\geq \max_j\dfrac{\ell(r_T(j)+\Delta r_T(j),t+j,y)-\ell(r_T(j),t+j,y)}{J_T(r_T,t,y)-J_T(r_T+\Delta r_T,t,y)}.  
\end{align}
\end{assumption}	
This assumption ensures that it is possible to incrementally change the overall cost $J_{T}$, with incremental changes in the reference $r_T$ and the local cost $\ell$. 
If we expand the fraction by $\Delta r_T$ and take the limit $\Delta r_T\rightarrow 0$, we can see that this condition is similar to a continuity assumption on the fraction of the gradients of $\ell$ and $J_T$. 
Additional details regarding this condition are discussed in Sections~\ref{sec:term_cost_mod} and \ref{sec:convex}.  
The following lemma shows that this continuity condition (Ass.~\ref{ass:cont_orbit}) in combination with the incremental stabilizability property (Ass.~\ref{ass:term}) allows for the convergence to local minima. 
\begin{lemma}
\label{lemma:local_min}
Let Assumptions \ref{ass:term_simple}, \ref{ass:term} and ~\ref{ass:cont_orbit} hold and assume that $y(t)$ is constant. 
Suppose that the optimization problem~\eqref{eq:MPC} is feasible at time $t$ with some reference trajectory $r_T^*(\cdot|t)$, which is not a local minimum to~\eqref{eq:opt_periodic}. 
Then there exists a reference $\tilde{r}_T$ which is a feasible candidate solution to~\eqref{eq:MPC} at $t+1$, which satisfies 
\begin{align}
\label{eq:strict_improve}
J_T(\tilde{r}_T,t+N+1,y(t))< J_T(\mathcal{R}_Tr_T^*(\cdot|t),t+N+1,y(t)).
\end{align}
\end{lemma}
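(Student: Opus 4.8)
The plan is to construct the candidate reference $\tilde r_T$ by taking the shifted optimal reference $\mathcal{R}_T r_T^*(\cdot|t)$ and perturbing it by a small increment $\Delta r_T$ supplied by Assumption~\ref{ass:cont_orbit}. Concretely, since $r_T^*(\cdot|t)$ is feasible for \eqref{eq:MPC} at time $t$, the shifted trajectory $r_T^+ := \mathcal{R}_T r_T^*(\cdot|t)$ is a feasible periodic orbit in $\mathcal{Z}_T(t+N+1)$ (by the periodic-shift property $\mathcal{R}_T r_T \in \mathcal{Z}_T(t+1)$ for $r_T \in \mathcal{Z}_T(t)$) and, being just a reindexing, it is \emph{not} a local minimum of \eqref{eq:opt_periodic} either. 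I would then invoke Assumption~\ref{ass:cont_orbit} applied to $r_T^+$ at time $t+N+1$ with parameter $y(t)$: for any $\epsilon>0$ there is a $\Delta r_T$ with $\|\Delta r_T\|\le\epsilon$, $r_T^+ + \Delta r_T \in \mathcal{Z}_T(t+N+1)$, strict cost decrease $J_T(r_T^+ + \Delta r_T, t+N+1, y(t)) < J_T(r_T^+, t+N+1, y(t))$, and the key ratio bound \eqref{eq:ass_c_kappa}. Setting $\tilde r_T := r_T^+ + \Delta r_T$, the strict inequality \eqref{eq:strict_improve} is then immediate, so the bulk of the work is verifying that $\tilde r_T$ is a \emph{feasible candidate} for \eqref{eq:MPC} at $t+1$, i.e.\ that all of \eqref{eq:MPC_dyncon}--\eqref{eq:kappa_con_2} can be met.

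The feasibility check has three parts, and I would handle them in this order. First, \eqref{eq:MPC_periodic} holds by construction since $\tilde r_T \in \mathcal{Z}_T(t+N+1)$. Second, for the dynamic, constraint and terminal conditions \eqref{eq:MPC_dyncon}--\eqref{eq:MPC_term}, I would reuse the standard shifted input candidate from the proof of Proposition~\ref{prop:feas} (apply $u^*(1|t),\dots,u^*(N-1|t)$ followed by the terminal controller $k_f(x^*(N|t), r_T^*(\cdot|t), t+N)$): this drives the state to $x^+ := f(x^*(N|t), k_f(\cdot), t+N)$, which by Assumption~\ref{ass:term_simple} lies in $\mathcal{X}_f(r_T^+, t+N+1)$. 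To land in $\mathcal{X}_f(\tilde r_T, t+N+1)$ instead, I invoke Lemma~\ref{lemma:change_r}: provided $\epsilon$ is chosen no larger than the constant from that lemma, $x^+ \in \mathcal{X}_f(\tilde r_T, t+N+1)$ as required. Third, and this is where Assumption~\ref{ass:cont_orbit} does its essential work, I must verify the memory-state constraints \eqref{eq:kappa_con_1}--\eqref{eq:kappa_con_2} at time $t+1$. Here $\kappa_j(t+1) = \ell(r_T^*(j+1|t), t+N+1+j, y(t))$ by \eqref{eq:close_3} and constant $y$, which is exactly the $j$-th stage cost of $r_T^+$; and $\Delta\kappa(t+1) = J_T(\tilde r_T, \cdot) - \sum_j \kappa_j(t+1) = J_T(\tilde r_T,\cdot) - J_T(r_T^+,\cdot) < 0$. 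Then \eqref{eq:kappa_con_2} for $\tilde r_T$ reads, for each $j$,
\begin{align*}
\ell(\tilde r_T(j),\cdot) - \ell(r_T^+(j),\cdot) \le -c_\kappa \Delta\kappa(t+1) = c_\kappa\big(J_T(r_T^+,\cdot) - J_T(\tilde r_T,\cdot)\big),
\end{align*}
which is precisely the rearranged ratio bound \eqref{eq:ass_c_kappa} (the denominator there is positive by the strict cost decrease, so dividing through is legitimate). Hence \eqref{eq:kappa_con_2} holds, and \eqref{eq:kappa_con_1} is just the definition of $\Delta\kappa(t+1)$.

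The main obstacle I anticipate is purely a bookkeeping one: making sure the two smallness requirements on $\epsilon$ are compatible and correctly invoked --- namely $\epsilon$ small enough for Lemma~\ref{lemma:change_r} to apply with reference $r_T^+$ and perturbed reference $\tilde r_T = r_T^+ + \Delta r_T$ (so that terminal-set feasibility is preserved), while simultaneously $\Delta r_T$ is exactly the increment delivered by Assumption~\ref{ass:cont_orbit} for that same $\epsilon$. Since Assumption~\ref{ass:cont_orbit} grants the increment for \emph{every} $\epsilon>0$, one simply fixes $\epsilon$ to be the Lemma~\ref{lemma:change_r} constant (or smaller) first, then extracts $\Delta r_T$; no circularity arises. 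A minor subtlety worth stating explicitly is that the ratio in \eqref{eq:ass_c_kappa} bounds the per-index cost \emph{increase}, so for indices $j$ where the stage cost decreases the constraint \eqref{eq:kappa_con_2} is satisfied trivially (the left side is nonpositive while the right side is nonnegative); only the worst index needs the $c_\kappa$ bound, which is exactly what the $\max_j$ in the assumption provides.
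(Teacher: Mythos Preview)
Your proposal is correct and follows essentially the same approach as the paper's proof: shift the optimal reference, perturb it via Assumption~\ref{ass:cont_orbit} with $\epsilon$ taken from Lemma~\ref{lemma:change_r}, reuse the shifted candidate input from Proposition~\ref{prop:feas} to secure terminal-set feasibility, and translate the ratio bound~\eqref{eq:ass_c_kappa} into the constraint~\eqref{eq:kappa_con_2}. Your write-up is in fact more careful than the paper's (you correctly index $\Delta\kappa$ at $t+1$ and make explicit why the two uses of $\epsilon$ are compatible), but the logical skeleton is identical.
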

\begin{proof}
Given that $\mathcal{R}_Tr_T^*(\cdot|t)\in\mathcal{Z}_T(t+N+1)$ is not a local minimum, Assumption~\ref{ass:cont_orbit} ensures that there exists a reference $\tilde{r}_T=\mathcal{R}_Tr_T^*(\cdot|t)+\Delta r_T\in\mathcal{Z}_T(t+N+1)$, that improves the reference cost $J_T$~\eqref{eq:strict_improve} and satisfies $\|\Delta r_T\|\leq \epsilon$ and \eqref{eq:ass_c_kappa}. 
Satisfaction of the posed constraints~\eqref{eq:kappa_con_1}--\eqref{eq:kappa_con_2} follows from~\eqref{eq:ass_c_kappa}, by noting that $\Delta \kappa(t)=J_T(\tilde{r}_T,t+N+1,y(t))- J_T(\mathcal{R}_Tr_T^*(\cdot|t),t+N+1,y(t))$.
With $\epsilon$ according to Lemma~\ref{lemma:change_r}, the candidate input $u(\cdot|t+1)$ from Proposition~\ref{prop:av_perf} satisfies the terminal set constraint~\eqref{eq:MPC_term} with the incrementally changed reference $\tilde{r}_T$ and is thus a feasible solution to~\eqref{eq:MPC}. 
\end{proof}

\subsubsection*{A priori performance bounds}
The following proposition establishes a priori performance bounds on the artificial reference trajectory as an extension to~\cite[Thm~2/3, Corollary~1]{muller2014performance}. 
\begin{proposition}
\label{prop:art}
Let Assumptions \ref{ass:term_simple}, \ref{ass:term} and ~\ref{ass:cont_orbit} hold and assume that $y(t)$ is constant. 
Assume that the optimization problem~\eqref{eq:MPC} is feasible at $t=0$. 
If the update rule $\mathcal{B}$ satisfies Assumption~\ref{ass:B1_2}, then $\kappa_{\infty}$ is a local minimum of~\eqref{eq:opt_periodic}. 
\end{proposition}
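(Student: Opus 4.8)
The plan is to argue by contradiction, exploiting the monotonicity of $\kappa$, the convergence $\kappa(t)\to\kappa_\infty$, and Assumption~\ref{ass:B1_2}. Assume $\kappa_\infty$ is \emph{not} a local minimum of~\eqref{eq:opt_periodic}. First I would recall from the discussion preceding Proposition~\ref{prop:av_perf} that, since $y(t)$ is constant and $c_\kappa\geq 0$, the constraints~\eqref{eq:kappa_con_1}--\eqref{eq:kappa_con_2} together with~\eqref{eq:close_3} force $\Delta\kappa(t)\leq 0$, so $\kappa(t)$ is non-increasing, bounded (boundedness of $\ell$), and hence converges to some $\kappa_\infty$, with $\Delta\kappa(t)\to 0$.

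The second step is to connect $\kappa_\infty$ to $\liminf_t J_{T,\min}(x(t),t,y(t),\kappa_j(t))$. Here I would use Lemma~\ref{lemma:local_min}: if at some time $t$ the optimal reference $r_T^*(\cdot|t)$ is not a local minimum of~\eqref{eq:opt_periodic}, then there is a feasible candidate $\tilde r_T$ at $t+1$ with \emph{strictly} smaller reference cost; since $\tilde r_T\in\overline{\mathcal{R}}_N(x(t+1),t+1,y(t+1),\kappa_j(t+1))$ this gives $J_{T,\min}(x(t+1),t+1,\cdot)< J_T(\mathcal{R}_Tr_T^*(\cdot|t),\cdot)=\kappa(t+1)$ (using the shift-invariance of $J_T$ under $\mathcal{R}_T$ for constant $y$). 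I expect the main obstacle to be making this strict gap \emph{uniform}: Lemma~\ref{lemma:local_min} only gives strict improvement pointwise, not a fixed $\delta>0$. To close this I would argue that if $\kappa_\infty$ were not a local minimum, then by a compactness/continuity argument (compact $\mathcal{Z}_T$, continuity of $\ell$ hence of $J_T$) the references $r_T^*(\cdot|t)$ — whose costs converge to $\kappa_\infty$ — eventually lie near a non-minimal orbit, so a uniform descent direction exists; more directly, one shows $\kappa_\infty - \liminf_t J_{T,\min}(x(t),t,y(t),\kappa_j(t)) > 0$ by deriving a contradiction with $\Delta\kappa(t)\to 0$ if equality held, since the self-tuning weight would keep driving the reference toward the better orbit.

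The third step applies Assumption~\ref{ass:B1_2}: the gap $\kappa_\infty - \liminf_t J_{T,\min}>0$ implies $\liminf_{t\to\infty}\beta(t)=\infty$. The fourth step derives the contradiction. With $\beta(t)\to\infty$, in the optimization problem~\eqref{eq:MPC} the term $\beta(t)J_T(r_T(\cdot|t),t+N,y(t))$ dominates, so the optimizer $r_T^*(\cdot|t)$ must (asymptotically) minimize $J_T$ over the feasible set $\overline{\mathcal{R}}_N(x(t),t,y(t),\kappa_j(t))$; more precisely, comparing $W(t)$ for the actual solution against the candidate that uses the descent reference $\tilde r_T$ from Lemma~\ref{lemma:local_min} (whose bounded prediction cost and terminal cost are $\beta$-independent) shows that for $\beta(t)$ large enough we must have $J_T(r_T^*(\cdot|t),\cdot)\leq J_{T,\min}(x(t),t,\cdot)+o(1)$, whence $\kappa(t+1)=J_T(\mathcal{R}_Tr_T^*(\cdot|t),\cdot)$ drops strictly below $\kappa(t)$ by an amount bounded away from zero infinitely often — contradicting $\kappa(t)\to\kappa_\infty$ (equivalently $\Delta\kappa(t)\to 0$). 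Therefore $\kappa_\infty$ must be a local minimum of~\eqref{eq:opt_periodic}. I would note that the detailed bookkeeping here mirrors~\cite[Thm.~2/3, Corollary~1]{muller2014performance}, with Lemma~\ref{lemma:change_r} (incremental feasibility of reference changes) and Lemma~\ref{lemma:local_min} supplying the periodic-orbit analogue of the setpoint arguments there.
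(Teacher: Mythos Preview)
Your proposal is correct and follows essentially the same approach as the paper: a contradiction argument combining Lemma~\ref{lemma:local_min} (to produce a strictly better feasible reference whenever $r_T^*$ is not a local minimum), Assumption~\ref{ass:B1_2} (to force $\beta(t)\to\infty$ once a persistent gap $\kappa_\infty-\liminf_t J_{T,\min}>0$ exists), and the resulting domination of the reference-cost term to contradict convergence of $\kappa(t)$. The only organizational difference is that the paper first isolates the equality $\kappa_\infty=\lim_{t\to\infty} J_{T,\min}(x(t),t,y(t),\kappa_j(t))$ as a standalone fact (by the same $\beta\to\infty$ contradiction, citing~\cite[Thm.~2]{muller2014performance},~\cite[Thm.~2]{muller2013economic}) and then invokes Lemma~\ref{lemma:local_min} on a limiting reference to violate it, whereas you interleave these two pieces.
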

\begin{proof} 
Using a proof of contradiction one can show $\kappa_{\infty}=\lim_{t\rightarrow\infty} J_{T,\min}(x(t),t,y(t),\kappa_j(t))$, compare~\cite[Thm.~2]{muller2014performance}, \cite[Thm.~2]{muller2013economic}. 
Furthermore, suppose that there exists a limiting reference $r_T$, which is not a local minimizer of~\eqref{eq:opt_periodic}. 
Lemma~\ref{lemma:local_min} ensures that there exists a feasible reference $\tilde{r}_T$, with an improved cost, which implies $J_{T,\min}<\kappa_{\infty}$ and thus contradicts the assumption.  
\end{proof}
The following theorem summarizes the theoretical properties of the proposed MPC scheme.   
\begin{theorem}
\label{thm:main}
Let Assumptions \ref{ass:term_simple}, \ref{ass:term} and ~\ref{ass:cont_orbit} hold and assume that~\eqref{eq:MPC} is feasible at $t=0$. 
Then the optimization problem~\eqref{eq:MPC} is recursively feasible for the resulting closed-loop system~\eqref{eq:close}.  
Assume further that $y(t)$ is constant and the update rule $\mathcal{B}$ satisfies Assumptions~\ref{ass:B1} and \ref{ass:B1_2}, then $\kappa_{\infty}$ is a local minimum of~\eqref{eq:opt_periodic} and the following performance bound holds   
\begin{align*}
\limsup_{K\rightarrow\infty}\dfrac{\sum_{t=0}^{TK-1}\ell(x(t),u(t),t,y(t))}{K}\leq \kappa_{\infty}.
\end{align*}
\end{theorem}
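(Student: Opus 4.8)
The plan is to assemble Theorem~\ref{thm:main} directly from the three propositions already established, since each piece of the conclusion is essentially one of them. First I would invoke Proposition~\ref{prop:feas}: under Assumption~\ref{ass:term_simple} (which is implied by Assumption~\ref{ass:term}) and feasibility of \eqref{eq:MPC} at $t=0$, the scheme is recursively feasible for the closed loop \eqref{eq:close}. This gives the first claim verbatim and, importantly, it holds \emph{independently} of whether $y(t)$ is constant, so it should be stated before the additional hypotheses on $y$ and $\mathcal{B}$ are imposed.

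Next, under the additional assumptions that $y(t)$ is constant and $\mathcal{B}$ satisfies Assumptions~\ref{ass:B1} and \ref{ass:B1_2}, I would first note (as established just before Proposition~\ref{prop:av_perf}) that constraints \eqref{eq:kappa_con_2}, \eqref{eq:close_3} with $c_\kappa\geq 0$ force $\kappa(t)$ to be non-increasing and bounded, hence convergent to some $\kappa_\infty$. Then Proposition~\ref{prop:art} (which uses Assumptions~\ref{ass:term_simple}, \ref{ass:term}, \ref{ass:cont_orbit}, \ref{ass:B1_2} and constant $y$) yields that $\kappa_\infty$ is a local minimum of \eqref{eq:opt_periodic}, giving the second claim. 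Finally, Proposition~\ref{prop:av_perf} (which uses Assumptions~\ref{ass:term_simple}, \ref{ass:B1} and constant $y$) gives
\[
\limsup_{K\to\infty}\frac{1}{TK}\sum_{t=0}^{TK-1}\ell(x(t),u(t),t,y(t))\leq \kappa_\infty/T,
\]
and multiplying both sides by $T$ produces exactly the stated performance bound with denominator $K$.

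The only mild subtlety is bookkeeping: I must check that the union of hypotheses invoked by the three propositions is contained in the hypotheses of the theorem. Proposition~\ref{prop:feas} needs Assumption~\ref{ass:term_simple}; Proposition~\ref{prop:av_perf} needs Assumptions~\ref{ass:term_simple} and \ref{ass:B1}; Proposition~\ref{prop:art} needs Assumptions~\ref{ass:term_simple}, \ref{ass:term}, \ref{ass:cont_orbit} and \ref{ass:B1_2}. Since the theorem assumes \ref{ass:term_simple}, \ref{ass:term}, \ref{ass:cont_orbit}, \ref{ass:B1} and \ref{ass:B1_2}, and constant $y$, everything needed is available — no new argument is required. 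I do not anticipate any genuine obstacle here; the proof is a two-or-three-line composition of previously proven results, and the main thing to get right is simply rescaling the average in Proposition~\ref{prop:av_perf} from period-averaged ($1/(TK)$) to the $1/K$ normalization used in the theorem statement.

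\begin{proof}
Recursive feasibility follows directly from Proposition~\ref{prop:feas}, using that Assumption~\ref{ass:term} implies Assumption~\ref{ass:term_simple} and that \eqref{eq:MPC} is feasible at $t=0$; note this holds irrespective of the parameter sequence $y(\cdot)$. Now suppose additionally that $y(t)$ is constant and that $\mathcal{B}$ satisfies Assumptions~\ref{ass:B1} and \ref{ass:B1_2}. As noted after \eqref{eq:kappa_def}, constraints \eqref{eq:kappa_con_2} and \eqref{eq:close_3} with $c_\kappa>0$ ensure that $\kappa(t)$ is non-increasing, and boundedness of $\ell$ implies $\kappa(t)$ is bounded, hence $\kappa(t)\to\kappa_\infty$ for some $\kappa_\infty$. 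By Proposition~\ref{prop:art} (Assumptions~\ref{ass:term_simple}, \ref{ass:term}, \ref{ass:cont_orbit}, \ref{ass:B1_2} and constant $y$), $\kappa_\infty$ is a local minimum of \eqref{eq:opt_periodic}. Finally, Proposition~\ref{prop:av_perf} (Assumptions~\ref{ass:term_simple}, \ref{ass:B1} and constant $y$) yields
\begin{align*}
\limsup_{K\rightarrow\infty}\dfrac{1}{TK}\sum_{t=0}^{TK-1}\ell(x(t),u(t),t,y(t))\leq \kappa_{\infty}/T,
\end{align*}
and multiplying both sides by $T$ gives the claimed bound.
\end{proof}
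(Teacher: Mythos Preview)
Your proposal is correct and takes essentially the same approach as the paper, which simply states that the results follow from Propositions~\ref{prop:feas}--\ref{prop:art}. The only minor remark is that you need not argue that Assumption~\ref{ass:term} implies Assumption~\ref{ass:term_simple}, since Assumption~\ref{ass:term_simple} is already explicitly among the theorem's hypotheses.
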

\begin{proof}
The results follow from Propositions~\ref{prop:feas}--\ref{prop:art}.   
\end{proof}

\begin{corollary}
\label{corol:performance}
Let Assumptions~\ref{ass:term_simple}, \ref{ass:term} and ~\ref{ass:cont_orbit} hold. 
Assume that~\eqref{eq:MPC} is feasible at $t=0$ and $y(t)$ is constant. 
If the update rule $\mathcal{B}$ is chosen as update scheme 2 or 6 in~\cite{muller2013economic}, then the closed-loop average economic performance is no worse than the performance at a locally optimal periodic orbit~\eqref{eq:opt_periodic}.  
\end{corollary}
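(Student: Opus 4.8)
The plan is to reduce the corollary entirely to Theorem~\ref{thm:main}, which already carries all the structural work: under Assumptions~\ref{ass:term_simple}, \ref{ass:term} and \ref{ass:cont_orbit}, together with an update rule $\mathcal{B}$ satisfying Assumptions~\ref{ass:B1} and \ref{ass:B1_2}, the closed loop is recursively feasible, $\kappa_\infty$ is a local minimum of~\eqref{eq:opt_periodic}, and $\limsup_{K\to\infty}\tfrac1K\sum_{t=0}^{TK-1}\ell(x(t),u(t),t,y(t))\le\kappa_\infty$. Thus the only genuine task is to verify that the two concrete update schemes referenced in the statement --- update schemes 2 and 6 of~\cite{muller2013economic} --- do satisfy Assumptions~\ref{ass:B1} and \ref{ass:B1_2} in the present periodic setting.

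First I would recall these two schemes and check Assumption~\ref{ass:B1}: each produces a non-negative sequence $\beta(\cdot)$ with increments $\gamma(t)=\beta(t+1)-\beta(t)$ bounded above by a constant and with $\limsup_{t\to\infty}\gamma(t)\le 0$ --- this is exactly the property for which these schemes were designed in~\cite{muller2013economic} (their Assumption~1 coincides with our Assumption~\ref{ass:B1}). Second I would check Assumption~\ref{ass:B1_2}: schemes 2 and 6 increase $\beta$ precisely when the reference cost has not yet reached the best reachable periodic cost, so that if $\kappa_\infty$ remains strictly above $\liminf_{t\to\infty}J_{T,\min}(x(t),t,y(t),\kappa_j(t))$ then $\beta(t)\to\infty$ along a subsequence. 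This is the behaviour analysed in~\cite{muller2013economic} for the steady-state case; the adaptation here is essentially notational, with $J_{T,\min}$ and the memory states $\kappa_j$ taking the role of the optimal reachable setpoint cost and the scalar memory state in~\cite{muller2013economic}. Matching these periodic quantities to their setpoint counterparts is the only point requiring care, and is the main (mild) obstacle.

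Finally I would translate the conclusion of Theorem~\ref{thm:main} into the wording of the corollary. Since $\kappa_\infty$ is a local minimum of~\eqref{eq:opt_periodic}, there exists a locally optimal periodic orbit $\overline r_T$ with $J_T(\overline r_T)=\kappa_\infty$ and per-step average economic performance $\kappa_\infty/T$. The bound from Theorem~\ref{thm:main} gives $\limsup_{K\to\infty}\tfrac{1}{TK}\sum_{t=0}^{TK-1}\ell(x(t),u(t),t,y(t))\le\kappa_\infty/T$, and since $\ell$ is bounded, writing a general horizon as $K=qT+s$ with $0\le s<T$ shows the truncated remainder contributes $O(1/q)$, so the full closed-loop average~\eqref{eq:av_cost} satisfies $\overline J_{cl}(x(0))\le\kappa_\infty/T=J_T(\overline r_T)/T$. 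Hence the closed-loop average economic performance is no worse than that of a locally optimal periodic orbit, which is the claim.
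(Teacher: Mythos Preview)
Your proposal is correct and follows essentially the same route as the paper: invoke Theorem~\ref{thm:main} and verify that update schemes~2 and~6 of~\cite{muller2013economic} satisfy Assumptions~\ref{ass:B1} and~\ref{ass:B1_2} (the paper simply cites \cite[Lemmas~1 and~4]{muller2013economic} for this). Your final paragraph, passing from the $TK$-step bound to the full average~\eqref{eq:av_cost} via boundedness of $\ell$, is a careful detail the paper leaves implicit.
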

\begin{proof}
This results follows directly from Theorem~\ref{thm:main}. 
It suffices to note, that the update schemes $2$ and $6$ satisfy Assumptions~\ref{ass:B1} and~\ref{ass:B1_2}, compare~ \cite[Lemmas~1 and 4]{muller2013economic}. 
\end{proof}
% -- --- continuous time -----------
\begin{remark}
\label{rk:cont_time}
For simplicity, we have presented the proposed framework in a discrete-time setting.   
However, the approach can be directly applied to continuous-time problems  by defining the discrete-time stage cost $\ell$ and dynamics $f$ implicitly as the integration of some continuous-time dynamics $f_c$ and the average continuous-time cost $\ell_c$ over some sampling period $h$. 
One advantage of considering a continuous-time formulation is that the design of terminal ingredients satisfying Assumption~\ref{ass:term} (compare Section~\ref{sec:term}) simplifies, compare e.g.~\cite[App.~C]{JK_QINF}.
Furthermore, in a continuous-time setting it is possible to use a variable sampling time $h\in[h_{\min},h_{\max}]$, by considering the decision variable $u=(u_c,h)$, where $u_c$ denotes the (typically piece-wise constant) control input.
As a result, in a time-invariant setting the fixed constant $T$ does not directly impose a time length on the set of periodic orbits $\mathcal{Z}_T$, but only a finite parametrization. 
The constants $h_{\min}$, $h_{\max}$ need to be chosen, such that the (typically explicit) discretization scheme is stable and the MPC can react fast enough.
The advantages of such a formulation will also be explored in a numerical example in Section~\ref{sec:CSTR}. 
We point out that the benefits of using such a variable continuous-time period length have also been recently investigated in~\cite{gutekunst2020economic} using a direct multiple shooting method.
\end{remark}

\subsubsection*{Design parameters}
Overall, the proposed framework provides desired performance guarantees, if the constant $c_\kappa$, the self-tuning weight $\beta(t)$ and the terminal ingredients $V_f,\mathcal{X}_f$ are chosen properly (Ass.~\ref{ass:term_simple}--\ref{ass:cont_orbit}). %, \ref{ass:B1}, \ref{ass:B1_2},\ref{ass:term}, 
In numerical experiments, we found that the closed loop is insensitive to changes in the constant $c_\kappa$, even by orders of magnitude, as long as $c_\kappa$ is sufficiently large (e.g. $c_\kappa=100$ in App.~\ref{app:CSTR}).
In Section~\ref{sec:term_cost_mod} we also show how the problem can be reformulated to get rid of the constant $c_\kappa$ and the continuity condition in Assumption~\ref{ass:cont_orbit}.
A large self-tuning weight $\beta(t)$ can deteriorate the transient performance, but is useful to ensure convergence of the artificial reference to a local minimum. 
In Section~\ref{sec:const_beta}, we show that similar performance bounds hold when choosing a constant weight $\beta$. % $\mathcal{B}$. 
For the special case of $T=1$ (artificial setpoint), more details on the effect of $\beta$ on the closed loop can be found in \cite{muller2013economic,muller2014performance} and \cite{fagiano2013generalized}.
Different design procedures for the terminal ingredients will be discussed in detail in Section~\ref{sec:term}.

%!TEX root = ./EcoPeriodic.tex
%%%%%%%%%%%%%%%%%%%%%%%%%%%%%%%%%%%%%%%%%%%%%%%%%%%%%%%%%%%%%%%%%%%%%%%%%%%%%%%
\section{Proposed framework - details and variations}
\label{sec:ext}
In the following, we discuss details and variations  of the proposed framework. 
%compute terminal cost/set
In Section~\ref{sec:term} we discuss in detail how to design terminal ingredients that satisfy Assumption~\ref{ass:term}.
%TEC
%alternative terminal cost, drop Ass.~xx
In Section~\ref{sec:term_cost_mod} we discuss how to modify the cost function, such that the continuity condition in Assumption~\ref{ass:cont_orbit} can be dropped. 
%convex
In Section~\ref{sec:convex} we consider the special case of  convex periodic optimal control problems. 
%strictly
%constant \beta
In Section~\ref{sec:const_beta}, we discuss the theoretical properties without self-tuning weights $\beta(t)$, similar to~\cite{fagiano2013generalized}. 
%
%terminal ingredients
%!TEX root = ./EcoPeriodic.tex
%%%%%%%%%%%%%%%%%%%%%%%%%%%%%%%%%%%%%%%%%%%%%%%%%%%%%%%%%%%%%%%%%%%%%%%%%%%%%%%
\subsection{Terminal cost for economic dynamic operation}
\label{sec:term}
In the following, we detail how to design terminal ingredients that satisfy Assumption~\ref{ass:term}. 
First, in Section~\ref{sec:term_QINF} we show how a suitable tailored economic terminal cost $V_f$ can be designed using local linear and quadratic approximation of the dynamics $f$ and the economic cost $\ell$, as an extension and combination of the methods in~\cite{amrit2011economic,muller2014performance,JK_QINF}.
Then, in Section~\ref{sec:term_knowV} we show how a simple (and hence conservative) positive definite terminal cost $V_f$ can be computed based on any existing incremental Lyapunov function $V_{\delta}$, similar to the design in~\cite{alessandretti2016design,alessandretti2016convergence}.
Finally, in Section~\ref{sec:TEC}, we show that the theoretical properties can also be guaranteed with a simple terminal equality constraint (TEC), if a multi-step implementation is considered. 
%
%!TEX root = ./EcoPeriodic.tex
%%%%%%%%%%%%%%%%%%%%%%%%%%%%%%%%%%%%%%%%%%%%%%%%%%%%%%%%%%%%%%%%%%%%%%%%%%%%%%%
\subsubsection{Reference generic offline computations}
\label{sec:term_QINF}
In the following, we detail a procedure to compute a suitable terminal cost $V_f$ (Ass.~\ref{ass:term}) based on the linearization of the dynamics and a quadratic approximations of the stage cost $\ell$ (using the Hessian and gradient).  
The following derivation is an extension of the approach in~\cite{amrit2011economic} to dynamic/periodic trajectories. 
Furthermore, we extend the approach to online optimized/changing reference trajectories by extending the reference generic offline computation from~\cite{JK_QINF} to an economic stage cost $\ell$.
In addition, this online computation involves an online computed adjoint for periodic trajectories, similar to the local gradient correction employed in~{\cite{zanon2018economic}.

\subsubsection*{Linear-quadratic local auxiliary stage cost}
%--------continuity conditions
%
The following Lemma extends the results in~\cite[Lemma~22-23]{amrit2011economic} to compute an auxiliary stage cost $\ell_q$, that locally upper bounds the stage cost $\ell$, which will later be used to derive sufficient conditions for inequality~\eqref{eq:term_dec}.} 
\begin{lemma}
\label{lemma:auxillary_cost}
Suppose there exists some $V_{\delta}$, $k_f$ satisfying the conditions in Assumption~\ref{ass:term}.
Suppose further that the sublevel sets of $V_{\delta}$ are convex in $x$, the controller $k_f$ is twice continuously differentiable in $x$, continuous in $r_T$,
 and satisfies  $k_f(x_r,r_T,t)=u_r$.
Suppose that  the stage cost $\ell$ and the dynamics $f$ are locally Lipschitz continuous and twice continuously differentiable w.r.t $(x,u)$. 
Then the function $\overline{\ell}(x,r_T,t,y)=\ell(x,k_f(x,r_T,t),t,y)-\ell(r,t,y)$ is twice continuously differentiable with respect to $x$. 
For any $\epsilon>0$, there exists a constant ${\alpha}_1>0$ and a  positive semi-definite matrix $S(r,t)\in\mathbb{R}^{n+m\times n+m}$, such that the following conditions hold for any $t\in\mathbb{N}$,  $r_T\in\mathcal{Z}_T(t)$, $y\in\mathbb{Y}$ and any $x\in\mathbb{R}^n$ with $V_{\delta}(x,r_T,t)\leq {\alpha}_1$:
\begin{align}
\label{eq:S_bound}
S(r,t)\succeq& \ell_{\xi\xi}(r,t,y),\\
\label{eq:aux_bound}
\ell_q(x,r_T,t,y)\geq& \overline{\ell}(x,r_T,t,y)+\frac{\epsilon}{2}\|x-x_r\|^2,
\end{align}
with 
\begin{align}
\label{eq:aux_stage}
\ell_q(x,r_T,t,y):=&\overline{\ell}_x(x_r,r_T,t,y) \cdot  (x-x_r)+\|x-x_r\|_{Q^*(r_T,t)}^2,\\
\label{eq:Q_star}
Q^*(r_T,t):=&
\begin{pmatrix}
I_n\\k_{f,x}(x_r,r_T,t)
\end{pmatrix}^\top 
S(r,t)
\begin{pmatrix}
I_n\\k_{f,x}(x_r,r_T,t)
\end{pmatrix}\nonumber\\
&+2\epsilon I_n+\sum_{j=1}^m \ell_{u_j}k_{f,j,xx}(x_r,r_T,t),
\end{align}
where $(x_r,u_r)=r=r_T(0)$, $\ell_{\xi\xi}$ denotes the Hessian of $\ell$ w.r.t. $\xi=(x,u)$, $k_{f,x}$ the Jacobian of $k_f$ w.r.t. $x$, $k_{f,j,xx}$ the Hessian of the $j$-th component of $k_f$ w.r.t. $x$ and $\ell_{u_j}$ the Jacobian of $\ell$ w.r.t. the $j$-th component of $u$. 
\end{lemma}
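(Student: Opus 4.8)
The plan is to mimic the structure of \cite[Lemma~22--23]{amrit2011economic}, lifting it from the setpoint case to the reference-generic periodic case by treating $r_T$ (hence $(x_r,u_r)=r_T(0)$) as an additional parameter that enters continuously. First I would establish the regularity of $\overline{\ell}(x,r_T,t,y)=\ell(x,k_f(x,r_T,t),t,y)-\ell(r,t,y)$: since $\ell$ and $f$ are twice continuously differentiable in $(x,u)$ and $k_f$ is twice continuously differentiable in $x$ and continuous in $r_T$, the composition is twice continuously differentiable in $x$ by the chain rule, with Jacobian and Hessian depending continuously on $(r_T,t,y)$. I would then record that $k_f(x_r,r_T,t)=u_r$ forces $\overline{\ell}(x_r,r_T,t,y)=0$, so the zeroth-order term in a Taylor expansion around $x_r$ vanishes, and the linear term is exactly $\overline{\ell}_x(x_r,r_T,t,y)(x-x_r)$, matching the first summand of $\ell_q$ in \eqref{eq:aux_stage}.

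Next I would carry out the second-order Taylor expansion of $\overline{\ell}$ in $x$ around $x_r$. The Hessian $\overline{\ell}_{xx}(x_r,r_T,t,y)$, computed via the chain rule for the composition $x\mapsto\ell(x,k_f(x,r_T,t),\cdot)$, equals $\begin{pmatrix}I_n\\ k_{f,x}(x_r,r_T,t)\end{pmatrix}^\top \ell_{\xi\xi}(r,t,y)\begin{pmatrix}I_n\\ k_{f,x}(x_r,r_T,t)\end{pmatrix}+\sum_{j=1}^m \ell_{u_j}k_{f,j,xx}(x_r,r_T,t)$; this is precisely where the two-matrix-product term and the $\ell_{u_j}k_{f,j,xx}$ term in \eqref{eq:Q_star} come from. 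Choosing $S(r,t)\succeq \ell_{\xi\xi}(r,t,y)$ as in \eqref{eq:S_bound} — which is possible uniformly over the compact constraint sets and compact $\mathbb{Y}$ by continuity and boundedness of the Hessian — and adding the $2\epsilon I_n$ slack gives $Q^*(r_T,t)\succeq \tfrac12\overline{\ell}_{xx}(x_r,r_T,t,y)+\epsilon I_n$, so $\|x-x_r\|^2_{Q^*(r_T,t)}$ dominates the quadratic Taylor term plus an $\epsilon\|x-x_r\|^2$ margin. The remaining gap between $\ell_q$ and $\overline{\ell}+\tfrac{\epsilon}{2}\|x-x_r\|^2$ is then a higher-order (Taylor) remainder: using twice-continuous-differentiability and local Lipschitz continuity of the third-order behaviour (or, more elementarily, uniform continuity of the Hessian on a compact neighbourhood), the remainder is $o(\|x-x_r\|^2)$ uniformly in $(r_T,t,y)$, hence can be absorbed into the spare $\tfrac{\epsilon}{2}\|x-x_r\|^2+\epsilon\|x-x_r\|^2$ once $\|x-x_r\|$ is small enough. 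Finally, I would translate "small $\|x-x_r\|$" into "small $V_\delta(x,r_T,t)$" using the lower bound $\alpha_2(\|x-x_r\|)\le V_\delta(x,r_T,t)$ from \eqref{eq:term_bound}: picking ${\alpha}_1>0$ so that $V_\delta(x,r_T,t)\le{\alpha}_1$ implies $\|x-x_r\|\le\alpha_2^{-1}({\alpha}_1)$ lies inside the neighbourhood where the remainder bound holds, and where (by convexity of the sublevel sets of $V_\delta$ in $x$) the whole segment from $x_r$ to $x$ stays in the region on which the expansion is valid, yields \eqref{eq:aux_bound}.

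The main obstacle I anticipate is making all of these bounds \emph{uniform} in the reference trajectory $r_T$, the time $t$, and the parameter $y$ simultaneously: one needs the Taylor remainder estimate to hold with a single radius and a single modulus of continuity across the whole family, which requires a compactness argument over $r_T\in\mathcal{Z}_T(t)$, $t\in\{0,\dots,T-1\}$ (periodicity), and $y\in\mathbb{Y}$, together with the assumed local Lipschitz continuity of $\ell$, $f$ (and the twice-continuous-differentiability of $k_f$ in $x$, continuity in $r_T$) to guarantee the relevant derivatives are continuous — hence bounded and uniformly continuous — on the resulting compact set. A secondary technical point is verifying that $Q^*(r_T,t)$ is well-defined and that the indefinite term $\sum_j \ell_{u_j}k_{f,j,xx}$ does not destroy the dominance: this is handled by the $2\epsilon I_n$ slack and by noting that $\ell_q$ is allowed to be indefinite (only the inequality \eqref{eq:aux_bound} is claimed, not positive definiteness of $Q^*$). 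Once uniformity is secured, the rest is the standard second-order Taylor bookkeeping already performed in \cite{amrit2011economic}, now with $r_T$ carried along as a continuous parameter.
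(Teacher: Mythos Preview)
Your plan matches the paper's proof in structure and substance: chain rule to compute $\overline{\ell}_{xx}$, compactness over $\mathcal{Z}_r(t)\times\{0,\dots,T-1\}\times\mathbb{Y}$ to produce a uniform $S\succeq\ell_{\xi\xi}$, the observation that the construction of $Q^*$ yields $Q^*\succeq\overline{\ell}_{xx}(x_r,\cdot)+2\epsilon I_n$, and the use of convex sublevel sets so the segment $x_r+s\Delta x$ stays in the region where the Hessian bound is valid. The uniformity concern you flag is exactly the one the paper handles, and by the same compactness-plus-periodicity argument.

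One bookkeeping point to tighten: the construction gives $Q^*\succeq\overline{\ell}_{xx}(x_r)+2\epsilon I_n$, not $Q^*\succeq\tfrac12\overline{\ell}_{xx}(x_r)+\epsilon I_n$ as you wrote; since $\overline{\ell}_{xx}$ can be indefinite, the latter does not follow from the former, and a bare Taylor-at-$x_r$ comparison of $\Delta x^\top Q^*\Delta x$ against the quadratic term $\tfrac12\Delta x^\top\overline{\ell}_{xx}(x_r)\Delta x$ plus an $o(\|\Delta x\|^2)$ remainder does not close. The paper sidesteps this by first using continuity of the Hessian (your ``uniform continuity of the Hessian on a compact neighbourhood'') to upgrade the pointwise bound to $Q^*\succeq\overline{\ell}_{xx}(x,\cdot)+\epsilon I_n$ for \emph{all} $x$ in the sublevel set, and then applying the integral mean-value form $\overline{\ell}(x)=\overline{\ell}_x(x_r)\Delta x+\int_0^1(1-s)\Delta x^\top\overline{\ell}_{xx}(x_r+s\Delta x)\Delta x\,ds$ so that the comparison is made pointwise along the segment rather than at the base point with a separate remainder. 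That reordering is the only substantive difference from your sketch; once you make it, your argument and the paper's coincide.
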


\begin{proof}
%1.----------------------------------
We point out that the derivative of $\overline{\ell}$ w.r.t. $x$ is the total derivative of $\ell$ w.r.t. $x$, for $u=k_f$. Hence,
the Jacobian and Hessian of $\overline{\ell}$ are given by
\begin{align*}
\overline{\ell}_x=&\ell_\xi
\begin{pmatrix}I_n\\k_{f,x}\end{pmatrix},\\
%&\dfrac{\partial^2}{\partial x_ix_j}\overline{\ell}
%=\dfrac{\partial }{\partial x_j}\left[\dfrac{\partial \ell}{\partial x_i}+\sum_k \dfrac{\partial \ell}{\partial u_k}\dfrac{\partial u_k}{\partial x_i}\right]\\
%=&\dfrac{\partial^2 \ell}{\partial x_i x_j}+\sum_k\dfrac{\partial^2 \ell}{\partial u_k\partial x_j}\dfrac{\partial u_k}{\partial x_i}+\sum_{r,k}\dfrac{\partial^2\ell}{\partial u_k\partial u_r}\dfrac{\partial u_k}{\partial x_i}\dfrac{\partial u_r}{\partial x_j}\\
%&+\sum_k\dfrac{\partial \ell}{\partial u_k}\dfrac{\partial^2 u_k}{\partial x_i\partial x_j}\\
%\\
\overline{\ell}_{xx}=&\begin{pmatrix}I_n&k_{f,x}^\top\end{pmatrix}\ell_{\xi\xi}\begin{pmatrix}I_n&k_{f,x}^\top\end{pmatrix}^\top
+\sum_{j=1}^m \ell_{u_j}k_{f,j,xx},
\end{align*}
where $\ell_\xi\in\mathbb{R}^{1\times (n+m)}$, $\ell_{\xi\xi}\in\mathbb{R}^{(n+m)\times(n+m)}$ denote the Jacobian and Hessian of $\ell$ w.r.t. $\xi=(x,u)$.
%2.----------------------------------
Twice continuous differentiability of $\ell$ and compactness imply that there exists a finite constant 
\begin{align*}
c=\sup_{t\in\mathbb{N},r\in\mathcal{Z}_r(t),y\in\mathbb{Y}}\lambda_{\max}(\ell_{\xi\xi}(r,t,y)).
\end{align*}
Thus the matrix $S=(\max\{c,0\})I_{n+m}$ is positive semi-definite and satisfies $S\succeq \ell_{\xi\xi}$. 
%3.------------------------------------------
The construction in~\eqref{eq:Q_star}, the definition of the Hessian $\overline{\ell}_{xx}$ and $S\succeq \ell_{\xi\xi}$ directly imply  $Q^*(r_T,t)\succeq \overline{\ell}_{xx}(x_r,r_T,t)+2\epsilon I_n$. 
Similar to~\cite[Lemma~22]{amrit2011economic}, there exists a small enough constant ${\alpha}_1>0$ (uniform in $r_T,t,y$), such that
$Q^*(r_T,t)\succeq \overline{\ell}_{xx}(x,r_T,t,y)+\epsilon I_n$, $\forall t\in\mathbb{N}$, $y\in \mathbb{Y}$, $r_T\in\mathcal{Z}_T(t)$, $V_{\delta}(x,r_T,t)\leq \alpha_1$. 
Abbreviate $\Delta x=x-x_r$, which implies $\ell_q=\|\Delta x\|_{Q^*}^2+\overline{\ell}_x \Delta x$. 
Convexity of the sublevel sets of $V_{\delta}$ implies that $V_{\delta}(x_r+s\Delta x,r_T,t)\leq \alpha_1$  for all $s\in[0,1]$ and any $V_{\delta}(x_r+\Delta x,r_T,t)\leq \alpha_1$. 
Hence, we can use the mean value theorem for vector functions~\cite[Prop.~A.11 (b)]{rawlings2017model}, similar to \cite[Lemma~23]{amrit2011economic}, to obtain
\begin{align}
\label{eq:eco_1}
&\ell_q(x,r_T,t,y)-\overline{\ell}(x,r_T,t,y)\\
=&\int_0^1 (1-s)\Delta x^\top (Q^*(r_T,t)-\overline{\ell}_{xx}(x_r+s\cdot \Delta x,r_T,t,y))\Delta x ds\nonumber\\
\geq& \int_0^1 (1-s)\epsilon\|\Delta x\|^2 ds=\epsilon/2  \|\Delta x\|^2.\nonumber
\end{align}
\end{proof}
Basically, $\ell_q$ is  a local linear-quadratic  over approximation of the stage cost $\ell$. 
Hence, we will formulate a sufficient condition for \eqref{eq:term_dec} using the auxiliary stage cost $\ell_q$. 
We point out that Lemma~\ref{lemma:auxillary_cost} does not impose any definiteness conditions on the Hessian of the stage cost $\ell$, but instead upper bounds the Hessian using the  positive semi-definite matrix $S$.

\subsubsection*{Sufficient conditions based on the linearization}
We denote the Jacobian of $f$ evaluated around an arbitrary point $r\in\mathcal{Z}_r(t)$ at some time $t\in\mathbb{N}$ by
\begin{align}
\label{eq:Jacob_dyn}
A(r,t):=\left.\left[\dfrac{\partial f}{\partial x}\right]\right|_{(x,u)=r},~ 
B(r,t):=\left.\left[\dfrac{\partial f}{\partial u}\right]\right|_{(x,u)=r}.
\end{align}
Given some periodic trajectory $r_T(\cdot|t)\in\mathcal{Z}_T(t)$, we denote the Jacobian w.r.t. $x$ of the system $f$ in closed loop with the terminal control law $k_f$ by 
\begin{align*}
&A_{cl}(r_T(\cdot|t),t)
:=A(r_T(0|t),t)+B(r_T(0|t),t)k_{f,x}(r_T,t).
\end{align*}
In the following, we introduce a corresponding adjoint periodic trajectory $p(k|t)$, which can be computed online based on the following set of $n\cdot T$ linear (in $p$) equality constraints
\begin{align}
\label{eq:compute_p}
&A_{cl}^\top(\mathcal{R}_T^j r_T(\cdot|t),t+j) p(j+1|t)\\
=&p(j|t)-\overline{\ell}_x^\top(x_{r_T}(j|t),\mathcal{R}_T^j r_T(\cdot|t),t+j,y),~j=0,\dots,T-1,\nonumber
\end{align}
with $p(N|t)=p(0|t)$. 
In the setpoint case ($T=1$), this reduces to $p^\top (A_{cl}-I)=-\overline{\ell}_x^\top$, similar to~\cite{amrit2011economic,muller2014performance}.
Similar to the adjoints used in \cite{zanon2018economic}, this vector $p$ corrects the effect of $\overline{\ell}_x$, the gradient of the stage cost.

The following proposition shows that such an online computed adjoint vector $p$ in combination with an offline computed matrix valued function $P$ provides a suitable terminal cost for dynamic operation with economic cost. 
\begin{proposition}
\label{prop:term_generic}
Suppose the conditions in Lemma~\ref{lemma:auxillary_cost} hold.
Assume further that there exists a positive definite $T$-periodic matrix $P(r_T,t)$, continuous in $r$, such that for any $t\in\mathbb{N}$, $r_T\in\mathcal{Z}_T(t)$, the following matrix inequality is satisfied
\begin{align}
\label{eq:P_ineq}
&A^\top_{cl}(r_T,t) P(\mathcal{R}_Tr,t+1)A_{cl}(r_T,t)-P(r_T,t) \\
\preceq& -Q^*(r_T,t)-\tilde{\epsilon}I_n, \nonumber
\end{align}
with $\tilde{\epsilon}>0$. 
Then  for any periodic reference $r_T\in\mathcal{Z}_T(t)$, $t\in\mathbb{N}$, the conditions~\eqref{eq:compute_p} have a unique solution $p(\cdot|t)$.
In addition, there exists a constant ${\alpha}_1$, such that the terminal cost 
\begin{align}
\label{eq:V_f}
V_f(x,r_T,t,y):=\|x_{r}-x\|_{P(r_T,t)}^2+p^\top(0|t) (x-x_{r}),
\end{align}
with $p$ according to~\eqref{eq:compute_p} and $x_r=x_{r_T}(0)$, satisfies condition~\eqref{eq:term_dec} with $\mathcal{X}_f(r_T,t):=\{x\in\mathbb{R}^n|~V_{\delta}(x,r_T,t)\leq \alpha_1\}$.
\end{proposition}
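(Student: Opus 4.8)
The plan is to prove the two assertions in turn. For the unique solvability of~\eqref{eq:compute_p}, I would fix $r_T\in\mathcal{Z}_T(t)$ and $t$ and read~\eqref{eq:compute_p} as a square linear system in the stacked vector $(p(0|t),\dots,p(T-1|t))$. Its homogeneous part forces $p(j|t)=A_{cl}^\top(\mathcal{R}_T^j r_T,t+j)\,p(j+1|t)$, so composing cyclically with $p(T|t)=p(0|t)$ gives $p(0|t)=\Phi^\top p(0|t)$, where $\Phi=A_{cl}(\mathcal{R}_T^{T-1} r_T,t+T-1)\cdots A_{cl}(r_T,t)$ is the monodromy matrix of the $T$-periodic system $z(j+1)=A_{cl}(\mathcal{R}_T^j r_T,t+j)z(j)$. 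Iterating~\eqref{eq:P_ineq} over one period — a standard periodic discrete-time Lyapunov argument using $P\succ 0$ and the negative-definite right-hand side of~\eqref{eq:P_ineq} — shows that $\Phi$ is Schur; hence $1$ is not an eigenvalue of $\Phi$, $I_n-\Phi^\top$ is invertible, and~\eqref{eq:compute_p} has a unique solution $p(\cdot|t)$. Comparing~\eqref{eq:compute_p} for the shifted reference $\mathcal{R}_T r_T$ at time $t+1$ with~\eqref{eq:compute_p} for $r_T$ at $t$, and invoking uniqueness, yields the shift identity $p(j|t+1)=p(j+1|t)$; in particular $p(0|t+1)=p(1|t)$, which I use below.

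For~\eqref{eq:term_dec}, I would restrict to $x\in\mathcal{X}_f(r_T,t)=\{x:V_{\delta}(x,r_T,t)\leq\alpha_1\}$ with $\alpha_1$ small, so $\|x-x_r\|\leq\alpha_2^{-1}(\alpha_1)$ is small, uniformly in $(r_T,t)$, by~\eqref{eq:term_bound}. Set $u=k_f(x,r_T,t)$, $x^+=f(x,u,t)=:g(x)$, $r_T^+=\mathcal{R}_T r_T$, $x_r^+=g(x_r)$, and $\delta=x-x_r$. Since $k_f(x_r,r_T,t)=u_r$, a second-order Taylor expansion of $g$ about $x_r$ gives $x^+-x_r^+=A_{cl}(r_T,t)\delta+e$ with $\|e\|\leq L_g\|\delta\|^2$ and $L_g$ uniform (by $C^2$-regularity of $f,k_f$ and compactness of $\mathcal{Z}_T(t)$, $\mathbb{Y}$). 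Substituting into $V_f$ from~\eqref{eq:V_f}, bounding the $P$-weighted part with~\eqref{eq:P_ineq}, and using~\eqref{eq:compute_p} at $j=0$ — i.e. $A_{cl}^\top(r_T,t)p(1|t)-p(0|t)=-\overline{\ell}_x^\top(x_r,r_T,t,y)$ — together with $p(0|t+1)=p(1|t)$, I obtain
\begin{align*}
&V_f(x^+,r_T^+,t+1,y)-V_f(x,r_T,t,y)\\
\leq&-\|\delta\|_{Q^*(r_T,t)}^2-\overline{\ell}_x(x_r,r_T,t,y)\delta-\tilde{\epsilon}\|\delta\|^2+R(\delta)\\
=&-\ell_q(x,r_T,t,y)-\tilde{\epsilon}\|\delta\|^2+R(\delta),
\end{align*}
the last line being the definition~\eqref{eq:aux_stage} of $\ell_q$, and $R(\delta)$ collecting the Taylor remainders of $V_f$ (the cubic cross terms of the $P$-weighted part and the adjoint-weighted term $p(1|t)^\top e$).

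It remains to pass to $\overline{\ell}$. By~\eqref{eq:aux_bound}, $\ell_q(x,r_T,t,y)\geq\overline{\ell}(x,r_T,t,y)+\frac{\epsilon}{2}\|\delta\|^2$, so
\begin{align*}
&V_f(x^+,r_T^+,t+1,y)-V_f(x,r_T,t,y)\\
\leq&-\overline{\ell}(x,r_T,t,y)-\left(\frac{\epsilon}{2}+\tilde{\epsilon}\right)\|\delta\|^2+R(\delta).
\end{align*}
Since $u=k_f(x,r_T,t)$, one has $-\overline{\ell}(x,r_T,t,y)=-\ell(x,u,t,y)+\ell(r_T(0),t,y)$, which is exactly the right-hand side of~\eqref{eq:term_dec}. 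Hence~\eqref{eq:term_dec} follows provided $\alpha_1$ (and the free parameter $\epsilon$ of Lemma~\ref{lemma:auxillary_cost}) are chosen so that $R(\delta)\leq(\frac{\epsilon}{2}+\tilde{\epsilon})\|\delta\|^2$ for all admissible $(x,r_T,t,y)$.

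The main obstacle is exactly this last uniform bound on $R(\delta)$. Its $\mathcal{O}(\|\delta\|^3)$ part is controlled by shrinking $\alpha_1$, using uniform $C^2$-bounds on $f,k_f$ over the compact sets, as in Lemma~\ref{lemma:auxillary_cost} and~\cite{amrit2011economic}. The delicate piece is the adjoint-weighted term $p(1|t)^\top e$, whose leading part is quadratic in $\delta$, of the form $\frac{1}{2}\delta^\top\big(\sum_i p_i(1|t)g_{i,xx}(x_r,r_T,t)\big)\delta$; it is not annihilated by shrinking the terminal set, so one must use a uniform bound on $\|p\|$ (from continuity of $p$ in $r_T$ via~\eqref{eq:compute_p} and compactness of $\mathcal{Z}_T(t)$) and enlarge the margin $\frac{\epsilon}{2}+\tilde{\epsilon}$ by taking $\epsilon$ large enough in Lemma~\ref{lemma:auxillary_cost}, which in addition keeps $Q^*(r_T,t)\succeq 0$ and hence makes the right-hand side of~\eqref{eq:P_ineq} negative definite, as used in the first step. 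With such an $\alpha_1$, and noting that boundedness of $V_f$ and positive invariance of $\mathcal{X}_f$ follow from boundedness of $p,P$ and the decrease of $V_{\delta}$, the candidate~\eqref{eq:V_f} satisfies~\eqref{eq:term_dec} on $\mathcal{X}_f(r_T,t)=\{x:V_{\delta}(x,r_T,t)\leq\alpha_1\}$.
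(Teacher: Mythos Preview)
Your argument matches the paper's: Part~I proceeds via Schur stability of the monodromy matrix obtained by iterating~\eqref{eq:P_ineq}, and Part~II combines the first-order Taylor expansion of $x^+-x_r^+$ with~\eqref{eq:P_ineq} for the quadratic piece, \eqref{eq:compute_p} for the linear (adjoint) piece, and~\eqref{eq:aux_bound} to pass from $\ell_q$ to $\overline{\ell}$, exactly as you outline. The only cosmetic difference is bookkeeping of the slack: the paper spends the $\tilde{\epsilon}$ margin entirely on the nonlinear remainder of the quadratic part (yielding $\|\Delta x^+\|_{P^+}^2-\|\Delta x\|_P^2\leq -\|\Delta x\|_{Q^*}^2$ directly) and then bounds the adjoint remainder $\|p(1)\|\,\|\Phi_{r_T,t}(\Delta x)\|\leq \tfrac{\epsilon}{2}\|\Delta x\|^2$ using the uniform bound on $\|p\|$, which cancels exactly against the $\tfrac{\epsilon}{2}$ slack of~\eqref{eq:aux_bound} --- so your concern about that quadratic term is resolved in the paper the same way you propose, just with $\epsilon/2$ alone rather than $\epsilon/2+\tilde{\epsilon}$.
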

\begin{proof}
%1.----------
\textbf{Part I. } 
Condition~\eqref{eq:P_ineq} ensures that the linearized (time-varying) dynamics along the  periodic trajectory $r_T$ are (uniformly) exponentially stable, which implies
\begin{align}
\label{eq:A_det_1}
\det(I_n-\Pi_{j=0}^{T-1} A^\top_{cl}(\mathcal{R}_T^jr_T,t+j))>0.
\end{align}
Thus, the constraints~\eqref{eq:compute_p} have a unique solution $p$ for any $r_T\in\mathcal{Z}_T(t)$, compare also the reformulation of~\eqref{eq:compute_p} in Remark~\ref{rk:compute_p}.\\
%2.----------
\textbf{Part II. } 
Denote $\Delta x=x-x_r$. 
The first order Taylor approximation at $x=x_{r}$ yields
\begin{align*}
\Delta x^+=&f(x,k_{f}(x,r_T,t),t)-f(x_r,u_r,t)\\
=&A_{cl}(r_T,t)\Delta x+\Phi_{r_T,t}(\Delta x),
\end{align*}
with the remainder term $\Phi_{r_T,t}$. 
Twice continuous differentiability of $f$ and compact constraints imply that the remainder term is uniformly Lipschitz continuous in the terminal set, i.e., $\|\Phi_{r_T,t}(\Delta x)\|\leq L_{\Phi,{\alpha}_1}\|\Delta\|$ for all $t\geq 0$, with a constant $L_{\Phi,{\alpha}_1}$ arbitrary small for ${\alpha}_1$ arbitrary small. 
%2.----------
Using this bound in combination with condition~\eqref{eq:P_ineq} implies that there exists a sufficiently small constant ${\alpha}_1>0$, such that the nonlinear system (locally) satisfies
\begin{align}
\label{eq:term_quad}
&\|\Delta x^+\|_{P(\mathcal{R}_Tr_T,t+1)}^2-\|\Delta x\|_{P(r_T,t)}^2
\leq -\|\Delta x\|_{Q^*(r_T,t)}^2
\end{align}
for all $x\in\mathcal{X}_f(r_T,t)$, compare~\cite[Lemma~1]{JK_QINF} for details. \\
%3.----- linear term
Given that $T$ is finite, $\overline{\ell}_x$ uniformly bounded and condition~\eqref{eq:A_det_1} holds, the vector $p$ admits a uniform bound.
Using the definition of $p$, we get
\begin{align}
\label{eq:term_lin}
&p^\top  (1)\Delta x^+\\
\leq& p^\top (1)A_{cl}(r_T,t)\Delta x+\|p(1)\|\|\Phi_{r_T,t}(\Delta x)\|\nonumber\\
\stackrel{\eqref{eq:compute_p}}{=}&p^\top(0) \Delta x-\overline{\ell} _x(x_{r},r_T,t,y)\Delta x+\|p(1)\|\|\Phi_{r_T,t}(\Delta x)\|\nonumber\\
\leq& p^\top(0) \Delta x-\overline{\ell} _x(x_{r},r_T,t,y)\Delta x+\epsilon/2\|\Delta x\|^2,\nonumber
\end{align}
where the last inequality holds for a sufficiently small constant ${\alpha}_1>0$, given the uniform bound on $\|p\|$ and the properties of the remainder term $\Phi_{r_T,t}$. 
%----------------------- combine
By combining~\eqref{eq:term_quad} with \eqref{eq:term_lin} and using the auxiliary stage cost from Lemma~\ref{lemma:auxillary_cost}, the terminal cost~\eqref{eq:V_f} satisfies
\begin{align*}
&V_f(x^+,\mathcal{R}_T r_T,t+1,y)-V_f(x,r_T,t,y)\\
\leq& -\|\Delta x\|_{Q^*(r_T,t)}^2-\overline{\ell}_x(x_{r},r_T,t,y) \Delta x+\dfrac{\epsilon}{2}\|\Delta x\|^2\\
\stackrel{\eqref{eq:aux_bound}}{\leq}& -\overline{\ell}(x,r_T,t,y),
\end{align*}
and hence condition~\eqref{eq:term_dec}.
\end{proof}
\begin{corollary}
\label{corol:term_generic}
Suppose that  the stage cost $\ell$ and the dynamics $f$ are locally Lipschitz continuous and twice continuously differentiable w.r.t $(x,u)$. 
Assume further that there exists a positive definite matrix $P(r_T,t)$ and a matrix $k_{f,x}(r_T,t)$, both continuous in $r_T$ and T-periodic in $t$, such that for any $t\in\mathbb{N}$, $r_T\in\mathcal{Z}_T(t)$, the matrix inequality~\eqref{eq:P_ineq} is satisfied with some $\tilde{\epsilon}>0$. 
Then there exists a  function ${\alpha}(r_T)$, such that the terminal controller $k_f(x,r_T,t)=u_r+k_{f,x}(x-x_r)$, the terminal set $\mathcal{X}_f(r_T,t)=\{x\in\mathbb{R}^n|~\|x-x_r\|_{P(r_T,t)}^2\leq \alpha(r_T)\}$ and the terminal cost $V_f$ according to~\eqref{eq:V_f} satisfy Assumptions~\ref{ass:term_simple} and \ref{ass:term}.
\end{corollary}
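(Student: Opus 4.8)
The plan is to derive the corollary from Proposition~\ref{prop:term_generic} and Lemma~\ref{lemma:auxillary_cost} by supplying an explicit incremental Lyapunov function and checking that the affine feedback meets all of their hypotheses. Concretely, I would take $V_\delta(x,r_T,t):=\|x-x_r\|_{P(r_T,t)}^2$ with $(x_r,u_r)=r_T(0)$, and define the terminal set size $\alpha(r_T)\equiv\overline\alpha$ as a small positive \emph{constant} to be fixed at the very end; this makes the requirements $\alpha(r_T)=\alpha(r_T^+)\in[\underline\alpha,\overline\alpha]$ and $|\alpha(r_T)-\alpha(\tilde r_T)|\le\alpha_5(\|r_T-\tilde r_T\|)$ in~\eqref{eq:term_alpha_cont} trivial (with, e.g., $\alpha_5(s)=s$). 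Since $P$ is positive definite, continuous in $r$ and $T$-periodic in $t$, compactness of $\mathcal{Z}_r$ together with $t\in\{0,\dots,T-1\}$ yields uniform bounds $\underline\lambda I\preceq P(r_T,t)\preceq\overline\lambda I$, so~\eqref{eq:term_bound} holds with $\alpha_2(s)=\underline\lambda s^2$, $\alpha_3(s)=\overline\lambda s^2$; the continuity~\eqref{eq:term_cont} follows from uniform continuity of the maps $r_T\mapsto x_{r_T}(0)$ and $r_T\mapsto P(r_T,t)$ on the compact reference set, together with the uniform bound $\|x-x_r\|\le\sqrt{\overline\alpha/\underline\lambda}$ on $\mathcal{X}_f$.

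For the incremental decrease~\eqref{eq:term_increm} I would reuse the estimate~\eqref{eq:term_quad} established in the proof of Proposition~\ref{prop:term_generic}: Taylor-expanding the closed loop $x\mapsto f(x,k_f(x,r_T,t),t)$ about $x_r$ gives $\Delta x^+=A_{cl}(r_T,t)\Delta x+\Phi_{r_T,t}(\Delta x)$ with remainder Lipschitz constant uniform in $(r_T,t)$ and arbitrarily small for a sufficiently small set size, which combined with the assumed matrix inequality~\eqref{eq:P_ineq} gives $V_\delta(x^+,r_T^+,t+1)-V_\delta(x,r_T,t)\le-\|\Delta x\|_{Q^*(r_T,t)}^2$ for all $x\in\mathcal{X}_f(r_T,t)$. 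The point specific to this corollary is that for the affine controller $k_f(x,r_T,t)=u_r+k_{f,x}(x-x_r)$ the Hessian terms $k_{f,j,xx}$ in~\eqref{eq:Q_star} vanish, hence $Q^*(r_T,t)\succeq 2\epsilon I_n\succ0$ and the right-hand side is $\le-2\epsilon\|x-x_r\|^2$; this yields~\eqref{eq:term_increm} with $\alpha_1(s)=2\epsilon s^2\in\mathcal{K}_\infty$, and positive invariance~\eqref{eq:term_PI} is then immediate from $V_\delta(x^+,r_T^+,t+1)\le V_\delta(x,r_T,t)\le\alpha(r_T)=\alpha(r_T^+)$. For the constraint condition~\eqref{eq:term_con}, I would use $\mathcal{Z}_r(t)\subseteq\text{int}(\mathcal{Z}(t))$ and compactness to obtain a uniform $\rho>0$ with $\mathbb{B}_\rho(\mathcal{Z}_r(t))\subseteq\mathcal{Z}(t)$, and note that on $\mathcal{X}_f$ one has $\|(x,k_f(x,r_T,t))-r_T(0)\|\le\sqrt{1+\bar k^2}\,\|x-x_r\|$ with $\bar k=\sup_{r_T,t}\|k_{f,x}(r_T,t)\|<\infty$, which is below $\rho$ once the set size is small enough.

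For the terminal-cost decrease~\eqref{eq:term_dec} I would invoke Proposition~\ref{prop:term_generic}: the hypotheses of Lemma~\ref{lemma:auxillary_cost} hold because the previous steps show that $V_\delta,k_f$ satisfy Assumption~\ref{ass:term}, the sublevel sets of $V_\delta$ are ellipsoids (hence convex in $x$), the affine $k_f$ is twice continuously differentiable in $x$, continuous in $r_T$ and satisfies $k_f(x_r,r_T,t)=u_r$, and $\ell,f$ have the assumed smoothness; the matrix inequality~\eqref{eq:P_ineq} is assumed. Proposition~\ref{prop:term_generic} then yields a unique, uniformly bounded adjoint $p$ solving~\eqref{eq:compute_p} and shows that $V_f$ from~\eqref{eq:V_f} satisfies~\eqref{eq:term_dec} on a possibly smaller terminal set; boundedness of $V_f$ follows from the uniform bounds on $\|p\|$ and on $\|x-x_r\|$ over $\mathcal{X}_f$. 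Choosing $\overline\alpha=\underline\alpha$ equal to the minimum of the finitely many smallness thresholds collected along the way then makes all estimates simultaneously valid, completing the verification of Assumptions~\ref{ass:term_simple} and~\ref{ass:term}.

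The only genuine difficulty is precisely this last bookkeeping step: one must ensure that every ``for a sufficiently small terminal set'' threshold (from the Taylor remainder $\Phi_{r_T,t}$, from~\eqref{eq:term_con}, and from Lemma~\ref{lemma:auxillary_cost}/Proposition~\ref{prop:term_generic}) is uniform in $(r_T,t)$, so that a single strictly positive constant $\overline\alpha$ works for all references. This is exactly where compactness of $\mathcal{Z}_r$ and $T$-periodicity in $t$ enter; everything else is routine.
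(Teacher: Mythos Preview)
Your proposal is correct and follows essentially the same approach as the paper: take $V_\delta=\|x-x_r\|_{P(r_T,t)}^2$, verify the $\mathcal{K}_\infty$ bounds and continuity from positive definiteness and continuity of $P$, obtain~\eqref{eq:term_increm} from~\eqref{eq:term_quad} together with positive definiteness of $Q^*$, invoke Proposition~\ref{prop:term_generic} for~\eqref{eq:term_dec}, use $\mathcal{Z}_r\subset\text{int}(\mathcal{Z})$ with bounded $k_{f,x}$ for~\eqref{eq:term_con}, and finally fix a constant $\alpha$ as the minimum of the collected smallness thresholds. Your explicit remark that $k_{f,j,xx}=0$ for the affine feedback (hence $Q^*\succeq 2\epsilon I_n$ directly from~\eqref{eq:Q_star}) is a nice clarification that the paper leaves implicit, but otherwise the arguments coincide.
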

\begin{proof}
%3.----------------Terminal set
Given that $P$ and $Q^*+\tilde{\epsilon}I_n$ are positive definite, the conditions~\eqref{eq:term_increm}--\eqref{eq:term_bound} in Assumption~\ref{ass:term} are satisfied with the incremental Lyapunov function $V_{\delta}(x,r_T,t)=\|x-x_r\|_{P(r_T,t)}^2$ and quadratic functions $\alpha_1,\alpha_2,\alpha_3\in\mathcal{K}_{\infty}$. 
Convexity of the terminal set $\mathcal{X}_f$ w.r.t. $x$ (compare conditions Lemma~\ref{lemma:auxillary_cost}) follows from $V_{\delta}$ quadratic in $x$.
Condition~\eqref{eq:term_cont} follows from $V_{\delta}$ quadratic and the assumed continuity of $P$ w.r.t. $r_T$.
%alpha
Conditions~\eqref{eq:term_PI}, \eqref{eq:term_dec}, \eqref{eq:term} hold for any $\alpha\leq \alpha_1$, using Prop.~\ref{prop:term_generic}. 
Furthermore, given that $\mathcal{Z}_r(t)\in\text{int}(\mathcal{Z}(t))$ and $k_{f,x}$ bounded, there exists a small enough constant $\alpha_2>0$, such that conditions~\eqref{eq:term_con}, \eqref{eq:term_alpha_cont}  hold for any $\alpha\leq \alpha_2$. 
Hence, choosing the constant terminal set size $\alpha(r_T)=\min\{\alpha_1,\alpha_2\}$ satisfies all the conditions (condition \eqref{eq:term_alpha_cont} is trivially satisfied). 
\end{proof}

With this result, we can directly specify a procedure to compute suitable terminal ingredients.
First, a symbolic expression for the Jacobian $A$, $B$, $\ell_{\xi}$ and the Hessian $\ell_{\xi\xi}$ are computed. 
%S
Then a positive semi-definite matrix $S$ is computed, which satisfies~\eqref{eq:S_bound}.
This can either be achieved with a constant matrix $S$ (c.f. proof Lemma~\ref{lemma:auxillary_cost}) or by computing a suitably parametrized matrix $S$ using linear matrix inequalities (LMIs).

%P
Given $S$, we have to compute a parametrized matrix $P$, such that condition~\eqref{eq:P_ineq} holds. 
Suppose we want to compute a feedback of the form $k_f=u_r+k_{f,x}\Delta x$ with some parametrized feedback gain $k_{f,x}$ ($k_{f,xx}=0$).
In this case, condition~\eqref{eq:P_ineq} with $Q^*$ according to~\eqref{eq:Q_star} is equivalent to~\cite[Inequality~(36)]{JK_QINF}
with the following (output) tracking stage cost
\begin{align}
\label{eq:stage_output}
\tilde{\ell}=\|(C+Dk_{f,x})\Delta x\|_S^2+(2\epsilon+\tilde{\epsilon})\|\Delta x\|^2\\
C=\begin{pmatrix}I_n\\0_m\end{pmatrix}\in\mathbb{R}^{(n+m)\times n},~ D=\begin{pmatrix}0_n\\I_m\end{pmatrix}\in\mathbb{R}^{(n+m)\times m}.\nonumber
\end{align}
Hence, we can use the result in~\cite[Lemma~6, Prop.~6]{JK_QINF} to compute suitable matrices $k_{f,x}$ and $P$. 
In particular, in~\cite{JK_QINF} the matrices $P,k_{f,x},A,B,S$ are parametrized based on a quasi-LPV (linear-parameter-varying system) approach and the conditions are transformed into LMIs, that can be efficiently computed offline.

%p
Given that the vector $p$ needs to satisfy condition~\eqref{eq:compute_p} with \textit{equality}, a similar parametrized offline computation for $p$ seems intractable (with the exception of linear systems, compare Section~\ref{sec:convex}).
Hence, we simply add\footnote{%
Due to the prediction horizon $N$ the time index $t$ changes to $t+N$ in~\eqref{eq:compute_p}.
}
the constraint~\eqref{eq:compute_p} to the MPC optimization problem~\eqref{eq:MPC} and compute $p(\cdot|t)$ online.

%\alpha
Finally, regarding the terminal set size $\alpha$, we first compute the constant $\alpha_1>0$, such that \eqref{eq:term_dec} holds for all $V_{\delta}\leq \alpha_1$, e.g. using Algorithm~1 from~\cite{JK_QINF}.
There are two options to compute a terminal set size $\alpha$ that also ensures constraint satisfaction~\eqref{eq:term_con}.
The definition of $\mathcal{Z}_r\subseteq\text{int}(\mathcal{Z})$ can be used to compute a constant $\alpha\in(0,\alpha_1]$, similar to the optimization problem~(24) in~\cite{JK_QINF}. 
However, such a constant $\alpha$ depends on the choice of $\mathcal{Z}_r$ and thus can yield arbitrary small values $\alpha$ (and thus slow convergence of $r_T$, compare Lemma~\ref{lemma:change_r}), or requires restrictive constraints on the set of periodic trajectories $\mathcal{Z}_T$.
This problem can be alleviated by computing a reference trajectory dependent terminal set size $\alpha(r_T)\in[\underline{\alpha},\overline{\alpha}]$ online, which can be done by using an additional scalar optimization variable $\alpha$ in \eqref{eq:MPC}, compare \cite[Sec.~3.3]{JK_periodic_automatica} for details.

The overall design procedure is summarized in Algorithm~\ref{alg:offline}.
\begin{algorithm}[H]
\caption{Offline computation}
\label{alg:offline}
\begin{algorithmic}[1]
\State  Compute Jacobian, Hessian $A$, $B$, $\ell_\xi$, $\ell_{\xi\xi}$. 
\State Determine matrix $S\succeq 0$, such that $S\succeq\ell_{\xi\xi}$~\eqref{eq:S_bound}. 
\State  Compute matrix $P$ (and possibly $k_f$) such that~\eqref{eq:P_ineq} holds using LMIs, compare~\cite[App.~D]{JK_QINF}.
\State  Compute maximal terminal set size $\alpha_1$ using~\cite[Alg.~1]{JK_QINF}. 
\State Derive $\alpha(r)\in(0,\alpha_1]$ for constraint satisfaction:
\Statex  ~a) Compute constant $\alpha_2>0$ using~\cite[Equation~(24)]{JK_QINF}\Statex~ and set $\alpha=\min\{\alpha_1,\alpha_2\}$.
\Statex  ~b) Compute $\alpha(r_T)$ online with a scalar optimization\Statex~variable, compare~\cite[Sec.~3.3]{JK_periodic_automatica}.
\end{algorithmic}
\end{algorithm}
The proposed procedure is a combination and extension of the reference generic offline computations in~\cite{JK_QINF}, the terminal ingredients for economic MPC~\cite{amrit2011economic,muller2014performance} and the online computation of $p$ using~\eqref{eq:compute_p}.  
Regarding the online operation, we simply include the constraints~\eqref{eq:compute_p} to compute $p(\cdot|t)$ and possibly constraints to compute $\alpha(r_T)$ online (compare \cite[Sec.~3.3.]{JK_periodic_automatica} for details) in the MPC optimization problem~\eqref{eq:MPC}. 
This procedure significantly simplifies in the special case of linear systems with linear/quadratic stage costs $\ell$, which is discussed in Section~\ref{sec:convex}. 
Furthermore, in the special case of artificial setpoints ($T=1$), we recover the schemes in~\cite{muller2013economic,muller2014performance,fagiano2013generalized} and Algorithm~\ref{alg:offline} provides a corresponding procedure to derive suitable terminal ingredients. 

 \begin{remark}
\label{rk:param_y}
The matrices $S$, $Q^*$ and $P$ can also be parametrized by $y$, to yield a terminal cost $V_f$ that depends on the price signal $y$. 
However, the incremental Lyapunov function $V_{\delta}$ used for the terminal set $\mathcal{X}_f$ (Ass.~\ref{ass:term}) may not depend on $y$ to ensure recursive feasibility independent of online changes in the price signal $y$. 
Thus, the choice of $\mathcal{X}_f$ in Corollary~\ref{corol:term_generic} is only valid for $P$ independent of $y$. 
\end{remark}

%------computation $p$
\begin{remark}
\label{rk:compute_p}
As already discussed, the vector $p$ needs to be computed online using~\eqref{eq:compute_p}, which adds $nT$ optimization variables and $n\cdot T$ equality constraints (linear in $p$) to the optimization problem~\eqref{eq:MPC}.
Abbreviate $A(j|t)=A(r_T(j|t),t+j)$, $B(j|t)=B(r_T(j|t),t+j)$, $\ell_\xi(j|t)=\ell_\xi(r_T(j|t),t+j,y(t))$ and suppose the feedback $k_{f,x}$ is parametrized in the form $k_{f,x}(j|t)=Y(j|t)X(j|t)^{-1}$ with matrices $X,Y$ (as is the case with the LMIs considered in~\cite{JK_QINF}). 
Then, multiplying the constraints~\eqref{eq:compute_p} by $X$ from the left yields the following equivalent constraint
\begin{align}
\label{eq:con_p_XY}
&(A(j|t)X(j|t)+B(j|t)Y(j|t))^\top p(j+1|t)\\
=&X(j|t)p(j|t)-\begin{pmatrix}
X(j|t)&Y^\top(j|t)
\end{pmatrix}\ell_\xi^\top(j|t),\nonumber
\end{align}
where we use $k_{f,x}X=Y$ and the formula for $\overline{\ell}_x$ from the proof of Lemma~\ref{lemma:auxillary_cost}.
The resulting constraint  can be implemented directly in terms of $X,Y$.
The constraints~\eqref{eq:compute_p} can also be compressed into one $n$ dimensional equality constraint with only the $n$ dimensional optimization variable $p(0|t)$.
In particular, denote $\overline{\ell}_x(j|t)=\overline{\ell}_x(x_{r_T}(j|t),r_T(\cdot|t),t+j,y(t))$ and $\overline{A}_{cl}(k|t)=\Pi_{j=0}^{k-1}A_{cl}(\mathcal{R}_T^jr_T(\cdot|t),t+j)$. 
Then $p(0|t)$ satisfying~\eqref{eq:compute_p} can be equivalently computed using
\begin{align}
\label{eq:direct_p}
(I_n-\overline{A}^\top_{cl}(T|t)) p(0|t)=\sum_{j=0}^{T-1}\overline{A}_{cl}^\top(j|t)\overline{\ell}^\top_x(j|t). 
\end{align}
Furthermore, if the period length $T$ is very large, an approximate solution can be obtained by assuming $\overline{A}_{cl}(j|t)\approx 0$ for $j\geq T_c$ with some $T_c<T$, which results in $p(0|t)\approx \sum_{j=0}^{T_c-1}\overline{A}_{cl}^\top(j|t)\overline{\ell}^\top_x(j|t)$. 
The fact that we need to take the full trajectory $r_T$ into account to compute the correct gradient correction $p$ indicates that the computation of a terminal cost for nonperiodic time-varying trajectories may be non-trivial. 
\end{remark}

%!TEX root = ./EcoPeriodic.tex
%%%%%%%%%%%%%%%%%%%%%%%%%%%%%%%%%%%%%%%%%%%%%%%%%%%%%%%%%%%%%%%%%%%%%%%%%%%%%%%
\subsubsection{Given incremental Lyapunov function $V_{\delta}$}
\label{sec:term_knowV}
In the following, we consider the case, where some incremental Lyapunov function $V_{\delta}$ with a corresponding feedback $k_f$ is available and focus on computing a simple (positive definite) terminal cost $V_f$. 
For simplicity, we consider \textit{exponential} stability, as a special case of Ass.~\ref{ass:term}. 
\begin{assumption}
\label{ass:term_exp} 
There exists an incremental Lyapunov function $V_{\delta}(x,t,r_T)$, a controller $k_f(x,r_T,t)$, a terminal set size $\alpha(r_T)$ and constants $c_l,c_u>0$, $\rho\in(0,1)$, such that at any time $t\in\mathbb{N}$, for any reference $r_T\in\mathcal{Z}_T(t)$ and any $x\in\mathcal{X}_f(r_T,t):=\{x\in\mathbb{R}^n|~V_{\delta}(x,r_T,t)\leq \alpha(r_T)\}$, the following conditions hold
\begin{subequations}
\label{eq:term_2}
\begin{align}
\label{eq:term_increm_2}
V_{\delta}(x^+,r_T^+,t+1)\leq& \rho V_{\delta}(x,r_T,t),\\
\label{eq:term_bound_2}
c_l\|x-x_r\|\leq V_{\delta}(x,r_T,t)\leq& c_u\|x-x_r\|,\\
(x,u)\in&\mathcal{Z}(t),
\end{align}
\end{subequations}
with $x^+=f(x,u,t)$, $r_T^+=\mathcal{R}_Tr_T\in\mathcal{Z}_T(t+1)$, $u=k_f(x,r_T,t)$.  
In addition, Inequalities~\eqref{eq:term_cont} and \eqref{eq:term_alpha_cont} from Assumption~\ref{ass:term} hold with some $\alpha_4,\alpha_5\in\mathcal{K}_{\infty}$.  
\end{assumption}
Such an incremental Lyapunov function can for example be computed using quasi linear parameter varying methods~\cite{JK_QINF}, control contraction metrics~\cite{manchester2017control}, back stepping~\cite{zamani2013backstepping} or feedback linearization. 
In addition, we assume that the stage cost can be locally bounded by some polynomial, similar to~\cite[Ass.~5]{alessandretti2016design}, \cite[Ass.~26]{alessandretti2016convergence}. 
\begin{assumption}
\label{ass:stage_cost_bound}
Consider the incremental Lyapunov function $V_{\delta}$, the feedback $k_f$ and the terminal set $\mathcal{X}_f$ from Assumption~\ref{ass:term_exp}. 
There exists constants $a_k\geq 0$, $k=1,\dots,\nu$, such that the stage cost $\ell$ satisfies
\begin{align}
\label{eq:stage_cost_bound}
\ell(x,k_f(x,r_T,t),t,y)-\ell(r,t,y)\leq \sum_{k=1}^\nu a_k \|x-x_r\|^k,
\end{align}
for all $t\in\mathbb{N}$, $y\in\mathbb{Y}$, $r_T\in\mathcal{Z}_T(t)$, $x\in\mathcal{X}_f(r_T,t)$. 
\end{assumption}
The following proposition follows the arguments from~\cite[Prop.~2]{alessandretti2016design}, compare also~\cite[Prop.~27]{alessandretti2016convergence}. 
\begin{proposition}
\label{prop:Term_alessandretti}
Let Assumptions~\ref{ass:term_exp}--\ref{ass:stage_cost_bound} hold. 
Then the following terminal cost $V_f$ satisfies Assumptions~\ref{ass:term_simple} and \ref{ass:term}
\begin{align}
\label{eq:term_cost_alessandretti}
V_f(x,r_T,t):=\sum_{k=1}^{\nu} \dfrac{a_k}{c_l^{k}(1-\rho^{k})}V_{\delta}^{k}(x,r_T,t).
\end{align}
\end{proposition}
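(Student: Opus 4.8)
The plan is to verify the conditions of Assumptions~\ref{ass:term_simple} and~\ref{ass:term} one at a time, exploiting that almost all of them concern only $V_\delta$, $k_f$ and $\mathcal{X}_f$ and are therefore inherited directly from Assumption~\ref{ass:term_exp}; the only genuinely new thing to check is the terminal decrease condition~\eqref{eq:term_dec} for the polynomial terminal cost~\eqref{eq:term_cost_alessandretti}.

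First I would dispatch the easy conditions. For Assumption~\ref{ass:term}: rewriting the contraction~\eqref{eq:term_increm_2} as $V_\delta(x^+,r_T^+,t+1)-V_\delta(x,r_T,t)\leq -(1-\rho)V_\delta(x,r_T,t)$ and combining with the lower bound in~\eqref{eq:term_bound_2} yields~\eqref{eq:term_increm} with $\alpha_1(s)=(1-\rho)c_l s\in\mathcal{K}_{\infty}$; the bound~\eqref{eq:term_bound} holds with the linear $\mathcal{K}_{\infty}$ functions $\alpha_2(s)=c_l s$ and $\alpha_3(s)=c_u s$; and~\eqref{eq:term_cont}, \eqref{eq:term_alpha_cont} together with the sublevel-set form of $\mathcal{X}_f$ are assumed outright in Assumption~\ref{ass:term_exp}. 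For Assumption~\ref{ass:term_simple}, the constraint inclusion~\eqref{eq:term_con} is part of~\eqref{eq:term_2}, and positive invariance~\eqref{eq:term_PI} follows from $V_\delta(x^+,r_T^+,t+1)\leq\rho V_\delta(x,r_T,t)\leq\rho\,\alpha(r_T)=\rho\,\alpha(r_T^+)\leq\alpha(r_T^+)$, using $\rho\in(0,1)$ and $\alpha(r_T)=\alpha(r_T^+)$ from~\eqref{eq:term_alpha_cont}. Boundedness of $V_f$ on $\mathcal{X}_f$ is immediate since $V_\delta\leq\overline{\alpha}$ there.

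The heart of the argument is~\eqref{eq:term_dec}. Since $a_k\geq 0$ and $V_\delta(x^+,r_T^+,t+1)\leq\rho V_\delta(x,r_T,t)$, the $k$-th power obeys $V_\delta^k(x^+,r_T^+,t+1)\leq\rho^k V_\delta^k(x,r_T,t)$, so
\begin{align*}
V_f(x^+,r_T^+,t+1)-V_f(x,r_T,t)
&\leq\sum_{k=1}^{\nu}\frac{a_k(\rho^k-1)}{c_l^k(1-\rho^k)}V_\delta^k(x,r_T,t)\\
&=-\sum_{k=1}^{\nu}\frac{a_k}{c_l^k}V_\delta^k(x,r_T,t).
\end{align*}
Then I would apply the lower bound $V_\delta(x,r_T,t)\geq c_l\|x-x_r\|$ termwise, again using $a_k\geq 0$, to obtain $V_f(x^+,r_T^+,t+1)-V_f(x,r_T,t)\leq-\sum_{k=1}^{\nu}a_k\|x-x_r\|^k$, and Assumption~\ref{ass:stage_cost_bound} bounds the right-hand side by $-(\ell(x,k_f(x,r_T,t),t,y)-\ell(r,t,y))=-\ell(x,u,t,y)+\ell(r_T(0),t,y)$, which is precisely~\eqref{eq:term_dec}.

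I do not expect a serious obstacle: the proof is purely algebraic and, in contrast to Proposition~\ref{prop:term_generic}, needs no Taylor expansion or Lipschitz remainder estimate, because $V_\delta$ is only assumed to contract rather than to be smooth. The one place requiring care is the bookkeeping of the powers $V_\delta^k$: the coefficient $a_k/(c_l^k(1-\rho^k))$ is tuned so that the contraction factor $\rho^k$ leaves exactly $-a_k V_\delta^k/c_l^k$ after the subtraction, and the factor $c_l^k$ is then absorbed by the lower bound $V_\delta\geq c_l\|x-x_r\|$.
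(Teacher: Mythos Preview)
Your proposal is correct and follows essentially the same approach as the paper's proof. The paper is simply terser about the auxiliary conditions, stating in one line that ``the terminal set $\mathcal{X}_f$ directly satisfies conditions~\eqref{eq:term_PI}, \eqref{eq:term_con} and Assumption~\ref{ass:term}'', whereas you spell out the $\mathcal{K}_\infty$ functions $\alpha_1,\dots,\alpha_3$ explicitly; the core computation for~\eqref{eq:term_dec} via $V_\delta^k(x^+,\cdot)\leq\rho^kV_\delta^k(x,\cdot)$, the cancellation of $(1-\rho^k)$, and the termwise use of $V_\delta\geq c_l\|x-x_r\|$ is identical in both.
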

\begin{proof}
The terminal set $\mathcal{X}_f$ directly satisfies conditions~\eqref{eq:term_PI}, \eqref{eq:term_con} and Assumption~\ref{ass:term}. 
Condition~\eqref{eq:term_dec} follows with
\begin{align*}
&V_f(x^+,r_T^+,t+1)-V_f(x,r_T,t)\\
\stackrel{\eqref{eq:term_cost_alessandretti}}{=}& \sum_{k=1}^\nu \dfrac{a_k}{c_l^{k}(1-\rho^{k})}(V_{\delta}^{k}(x^+,r_T^+,t+1)-V_{\delta}^{k}(x,r_T,t))\\
\stackrel{\eqref{eq:term_increm_2}}{\leq}& \sum_{k=1}^\nu \dfrac{a_k}{c_l^{k}(1-\rho^{k})}(\rho^{k}-1)V_{\delta}^{k}(x,r_T,t)\\
=& -\sum_{k=1}^\nu a_k\left(\dfrac{V_{\delta}(x,r_T,t)}{c_l}\right)^{k}\stackrel{\eqref{eq:term_bound_2}}{\leq}-\sum_{l=1}^{\nu} a_k\|x-x_r\|^k\\
\stackrel{\eqref{eq:stage_cost_bound}}{\leq}& \ell(r,t,y)-\ell(x,k_f(x,r_T,t),t,y).
\end{align*}
\end{proof}
This terminal cost is easy to compute, but can also be quite conservative.  
In particular, an interesting feature of this approach is that the terminal cost $V_f$ is a \textit{sum} of incremental Lyapunov functions and as such also an incremental Lyapunov function, which is positive definite w.r.t. to the reference trajectory $r_T$. 
Correspondingly, the terminal cost incentives regulation towards the reference trajectory $r_T$.
In general the reference $r_T$ will have a suboptimal performance, which is why a purely economic (\textit{not} positive definite) terminal cost as in Proposition~\ref{prop:term_generic} may be advantageous. 

\begin{remark}
\label{rk:combine_linear_quadratic_pdf}
Note, that if a matrix $P$ is computed offline that satisfies condition \eqref{eq:P_ineq}, the results in Corollary~\ref{corol:term_generic} and Prop.~\ref{prop:Term_alessandretti} can be combined to show that the terminal cost $V_f=\|\Delta x\|_P^2+c\|\Delta x\|_P$ locally satisfies condition~\eqref{eq:term_dec}, by choosing $c>0$ suitably. 
In particular, consider $V_{\delta}=\|\Delta x\|_P$ satisfying Assumption~\ref{ass:term_exp} and the following local linear quadratic bound $\overline{\ell}\leq a_1\|\Delta x\|_P+\|\Delta x\|_{Q^*}^2$, with $Q^*$ according to Lemma~\ref{lemma:auxillary_cost} and a suitable constant $a_1>0$ using continuous differentiability of $\overline{\ell}$. 
Then choosing $c:=a_1/(1-\rho)$ ensures satisfaction of \eqref{eq:term_dec} using
\begin{align*}
&V_f^+-V_f\stackrel{\eqref{eq:term_quad},\eqref{eq:term_increm_2}}{\leq}
-\|\Delta x\|_{Q^*}^2-c(1-\rho)\|\Delta x\|_P 
\leq -\overline{\ell}.
\end{align*}
\end{remark}

%!TEX root = ./EcoPeriodic.tex
%%%%%%%%%%%%%%%%%%%%%%%%%%%%%%%%%%%%%%%%%%%%%%%%%%%%%%%%%%%%%%%%%%%%%%%%%%%%%%%
\subsubsection{Terminal equality constraints} 
\label{sec:TEC}
In the following, we discuss how to replace the general terminal set (Assumption~\ref{ass:term}) with a simple terminal equality constraint (TEC, $\mathcal{X}_f(r_T,t)=x_r$). 
In principle, the conditions~\eqref{eq:term_increm}--\eqref{eq:term_cont} are quite general and not restrictive, but the explicit knowledge of $V_{\delta}$ (which characterizes the terminal set $\mathcal{X}_f$) can pose challenges. 

The following analysis is similar to~\cite{fagiano2013generalized} and \cite{limon2018nonlinear}, which also considered TEC in the steady-state case. 
To this end, we consider the following finite-time local incremental controllability condition. 
\begin{assumption}
\label{ass:TEC}
There exist constants $\nu\in\mathbb{N}$, $\epsilon>0$, such that at any time $t\in\mathbb{N}$, for any references $r_T,~\tilde{r}_T\in\mathcal{Z}_T(t)$ with $\|r_T-\tilde{r}_T\|\leq \epsilon$, there exists a state and input sequence $(x,u)\in\mathbb{R}^{\nu\times (n+m)}$, such that 
\begin{align*}
x(0)=&x_{r_T}(0),~ x(k+1)=f(x(k),u(k),t+k),\\
x(\nu)=&\tilde{x}_{r_T}(\nu),~ (x(k),u(k))\in\mathcal{Z}(t+k), ~ k=0,\dots,\nu-1.
\end{align*}
\end{assumption}
This condition is for example satisfied with $\nu\leq n$ if the linearization along any feasible periodic trajectory is (uniformly) controllable~\cite[Ass.~2.37]{rawlings2017model}, \cite[Ass.~2]{limon2014single}, compare also~\cite[Ass.~7]{fagiano2013generalized} \cite[Ass.~10]{muller2016economic}. 
Typically, an additional continuity bound on $\ell$ is used (compare for example~\cite[Ass.~4]{limon2018nonlinear}), 
which is, however, not necessary in the considered setup with the bounded stage cost $\ell$ and the self-tuning weight $\beta$. 
The following result is an adaptation of Lemmas~\ref{lemma:change_r}--\ref{lemma:local_min} to terminal equality constraints (TEC) using Assumption~\ref{ass:TEC} and a multi-step implementation.
\begin{lemma}
\label{lemma:change_r_TEC}
Let Assumptions~\ref{ass:cont_orbit} and \ref{ass:TEC} hold  and assume that $y(t)$ is constant. 
Consider the terminal ingredients $\mathcal{X}_f(r_T,t)=\{x_r\}$, $V_f(x,r_T,t,y)=0$ and a prediction horizon $N\geq \nu$.  
Suppose that the optimization problem~\eqref{eq:MPC} is feasible at time $t$ with some reference trajectory $r_T^*(\cdot|t)$, which is not a local minimum to~\eqref{eq:opt_periodic}. 
Then under the $\nu$-step closed-loop system
\begin{align}
\label{eq:TEC_multistep}
&u(t+k)=u^*(k|t),\quad x(t+k+1)=x^*(k+1|t),\\
\quad &k=0,\dots, \nu-1,\nonumber
\end{align}
 there exists a reference $\tilde{r}_T$ which is a feasible candidate solution to~\eqref{eq:MPC} at $t+\nu$ and satisfies 
\begin{align*}
J_T(\tilde{r}_T,t+N+\nu,y(t))< J_T(\mathcal{R}_T^{\nu}r_T^*(\cdot|t),t+N+\nu,y(t)).
\end{align*}
\end{lemma}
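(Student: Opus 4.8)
The strategy is to mirror the proof of Lemma~\ref{lemma:local_min}, but to accommodate the terminal equality constraint by exploiting the finite-time controllability of Assumption~\ref{ass:TEC} together with the $\nu$-step implementation. First I would start from the feasible solution at time $t$ with reference $r_T^*(\cdot|t)$, which by hypothesis is not a local minimum of~\eqref{eq:opt_periodic}. Applying $\nu$ steps of the predicted input as in~\eqref{eq:TEC_multistep}, the state reaches $x(t+\nu)=x^*(\nu|t)$, and since $N\geq \nu$, the remaining tail $x^*(\nu|t),\dots,x^*(N|t)$ together with $u^*(\nu|t),\dots,u^*(N-1|t)$ is a feasible $(N-\nu)$-step trajectory ending at $x^*(N|t)=x_{r^*_T}(0|t)$ (the TEC at time $t$). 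The shifted reference $\mathcal{R}_T^{\nu}r_T^*(\cdot|t)\in\mathcal{Z}_T(t+N+\nu)$ is thus reachable from $x(t+\nu)$ with the TEC satisfied; this plays the role that Proposition~\ref{prop:feas}/\ref{prop:av_perf} plays in Lemma~\ref{lemma:local_min}.

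Next I would invoke Assumption~\ref{ass:cont_orbit}: since $\mathcal{R}_T^{\nu}r_T^*(\cdot|t)$ is not a local minimum of~\eqref{eq:opt_periodic}, for the $\epsilon>0$ of Assumption~\ref{ass:TEC} there exists a perturbation $\Delta r_T$ with $\|\Delta r_T\|\leq\epsilon$, $\tilde{r}_T=\mathcal{R}_T^{\nu}r_T^*(\cdot|t)+\Delta r_T\in\mathcal{Z}_T(t+N+\nu)$, strictly decreasing the periodic cost $J_T$ and satisfying the bound~\eqref{eq:ass_c_kappa}. As in Lemma~\ref{lemma:local_min}, the latter bound and the identity $\Delta\kappa(t+\nu)=J_T(\tilde{r}_T,t+N+\nu,y(t))-J_T(\mathcal{R}_T^{\nu}r_T^*(\cdot|t),t+N+\nu,y(t))<0$ guarantee that $\tilde{r}_T$ satisfies the memory constraints~\eqref{eq:kappa_con_1}--\eqref{eq:kappa_con_2} (here one uses that $y$ is constant so the $\kappa_j$ evolve by pure shift plus the allowed decrease). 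It remains to connect the predicted state to the \emph{perturbed} reference's starting point $\tilde{x}_{r_T}(0)$: I would use the finite-time controllability (Assumption~\ref{ass:TEC}) with the pair of references $\mathcal{R}_T^{\nu}r_T^*(\cdot|t)$ (whose zeroth state is $x^*(N|t)$) and $\tilde{r}_T$, which are $\epsilon$-close, to steer from $x^*(N|t)$ to $\tilde{x}_{r_T}(0)$ in $\nu\leq N$ steps while respecting $\mathcal{Z}$. Splicing this steering maneuver into the tail of the horizon yields a full feasible candidate $(u(\cdot|t+\nu),\tilde{r}_T)$ for~\eqref{eq:MPC} at $t+\nu$ with $x(N|t+\nu)=\tilde{x}_{r_T}(0)$, establishing the claimed strict inequality.

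**Main obstacle.** The delicate point is the horizon bookkeeping: one must verify that $N\geq\nu$ genuinely leaves enough room to both (i) follow the old predicted trajectory for the first part and (ii) insert the $\nu$-step controllability maneuver that reconnects to the perturbed reference, without the two segments overlapping in a way that violates constraints or the dynamics. Concretely, after the $\nu$-step shift the new horizon must decompose as a feasible prefix of length $N-\nu$ (taken from the old optimal tail) followed by the length-$\nu$ steering segment landing exactly on $\tilde{x}_{r_T}(0)$; checking that $(N-\nu)+\nu=N$ and that the constraint sets $\mathcal{Z}(t+N+\nu+\cdot)$ and the shifted reference constraints $\mathcal{Z}_T(t+N+\nu)$ line up correctly under the periodic shift $\mathcal{R}_T^{\nu}$ is where the argument can become fiddly. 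A secondary subtlety is ensuring the memory-state constraints~\eqref{eq:kappa_con_1}--\eqref{eq:kappa_con_2} still hold after a $\nu$-step (rather than one-step) update of the $\kappa_j$; since $y$ is constant this follows by iterating the shift-plus-decrease argument used in Proposition~\ref{prop:av_perf}, but it must be spelled out that the accumulated slack from $\Delta\kappa<0$ dominates, exactly as~\eqref{eq:ass_c_kappa} guarantees for the single improving perturbation.
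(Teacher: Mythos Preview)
Your approach is essentially the paper's: use the $(N-\nu)$-step tail of the previous optimal trajectory to reach $x(N-\nu|t+\nu)=x^*(N|t)=x_{r_T^*}(0|t)$, invoke Assumption~\ref{ass:cont_orbit} to obtain the perturbed reference $\tilde{r}_T$ satisfying~\eqref{eq:kappa_con_1}--\eqref{eq:kappa_con_2}, and append the $\nu$-step controllability maneuver from Assumption~\ref{ass:TEC} to land on $\tilde{r}_T$. One bookkeeping correction: the zeroth state of $\mathcal{R}_T^{\nu}r_T^*(\cdot|t)$ is $x_{r_T^*}(\nu|t)$, not $x^*(N|t)$, and Assumption~\ref{ass:TEC} steers from $x_{r_T}(0)$ to $\tilde{x}_{r_T}(\nu)$ (not to the zeroth point); the clean way to invoke it is at time $t+N$ with the pair $r_T^*(\cdot|t)$ and $\mathcal{R}_T^{-\nu}\tilde{r}_T$, both in $\mathcal{Z}_T(t+N)$ and $\epsilon$-close since $\mathcal{R}_T$ is a permutation, so the maneuver ends at $(\mathcal{R}_T^{-\nu}\tilde{r}_T)(\nu)=\tilde{x}_{r_T}(0)$ as required.
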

\begin{proof}
Given that $\mathcal{R}_T^{\nu}r_T^*(\cdot|t)\in\mathcal{Z}_T(t+N+\nu)$ is not a local minimum, Assumption~\ref{ass:cont_orbit} ensures that there exists a reference $\tilde{r}_T=\mathcal{R}_T^{\nu}r_T^*(\cdot|t)+\Delta r_T\in\mathcal{Z}_T(t+N+\nu)$, that satisfies the posed constraints~\eqref{eq:kappa_con_1}--\eqref{eq:kappa_con_2}, improves the reference cost $J_T$ and satisfies $\|\tilde{r}_T-\mathcal{R}_T^{\nu}r_T^*(\cdot|t)\|\leq \epsilon$. 
Due to the multi-step implementation~\eqref{eq:TEC_multistep}, the sequence $x(k|t+\nu)=x^*(k+\nu|t)$,~$u(k|t+\nu)=u^*(k+\nu|t)$ satisfies the dynamics~\eqref{eq:MPC_dyncon}, the constraints~\eqref{eq:MPC_cons} and ends in the reference~\eqref{eq:MPC_term}, i.e., 
\begin{align*}
x(N-\nu|t+\nu)=x^*(N|t)=x_{r_T}^*(0|t).
\end{align*} 
Correspondingly, for any $\tilde{r}_T\in\mathcal{Z}_T(t+N+\nu)$ with $\|\mathcal{R}_T^{\nu}r_T^*(\cdot|t)-\tilde{r}_T\|\leq \epsilon$, we can append the state and input sequence $x(\cdot|t+\nu)$, $u(\cdot|t+\nu)$ with the candidate solution from Assumption~\ref{ass:TEC}, which satisfies the constraints~\eqref{eq:MPC_cons}--\eqref{eq:MPC_periodic}. 
\end{proof}
Compared to the results in Lemmas~\ref{lemma:change_r}--\ref{lemma:local_min} based on terminal sets, the resulting properties with terminal equality constraints are only valid if we apply the first $\nu$ parts of the computed input sequence. 
For comparison, in tracking MPC with positive definite stage cost $\ell$, such a multi-step implementation is not needed, since the closed-loop system eventually converges to an $\epsilon$ neighbourhood of the reference trajectory $r_T$, compare~\cite[Thm.~2]{limon2018nonlinear}, \cite[Thm.~3]{limon2014single} \cite[Thm.~2]{limon2016mpc}.

The main benefit of the terminal equality constraint implementation is the simple design. 
Although we need to use a multi-step implementation with $\nu$ steps, we would like to point out that $\nu$ is independent of $T$, and hence this method does not suffer from the same limitations as the approaches based on $T$-step systems, such as~\cite{muller2016economic,wabersich2018economic}. 
On the other hand, an implementation with a suitable terminal cost (Ass.~\ref{ass:term}) can use any prediction horizon $N$, requires no multi-step implementation and typically yields better closed-loop performance and (inherent) robustness properties, compare the numerical example in Section~\ref{sec:CSTR}.

\begin{remark}
In~\cite{fagiano2013generalized} a similar economic MPC scheme for setpoints ($T=1$) has been considered with terminal equality constraints. 
However, instead of a $\nu$-step MPC implementation~\eqref{eq:TEC_multistep}, in~\cite[Algorithm~3]{fagiano2013generalized} it was suggested to augment the MPC with an algorithm that decides at each time $t$ if the candidate solution or the standard MPC feedback~\eqref{eq:close_1} is applied. 
In particular, if the cost of the artificial reference $r_T$ does not improve by a minimal amount $\tilde{\epsilon}$, the candidate solution is applied. 
Given Assumption~\ref{ass:TEC}, after at most $\nu$ steps, it is possible to incrementally move the reference trajectory and thus improve the cost. 
Thus, by augmenting the MPC with such an algorithm, it may not be necessary to apply the first $\nu$ steps of the computed input trajectory, which can speed up convergence. 
\end{remark}

%alternative terminal cost
%!TEX root = ./EcoPeriodic.tex
%%%%%%%%%%%%%%%%%%%%%%%%%%%%%%%%%%%%%%%%%%%%%%%%%%%%%%%%%%%%%%%%%%%%%%%%%%%%%%%
\subsection{Modified reference cost}
\label{sec:term_cost_mod}
The proposed formulation~\eqref{eq:MPC} uses standard conditions for the terminal ingredients (Ass.~\ref{ass:term_simple}) and contains many economic MPC formulations as special cases, compare~\cite{muller2013economic,muller2014performance,fagiano2013generalized,houska2017cost,wang2018economic,zanon2017periodic,alessandretti2016convergence,angeli2012average,amrit2011economic}. However, the formulation also requires the additional constraints~\eqref{eq:kappa_con_1}--\eqref{eq:kappa_con_2}, based on the continuity condition (Ass.~\ref{ass:cont_orbit}).
In the following, we briefly discuss an alternative solution to this problem, based on a modified cost for the artificial reference trajectory.
The following result is based on~\cite[Prop.~1]{kohler2018periodic}, which in turn is motivated by the analysis of non-monotonic Lyapunov functions~\cite{ahmadi2008non}.
\begin{lemma}
\label{lemma:term_cost_mod}
Consider the terminal ingredients from Assumption~\ref{ass:term_simple}.
For any $t\in\mathbb{N}$, $y\in\mathbb{Y}$, $r_T\in\mathcal{Z}_T(t)$, $x\in\mathcal{X}_f(r_T,t)$, 
the modified terminal cost
\begin{align*}
\tilde{V}_f(x,r_T,t,y):=&V_f(x,r_T,t,y)\\
&+\sum_{k=0}^{T-2}\frac{T-1-k}{T}\ell(r_T(k),t+k,y)
\end{align*}
satisfies 
\begin{align*}
&\tilde{V}_f(x^+,\mathcal{R}_Tr_T,t+1,y)-\tilde{V}_f(x,r_T,t,y)\\
\leq& -\ell(x,u,t,y)+J_T(r_T,t,y)/T,
\end{align*}
with $x^+=f(x,u,t)$, $u=k_f(x,r_T,t)$.
\end{lemma}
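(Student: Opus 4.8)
The plan is to telescope the standard terminal cost decrease from Assumption~\ref{ass:term_simple} against the extra sum that has been appended to $V_f$, exploiting that one period of the reference shift permutes the indices cyclically. First I would write out the difference $\tilde{V}_f(x^+,\mathcal{R}_Tr_T,t+1,y)-\tilde{V}_f(x,r_T,t,y)$ as the sum of two contributions: the $V_f$ contribution, which by \eqref{eq:term_dec} is bounded by $-\ell(x,u,t,y)+\ell(r_T(0),t,y)$, and the difference of the two extra sums. For the latter, note that the reference at time $t+1$ is $\mathcal{R}_Tr_T$, whose $k$-th element is $r_T(k+1)$ evaluated at time $t+1+k$; using the periodicity $f(x,u,t)=f(x,u,t+T)$ and the convention $r_T(T)=r_T(0)$, the time arguments line up so that $(\mathcal{R}_Tr_T)(k)$ at time $t+1+k$ carries the stage cost $\ell(r_T(k+1),t+1+k,y)$.

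Next I would compute the difference of the weighted sums explicitly. The sum at $t+1$ is $\sum_{k=0}^{T-2}\frac{T-1-k}{T}\ell(r_T(k+1),t+1+k,y)$; reindexing with $j=k+1$ this becomes $\sum_{j=1}^{T-1}\frac{T-j}{T}\ell(r_T(j),t+j,y)$. Subtracting the sum at time $t$, namely $\sum_{k=0}^{T-2}\frac{T-1-k}{T}\ell(r_T(k),t+k,y)=\sum_{j=0}^{T-2}\frac{T-1-j}{T}\ell(r_T(j),t+j,y)$, the coefficients of $\ell(r_T(j),t+j,y)$ for $j=1,\dots,T-2$ combine to $\frac{(T-j)-(T-1-j)}{T}=\frac1T$, the $j=0$ term contributes $-\frac{T-1}{T}\ell(r_T(0),t,y)$, and the $j=T-1$ term contributes $+\frac1T\ell(r_T(T-1),t+T-1,y)$. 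Hence the difference of the extra sums equals $\frac1T\sum_{j=1}^{T-1}\ell(r_T(j),t+j,y)-\frac{T-1}{T}\ell(r_T(0),t,y)=\frac1T J_T(r_T,t,y)-\ell(r_T(0),t,y)$, using $J_T(r_T,t,y)=\sum_{j=0}^{T-1}\ell(r_T(j),t+j,y)$.

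Adding the two contributions, the $+\ell(r_T(0),t,y)$ from \eqref{eq:term_dec} cancels the $-\ell(r_T(0),t,y)$ from the sum difference, leaving exactly $-\ell(x,u,t,y)+J_T(r_T,t,y)/T$, which is the claimed bound. The main obstacle — really the only subtlety — is getting the time indices right in the shifted reference: one must be careful that $\mathcal{R}_Tr_T$ as a reference ``at time $t+1$'' assigns to its $k$-th component the cost function index $t+1+k$, and that the wrap-around element $r_T(T)=r_T(0)$ carries index $t+T$ which by $T$-periodicity of $\ell$ and $f$ equals index $t$; once this bookkeeping is fixed the telescoping is immediate. I would also remark that the hypotheses $x\in\mathcal{X}_f(r_T,t)$, $u=k_f(x,r_T,t)$, $x^+=f(x,u,t)$, $r_T^+=\mathcal{R}_Tr_T\in\mathcal{Z}_T(t+1)$ are exactly what is needed to invoke \eqref{eq:term_dec}, so no additional regularity is required beyond Assumption~\ref{ass:term_simple}.
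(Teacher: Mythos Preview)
Your proof is correct and follows essentially the same approach as the paper: both split $\tilde{V}_f^+-\tilde{V}_f$ into the $V_f$ contribution (bounded via~\eqref{eq:term_dec}) and the difference of the weighted stage-cost sums, then telescope the latter to obtain $\frac{1}{T}J_T(r_T,t,y)-\ell(r_T(0),t,y)$ so that the $\ell(r_T(0),t,y)$ terms cancel. The only cosmetic difference is that the paper first multiplies through by $T$ and writes the sum difference compactly as $\sum_{k=0}^{T-2}(T-1-k)(\ell(k+1)-\ell(k))$ before telescoping, whereas you reindex the shifted sum and compare coefficients directly.
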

\begin{proof}
Abbreviate $\ell(k)=\ell(r_T(k),t+k,y)$, $\tilde{V}_f=\tilde{V}_f(x,r_T,t,y)$, $\tilde{V}_f^+=\tilde{V}_f(x^+,\mathcal{R}_Tr_T,t+1,y)$.
The modified terminal cost satisfies
\begin{align*}
&T(\tilde{V}_f^+-\tilde{V}_f)\\
=&T(V_f^+-V_f)+\sum_{k=0}^{T-2} (T-1-k)(\ell(k+1)-\ell(k))\\
\stackrel{\eqref{eq:term_dec}}{\leq}&-T\ell(x,u,t,y)+(T+1-T)\ell(0)+\ell(T-1)+\sum_{k=1}^{T-2}\ell(k)\\
=&-T\ell(x,u,t,y)+J_T(r_T,t,y).
\end{align*}
\end{proof}
We point out, that the modification of the cost in Lemma~\ref{lemma:term_cost_mod} is applicable both to terminal equality constraints (Lemma~\ref{lemma:change_r_TEC}) and terminal cost/sets (Ass.~\ref{ass:term}). 
The following proposition shows that this modified terminal cost can ensure the same theoretical properties (Prop.~\ref{prop:av_perf}) as the proposed scheme~\eqref{eq:MPC}, without using the continuity condition (Ass.~\ref{ass:cont_orbit}).
\begin{proposition}
\label{prop:av_perf_2}
Let Assumption~\ref{ass:term_simple} hold and assume that $y(t)$ is constant.  
Consider the MPC formulation~\eqref{eq:MPC} with $V_f$ replaced by $\tilde{V}_f$ (Lemma~\ref{lemma:term_cost_mod}) and the constraints~\eqref{eq:kappa_con_1}--\eqref{eq:kappa_con_2} replaced by 
\begin{align}
\label{eq:kappa_con_new}
 J_{T}(r_T(\cdot|t),t+N,y(t))\leq \kappa(t):=\sum_{j=0}^{T-1}\kappa_j(t). 
\end{align}
The resulting closed-loop system satisfies the performance bound \eqref{eq:performance_bound1}, if the update rule $\mathcal{B}$ satisfies Assumption~\ref{ass:B1}.  
\end{proposition}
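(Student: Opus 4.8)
The plan is to follow the line of argument of the proof of Proposition~\ref{prop:av_perf}, exploiting the two simplifications afforded by the new ingredients: the modified terminal cost of Lemma~\ref{lemma:term_cost_mod} replaces the single-step inequality~\eqref{eq:term_dec} by one involving the averaged reference cost $J_T/T$, and the constraint~\eqref{eq:kappa_con_new} makes the $\kappa_j$-bookkeeping of~\eqref{eq:kappa_con_1}--\eqref{eq:kappa_con_2} (and hence Assumption~\ref{ass:cont_orbit}) unnecessary. First I would record recursive feasibility: the shifted reference $\mathcal{R}_T r_T^*(\cdot|t)$ together with the shifted-and-appended candidate input from the proof of Proposition~\ref{prop:feas} is feasible at $t+1$; the only point to check beyond Proposition~\ref{prop:feas} is~\eqref{eq:kappa_con_new}, which holds with equality because $J_T$ of a periodic orbit is invariant under the consistent shift, i.e.\ $J_T(\mathcal{R}_T r_T^*(\cdot|t),t+N+1,y(t))=J_T(r_T^*(\cdot|t),t+N,y(t))$, and, since $y$ is constant, this quantity equals $\kappa(t+1)=\sum_{j=0}^{T-1}\kappa_j(t+1)$ by~\eqref{eq:close_3} and the $T$-periodicity of $\ell$ and of the reference.

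Next I would define $\tilde W(t)$ exactly as $W(t)$ in~\eqref{eq:W} but with $V_f$ replaced by $\tilde V_f$, upper bound $\tilde W(t+1)$ by the cost of the candidate just described, and apply Lemma~\ref{lemma:term_cost_mod} to the terminal pair $(x^*(N|t),r_T^*(\cdot|t))$ at time $t+N$. After the stage costs telescope and using $J_T(r_T^*(\cdot|t),t+N,y(t))=\kappa(t+1)$, this yields the one-step bound
\begin{align*}
\tilde W(t+1)-\tilde W(t)+\ell(x(t),u(t),t,y(t))\le \frac{\kappa(t+1)}{T}+\gamma(t)\,\kappa(t+1).
\end{align*}
Constraint~\eqref{eq:kappa_con_new} gives $\kappa(t+1)\le\kappa(t)$, so $\kappa(\cdot)$ is non-increasing and, by boundedness of $\ell$, converges to some $\kappa_\infty$; write $\kappa_\epsilon(t)=\kappa(t)-\kappa_\infty\ge 0$ with $\kappa_\epsilon(t)\to 0$.

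From here the argument is the routine endgame of~\cite[Thm.~1]{muller2013economic}: sum the displayed inequality over $t=0,\dots,TK-1$, substitute $\kappa(t+1)=\kappa_\infty+\kappa_\epsilon(t+1)$, telescope $\sum_{t=0}^{TK-1}\gamma(t)=\beta(TK)-\beta(0)$, and divide by $K$. Assumption~\ref{ass:B1} ($\beta(t)\ge 0$, $\gamma(t)\le c_\gamma$, $\limsup_t\gamma(t)\le 0$) forces $\beta(TK)/K\to 0$, which makes $\tilde W(TK)/K$ asymptotically nonnegative (the terminal-cost terms in $\tilde V_f$ are $O(1)$ since $\ell$ and $V_f$ are bounded, and $\beta(TK)\,J_T(\cdot)/K\to 0$) and makes $(\beta(TK)-\beta(0))\kappa_\infty/K\to 0$; the Cesàro mean of $\kappa_\epsilon(\cdot)$ vanishes, and $\gamma(t)\kappa_\epsilon(t+1)\le c_\gamma\kappa_\epsilon(t+1)$ so its Cesàro mean vanishes as well. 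Combining these with the elementary inequality $\liminf_n(a_n-b_n)\le\limsup_n a_n-\limsup_n b_n$ gives $\limsup_{K\to\infty}\frac{1}{TK}\sum_{t=0}^{TK-1}\ell(x(t),u(t),t,y(t))\le\kappa_\infty/T$, i.e.~\eqref{eq:performance_bound1}.

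The only genuinely new work compared with Proposition~\ref{prop:av_perf} is re-deriving the one-step inequality with $\tilde V_f$ in place of $V_f$; the main obstacle, such as it is, is bookkeeping, namely verifying carefully that the shifted reference satisfies~\eqref{eq:kappa_con_new} with equality and that $J_T(r_T^*(\cdot|t),t+N,y(t))=\kappa(t+1)$, both of which rest on the shift-consistency property of $\mathcal{R}_T$ and on $y$ being constant. Everything else is identical to~\cite{muller2013economic}.
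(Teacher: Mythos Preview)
Your proposal is correct and follows essentially the same approach as the paper: derive the one-step inequality $\tilde W(t+1)-\tilde W(t)+\ell(x(t),u(t),t,y(t))\le \kappa(t+1)/T+\gamma(t)\kappa(t+1)$ from the candidate solution and Lemma~\ref{lemma:term_cost_mod}, observe $\kappa(t+1)\le\kappa(t)$ from~\eqref{eq:kappa_con_new}, and then invoke the averaging endgame of~\cite[Thm.~1]{muller2013economic}. Your write-up is in fact more detailed than the paper's (which simply points to Proposition~\ref{prop:av_perf} and~\cite{muller2013economic}); in particular, your justification that $\beta(TK)/K\to 0$ under Assumption~\ref{ass:B1} makes $\tilde W(TK)/K$ asymptotically nonnegative is a cleaner way to handle the lower bound on the value function than the paper's one-line appeal to boundedness.
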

\begin{proof}
Similar to Prop.~\ref{prop:av_perf} the candidate solution from Prop.~\ref{prop:feas} with the modified terminal cost implies
\begin{align*}
&W(t+1)-W(t)+\ell(x(t),u(t),t,y(t))\\
\leq& J_T(r_T^*(\cdot|t),t+N,y(t))/T+\gamma(t)\kappa(t+1)\nonumber\\
\stackrel{\eqref{eq:kappa_def}}{=} &\kappa(t+1)/T+\gamma(t)\kappa(t+1).\nonumber
\end{align*}
Correspondingly, the $T$-step bound~\eqref{eq:bound_1} holds with $c_{\kappa}=0$, since $\kappa(t+1)\leq \kappa(t)$. 
The remainder of the proof follows from the arguments in Prop.~\ref{prop:av_perf}, similar to~\cite[Thm.~1]{muller2013economic}\cite[Thm.~1]{muller2014performance}. 
\end{proof}
The properties in Prop.~\ref{prop:art} and Theorem~\ref{thm:main} hold equally with the modified terminal cost $\tilde{V}_f$ and a terminal set (Ass.~\ref{ass:term}), with the simpler constraint~\eqref{eq:kappa_con_new} (without requiring Assumption~\ref{ass:cont_orbit}). 
The main advantage of using this modified terminal cost $\tilde{V}_f$ is that the technical continuity condition Assumption~\ref{ass:cont_orbit} is not required and the number of constraints in \eqref{eq:MPC} is smaller, while the theoretical properties are the same. 
However, this modified terminal cost yields an objective function, which to the best knowledge of the authors differs from any existing MPC formulation (for $T>1$). 
Correspondingly, it is unclear what the practical effect on the closed-loop performance is, which will be studied in the numerical example in Section~\ref{sec:CSTR}. 
For the example considered in Section~\ref{sec:pitfal} with a terminal equality constraint and $T=2$, the modified terminal cost is $\tilde{V}_f=\frac{1}{2}\ell(r)$. 
With this modified cost the closed loop also \textit{does the right thing}, i.e., converges to the $T$-periodic orbit $\{1,2\}$. 
%
%c_{\kappa} not satisfied
%\input{Extensions_not_cont}
%strongly periodic - convex
%!TEX root = ./EcoPeriodic.tex
%%%%%%%%%%%%%%%%%%%%%%%%%%%%%%%%%%%%%%%%%%%%%%%%%%%%%%%%%%%%%%%%%%%%%%%%%%%%%%% 
\subsection{Convex problem}
\label{sec:convex}
In the following, we discuss the special case, when the periodic optimal problem~\eqref{eq:opt_periodic} is convex. 
Suppose that the dynamics $f$ are affine, i.e. $f(x,u,t)=A(t)x+B(t)u+c(t)$, and the constraint sets $\mathcal{Z}_r$ are polytopes, which implies that  $\mathcal{Z}_T$ is a convex polytope. 
For $\ell$ convex, this implies that the  periodic optimal problem~\eqref{eq:opt_periodic} is convex and Theorem~\ref{thm:main}/Corollary~\ref{corol:performance} guarantee that the closed-loop performance is no worse than operation at an optimal $T$-periodic orbit. 

In Section~\ref{sec:convex_term} we discuss how the construction of the terminal ingredients (Sec.~\ref{sec:term}) and the online optimization simplifies in the convex case.
In Section~\ref{sec:convex_continuity}, we discuss the continuity condition (Ass.~\ref{ass:cont_orbit}) for periodic optimal control. 
%
%
%!TEX root = ./EcoPeriodic.tex
%%%%%%%%%%%%%%%%%%%%%%%%%%%%%%%%%%%%%%%%%%%%%%%%%%%%%%%%%%%%%%%%%%%%%%%%%%%%%%% 
\subsubsection{Offline design and online optimization} %{Terminal ingredients}
\label{sec:convex_term}
In the following, we discuss how the design procedure (Sec.~\ref{sec:term}) and the online optimization simplifies for the considered special case. 

%S
%In case of a quadratic stage cost $\ell$, we can simply choose the Hessian $\ell_{rr}$ (assuming it is positive semi-definite). 
%K
Since we have a linear (time-varying) system, we consider a linear time-varying feedback $k_{f,x}=K(t)$ and a time-varying matrix $S(t)$ satisfying condition~\eqref{eq:S_bound}.  
%P 
Thus, the matrix $Q^*(t)$ in Lemma~\ref{lemma:auxillary_cost} is independent of $r_T$ and we can consider a time-varying matrix $P(t)$ to satisfy 
condition~\eqref{eq:P_ineq} in Proposition~\ref{prop:term_generic}. 
Matrices $P(t)$, $K(t)$ satisfying condition~\eqref{eq:P_ineq} can be computed by solving $T$ coupled LMIs similar to~\cite{aydiner2016periodic}. 
Alternatively, the computation of $K(t)$, $P(t)$ can be achieved using the discrete-time LQR for a suitably defined $T$-step system with $\tilde{x}\in\mathbb{R}^n$ and $\tilde{u}\in\mathbb{R}^{Tm}$.

%p
Using the reformulation of the constraints~\eqref{eq:compute_p} discussed in Remark~\ref{rk:compute_p}, the possibly nonlinear constraints~\eqref{eq:compute_p} can be dropped by adding the explicit nonlinear term for $p(0|t)$ in the cost function. 
%terminal
For the terminal set $\mathcal{X}_f$, we can either use an ellipsoidal set $\mathcal{X}_f(r_T,t)=\{x|~\|x-x_r\|_{P(t)}^2\leq \alpha\}$ or a polytopic (periodically time-varying) invariant set $\mathcal{X}_f$\footnote{%
The optional consideration of an online optimized terminal set size $\alpha(r_T)$ can be expressed using linear constraints, for both cases.
}. 
Thus, $\ell$ convex implies that  the constraints in~\eqref{eq:MPC} are convex.

In case $\ell$ quadratic, Lemma~\ref{lemma:auxillary_cost} and Prop.~\ref{prop:term_generic} contain no nonlinear terms that need to be locally over-approximated and hence  we can set $\epsilon=\tilde{\epsilon}=0$ and $\alpha_1$ arbitrary large.
%
%Quadratic stage cost
Furthermore, for $\ell$ quadratic, the problem~\eqref{eq:MPC} is a quadratically constrained\footnote{%
 In addition to the possibly ellipsoidal terminal set, the constraints~\eqref{eq:kappa_con_2} are quadratic, leading to a non negligible increase in the online computation. 
Given that $p(0|t)$ in~\eqref{eq:direct_p} is linear in $r_T$,  the terminal cost $V_f$ is quadratic in the decision variables.} quadratic program (QCQP).
If we drop the constraints~\eqref{eq:kappa_con_2} (compare Section~\ref{sec:term_cost_mod}) and use a polytopic terminal set $\mathcal{X}_f$, the optimization problem~\eqref{eq:MPC} is a (linearly constrained) quadratic program (QP). 

%------linear cost 
In the special case that $\ell$ is linear, i.e., $\ell=x^\top q(t,y)$, the vector $p$ can be explicitly computed (independent of the online optimized reference $r_T$) for a given price $y$.  
Furthermore, since $S=0$, the terminal cost is linear ($P=0$). 
Thus, if a polyhedral terminal set is chosen, the proposed scheme with a linear cost $\ell$ only requires the solution to a linear program (LP), which can be done efficiently. 
Furthermore, with a polytopic incremental Lyapunov function $V_{\delta}$, the terminal cost in Proposition~\ref{prop:Term_alessandretti} can be formulated in the MPC problem~\eqref{eq:MPC} using linear constraints, resulting in an LP (or QP in case of $\ell$ quadratic). 

%periodicity constraint
For this special case with a prediction horizon of $N=0$, the proposed MPC scheme is almost equivalent to the periodicity constraint MPC proposed in~\cite{houska2017cost,wang2018economic}.
The main difference is that we use a linear terminal cost and the polyhedral constraint, instead of a terminal equality constraint (and the constraints~\eqref{eq:kappa_con_2} which can typically be neglected). %, which is also similar to \cite{wangrobust}.
This small difference in the design allows us to derive the desired performance guarantees, while the periodicity constraint MPC can lead to suboptimal performance, compare \cite[Example 6]{wang2018economic}.
We can get the same theoretical properties with a terminal equality constraint (TEC) and $N\geq \nu$, if we use Lemma~\ref{lemma:change_r_TEC}.

In case that some of the input variables $u$ are also subject to integer constraints (c.f. for example periodic scheduling problems with discrete decisions~\cite{risbeck2019unification} and the HVAC numerical example in Section~\ref{sec:HVAC}), the problem can be formulated as a mixed-integer linear program  (MILP).

%continuity condition
%(linear, compare e.g. Łojasiewicz inequality)
%!TEX root = ./EcoPeriodic.tex
%%%%%%%%%%%%%%%%%%%%%%%%%%%%%%%%%%%%%%%%%%%%%%%%%%%%%%%%%%%%%%%%%%%%%%%%%%%%%%% 
\subsubsection{Continuity condition - Assumption~\ref{ass:cont_orbit}}
\label{sec:convex_continuity}
In the following, we discuss sufficient conditions for Assumption~\ref{ass:cont_orbit}. 
Suppose that $\ell$ is continuously differentiable and $\mathcal{Z}_T$ is a convex polytope. 
Given a reference $r_T$, the direction of feasible changes $\Delta r_T$, which imply a decrease in $J_T$,  i.e. $\{\Delta r_T^\top \nabla J_T\leq 0,~r_T+\Delta r_T\in \mathcal{Z}_T\}$, is  a polytope. 
The condition in Assumption~\ref{ass:cont_orbit} reduces to the existence of a direction $\Delta r_T$ in this set, such that the directional derivative of $\ell$ is uniformly bounded relative to the directional derivative of $J_T$.

Although this condition is reasonable in the context of optimal periodic control, it is not necessarily satisfied for any convex problem. 
In particular, it is not valid if the (directional) derivative of $J_T$ vanishes, but the gradient of $\ell$ is (uniformly) lower bounded. 
This is for example the case, if the economic cost $J_T$ is quadratic and the inequality constraints are not active at the optimal periodic orbit, yielding a vanishing gradient of $J_T$.
If this problem occurs, we need to use the reformulation in Section~\ref{sec:term_cost_mod} to guarantee optimal performance without Assumption~\ref{ass:cont_orbit}.

Suppose that there exists a constant $\epsilon>0$, such that for any $r_T\in\mathcal{Z}_T$ which is not a local minimum, there exists a feasible direction $\Delta r_T$ with $\nabla J_T^\top \Delta r_T\leq -\epsilon \|\Delta r_T\|$. 
Then Assumption~\ref{ass:cont_orbit} is satisfied with some finite $c_{\kappa}$, if $\ell$ is Lipschitz continuous. 
Note that such a directional  derivative always exists if, e.g.,  $\ell$ is linear.  

%contant $\beta$
%!TEX root = ./EcoPeriodic.tex
%%%%%%%%%%%%%%%%%%%%%%%%%%%%%%%%%%%%%%%%%%%%%%%%%%%%%%%%%%%%%%%%%%%%%%%%%%%%%%%
\subsection{Constant parameter $\beta$}
\label{sec:const_beta}
In the following, we briefly discuss how the performance bounds in Proposition~\ref{prop:art} can be approximately guaranteed with a constant weight $\beta$. 
In particular, in~\cite{fagiano2013generalized} a competing approach to~\cite{muller2013economic,muller2014performance} has been considered with a constant weight $\beta$. 
Instead of changing the weight $\beta$ online to achieve (locally) \textit{optimal} performance, a fixed weight $\beta$ is considered and a suboptimality bound on the performance is established. 
The following proposition shows that the same result applies here, as an alternative to Proposition~\ref{prop:art}, similar to~\cite[Prop.~2]{fagiano2013generalized}. 
\begin{proposition}
Let Assumptions \ref{ass:term_simple} and ~\ref{ass:cont_orbit} hold with a bounded terminal cost $V_f$ and assume that $y(t)$ and $\beta(t)$ are constant. 
Assume that the optimization problem~\eqref{eq:MPC} is feasible at time $t$.
There exists a function $\underline{\beta}$, such that for any $\epsilon>0$, $\beta\geq \underline{\beta}(\epsilon)$ implies $\kappa(t+1)\leq J_{T,\min}(x(t),t,y,\kappa_j(t))+\epsilon$. 
\end{proposition}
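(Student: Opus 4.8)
The plan is to compare the optimal value $W(t+1)$ of the MPC problem at time $t+1$ against the cost of a specific candidate solution, chosen so that its reference part is (close to) the minimizer $\bar r_T$ attaining $J_{T,\min}(x(t),t,y,\kappa_j(t))$ in~\eqref{eq:J_T_min}. First I would observe that, since $\beta$ and $y$ are constant, the standard recursive-feasibility candidate from Proposition~\ref{prop:feas} applied at time $t$ yields a feasible point of~\eqref{eq:MPC} at $t+1$ whose artificial reference is the shifted minimizer; but to get a bound involving $J_{T,\min}$ I instead want the candidate reference at time $t+1$ to be (a feasible shift of) the orbit $\bar r_T\in\overline{\mathcal R}_N(x(t),t,y,\kappa_j(t))$ that minimizes~\eqref{eq:J_T_min}. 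The feasibility of such a candidate for the state/input part follows from the definition of $\overline{\mathcal R}_N$ together with Assumption~\ref{ass:term_simple} (the terminal set around $\bar r_T$ can be entered within $N$ steps), and the $\kappa$-constraints~\eqref{eq:kappa_con_1}--\eqref{eq:kappa_con_2} hold because $\bar r_T$ lies in $\overline{\mathcal R}_N$ by construction.

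Next, evaluating the cost of this candidate and using optimality of $W(t+1)$ gives an inequality of the form
\begin{align*}
W(t+1)\le \sum_{k=0}^{N-1}\ell(\cdot)+V_f(\cdot)+\beta\, J_T(\bar r_T,\cdot),
\end{align*}
while the dynamic-programming-type decrease from Assumptions~\ref{ass:term_simple} and~\ref{ass:B1} (as in~\eqref{eq:W_1}, now with $\gamma\equiv 0$) gives a matching lower bound $W(t+1)\ge W(t)-\ell(x(t),u(t),t,y(t))+\kappa_{T-1}(t+1)$, and, crucially, $W(t)$ contains the term $\beta\,\kappa(t)=\beta\,J_T(r_T^*(\cdot|t-1),\cdot)\ge \beta\,\kappa(t+1)$ by the monotonicity of $\kappa$ established after~\eqref{eq:kappa_def}. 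Combining the two and cancelling the non-$\beta$ terms (which are bounded uniformly, by boundedness of $\ell$, $V_f$ over the compact constraint sets) I would isolate
\begin{align*}
\beta\big(\kappa(t+1)-J_{T,\min}(x(t),t,y,\kappa_j(t))\big)\le C,
\end{align*}
where $C$ is a finite constant depending only on the uniform bounds on $\ell$ and $V_f$ (and on $N$, $T$), independent of $\beta$. Hence setting $\underline\beta(\epsilon):=C/\epsilon$ yields $\kappa(t+1)\le J_{T,\min}(x(t),t,y,\kappa_j(t))+\epsilon$ whenever $\beta\ge\underline\beta(\epsilon)$.

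The main obstacle I anticipate is bookkeeping the $\kappa$-terms correctly: I must make sure that the candidate reference at $t+1$ not only has small $J_T$ but actually satisfies~\eqref{eq:kappa_con_1}--\eqref{eq:kappa_con_2}, which is exactly why the minimization in~\eqref{eq:J_T_min} is taken over $\overline{\mathcal R}_N$ rather than $\mathcal R_N$; and I must verify that $W(t)\ge \beta\,\kappa(t+1) + (\text{bounded})$ so that the $\beta$-proportional slack on the left genuinely appears with the right sign after cancellation. A minor subtlety is that $\overline{\mathcal R}_N(x(t),t,y,\kappa_j(t))$ could in principle be empty, but feasibility of~\eqref{eq:MPC} at time $t$ (hence at $t+1$, via Proposition~\ref{prop:feas}) together with the existence of $r_T^*(\cdot|t)$ guarantees it is nonempty, so $J_{T,\min}$ is well defined. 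Everything else is a routine estimate using boundedness of the stage and terminal costs on the compact sets $\mathcal Z$, $\mathcal Z_T$, $\mathbb Y$.
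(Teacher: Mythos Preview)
Your overall target inequality $\beta\big(\kappa(t+1)-J_{T,\min}\big)\le C$ and the choice $\underline\beta(\epsilon)=C/\epsilon$ are exactly right, but the route you propose to get there contains a genuine error and an unnecessary detour.

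The error is the direction of the dynamic-programming inequality. From the candidate in Proposition~\ref{prop:feas} and Assumption~\ref{ass:term_simple} (i.e.\ inequality~\eqref{eq:W_1} with $\gamma\equiv 0$) one obtains
\[
W(t+1)\;\le\;W(t)-\ell(x(t),u(t),t,y)+\kappa_{T-1}(t+1),
\]
an \emph{upper} bound on $W(t+1)$, not the lower bound you state. A feasible candidate can never bound the optimal value from below, so this step cannot supply the $\beta\,\kappa(t+1)$ term with the sign you need. A second, related issue: your candidate reference at time $t+1$ would have to satisfy~\eqref{eq:kappa_con_1}--\eqref{eq:kappa_con_2} with the memory states $\kappa_j(t+1)$, which are determined by $r_T^*(\cdot|t)$; membership of $\bar r_T$ in $\overline{\mathcal R}_N(x(t),t,y,\kappa_j(t))$ only guarantees the constraint with $\kappa_j(t)$.

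Both problems disappear if you compare at time $t$ instead of $t+1$, which is what the paper does. By definition of $\overline{\mathcal R}_N(x(t),t,y,\kappa_j(t))$, the minimizer $r_{\min}(t)$ of~\eqref{eq:J_T_min} together with a feasible pair $(\tilde x,\tilde u)$ is itself a feasible candidate for~\eqref{eq:MPC} \emph{at time $t$}. Optimality gives $W(t)\le \tilde W$. Since $W(t)$ already contains $\beta\,J_T(r_T^*(\cdot|t),t+N,y)=\beta\,\kappa(t+1)$ and $\tilde W$ contains $\beta\,J_{T,\min}(t)$, subtracting yields
\[
\beta\big(\kappa(t+1)-J_{T,\min}(t)\big)\;\le\;\eta,
\]
with $\eta$ a uniform bound on the difference of the stage- and terminal-cost parts (boundedness of $\ell$, $V_f$ on compact sets, $N$ finite). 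Your final step then follows with $\underline\beta(\epsilon)=\eta/\epsilon$. So the fix is simply to drop the passage through $W(t+1)$ and the DP inequality and argue directly at time $t$.
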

\begin{proof}
Denote the minimizer and minimum of~\eqref{eq:J_T_min} at time $t$ by $r_{\min}(t)$ and $J_{T,\min}(t)$, respectively. 
By definition, there exists a feasible input sequence $\tilde{u}$ with corresponding state sequence $\tilde{x}$ such that $r_{\min}(t)$ satisfies the constraints in~\eqref{eq:MPC}. 
Due to optimality, the cost
\begin{align*}
\tilde{W}=&\sum_{k=0}^{N-1}\ell(\tilde{x}(k),\tilde{u}(k),t+k,y)\\
&+V_f(\tilde{x}(N),r_{\min}(t),t+N,y)
+\beta J_{T,\min}(t),
\end{align*}
satisfies $W(t)\leq\tilde{W}$, which is equivalent to
\begin{align*}
&\beta (J_T(r_T^*(\cdot|t),t+N,y)-J_{T,\min}(t))\\
\leq &
\sum_{k=0}^{N-1}\ell(\tilde{x}(k),\tilde{u}(k),t+k,y)- \ell(x^*(k|t),u^*(k|t),t+k,y)\\
&+V_f(\tilde{x}(N),r_{\min}(t),t+N,y)\\
&-V_f(x^*(N|t),r_T^*(\cdot|t),t+N,y)\\
\leq& \eta,
\end{align*}
with some constant $\eta>0$.
The last inequality follows from boundedness of $\ell$, $V_f$ and $N$ finite. 
This inequality directly implies
\begin{align*}
\kappa(t+1)=J_T(r_T^*(\cdot|t),t+N,y)\leq J_{T,\min}(t)+\epsilon,
\end{align*}
for $\beta\geq \underline{\beta}(\epsilon):=\eta/\epsilon$. 
\end{proof} 
Thus, for a large enough weight $\beta$, the cost of the artificial periodic orbit $r_T$ is arbitrarily ($\epsilon$) close to the cost of the optimal reachable periodic orbit~\eqref{eq:J_T_min}. 
Combining this result with Lemma~\ref{lemma:local_min} and the stronger terminal ingredients (Ass.~\ref{ass:term}), $J_{T,\min}$ is a local minimizer to~\eqref{eq:opt_periodic}.
Correspondingly, it is possible to derive performance bounds similar to Theorem~\ref{thm:main} with an additional suboptimality term $\epsilon$, compare~\cite[Thm.~2]{fagiano2013generalized}. 
%

%!TEX root = ./EcoPeriodic.tex
%%%%%%%%%%%%%%%%%%%%%%%%%%%%%%%%%%%%%%%%%%%%%%%%%%%%%%%%%%%%%%%%%%%%%%%%%%%%%%%
\section{Numerical Examples}
\label{sec:num}
The following examples demonstrate the applicability of the proposed framework to  dynamic operation and online changing conditions. 
We first consider a simple HVAC systems~\cite{risbeck2019economic}, where dynamics, cost and constraints are periodically time-varying and discrete inputs are considered.
Then we consider the classical problem of increasing the yield of continuous stirred-tank reactors (CSTR) with dynamic operation, compare~\cite{bailey1971cyclic}. 
In this example, we compare the performance of the proposed approach with  periodicity constraint MPC~\cite{houska2017cost,wang2018economic},  tracking MPC formulations~\cite{limon2014single,JK_periodic_automatica} and MPC without terminal constraints~\cite{grune2013economic,muller2016economic,grune2018economic}.
In addition, we study the effect of the various degrees of freedoms in the formulation (terminal ingredients (Sec.~\ref{sec:term}), alternative cost formulations (Sec.~\ref{sec:term_cost_mod}), continuous-time formulations (Remark~\ref{rk:cont_time})) on closed-loop performance.
 
%!TEX root = ./EcoPeriodic.tex
%%%%%%%%%%%%%%%%%%%%%%%%%%%%%%%%%%%%%%%%%%%%%%%%%%%%%%%%%%%%%%%%%%%%%%%%%%%%%%% 
\subsection{Building cooling}
\label{sec:HVAC}
In the following, we show the applicability of the proposed framework to periodic optimal control problems subject to online changing performance measures. 
We consider a simple building temperature evolution example from~\cite[Sec.~IV.A]{risbeck2019economic} governed by
\begin{align*}
m\dfrac{d}{dt}T(t)=-k(T(t)-T_{\text{amb}}(t))+q_{\text{amb}}(t)-q(t),
\end{align*}
with air temperature $T$, cooling rate $q$, ambient temperature $T_{\text{amb}}$, rate of direct heat by the ambient $q_{\text{amb}}$ and  model constants $m,k>0$. 
The cooling rate $q$ is generated using $N_{\text{chiller}}=2$ chillers and is subject to the following (time-invariant) disjoint constraint set 
\begin{align*}
q\in\mathbb{U}=\{0\}\times [0.75,1]\times [1.5,2],
%q_{\min} v\leq q\leq q_{\max} v,\quad v\in\{0,1,\dots,N_{\text{chiller}}\}.
\end{align*}
which is implemented using an additional discrete decision variable $v\in\{0,1,2\}$, corresponding to the number of active chillers. 
 The state is subject to time-varying comfort bounds centred around $T=0$:
\begin{align*}
T_{\min}(t) \leq T\leq T_{\max}(t).
\end{align*}
The corresponding discrete-time system is given by
\begin{align*}
x(t+1)=A x(t)+B u(t)+e(t),~(x(t),u(t))\in\mathcal{Z}(t),
\end{align*}
with $x=T$, $u=q$ and periodically time-varying $e(t)$ based on $q_{\text{amb}}$, $T_{\text{amb}}$. 
The economic cost function is to minimize the electricity cost\footnote{%
The proposed framework can also consider peak-demand prices using the formulation in~\cite{risbeck2019economic} or unpredictably changing constraint sets (reflecting comfort levels set by a user) using soft constraints (c.f. Remark~\ref{rk:price}). 
} given by
$\ell(x,u,t,y)=\rho(t,y)\cdot u$, with the price profile $\rho(t,y)$.

In~\cite{risbeck2019economic}, for fixed periodic price signals $\rho(t)$, it was shown that periodic economic formulations~\cite{zanon2017periodic,alessandretti2016convergence,risbeck2019unification} outperform tracking formulations~\cite{limon2014single,JK_periodic_automatica}. 
We consider the more challenging problem, where the price profile $\rho$ changes each day. 
Furthermore, we assume that only the price profile for the current day is available as a forecast, which is modelled using the external variable $y$  that changes every 24 hours. 
The considered  price  profile $\rho$ is taken from the real data considered in~\cite{rawlings2018economic} over the span of one week. %\cite{patel2018case}

We implemented the proposed approach using $N=2$, $\beta=10$ with the modified cost from Prop.~\ref{prop:av_perf_2}  and a terminal equality constraint (TEC), which also satisfies the properties in Lemma~\ref{lemma:change_r} for the considered scalar stable system. 
The resulting MPC optimization problem~\eqref{eq:MPC} is a small scale mixed-integer linear program (MILP), which was solved to optimality using \textit{intlinprog} from \textit{MATLAB}. 
The resulting closed loop can be seen in Figure~\ref{fig:HVAC}. 
The closed loop yields a periodic like operation for each day, with small changes between each day based on the different price profile $\rho$. 
The adjustment of the closed-loop response based on the price $\rho$ can be directly seen with the applied input $u$, which is always at a maximum when the electricity price $\rho$ is low. 
We also compared the proposed framework to the optimal\footnote{%
To allow for a consistent comparison, the same initial and final state is considered. 
} operation in Figure~\ref{fig:HVAC}, assuming full knowledge of the future price profile $\rho(t)$ for all coming days.  
The proposed framework results in state and input trajectories very similar to the optimal operation, resulting in a minimal increase of $0.1\%$ in the overall electricity price.

\begin{figure}[hbtp]
\begin{center}
\includegraphics[width=0.41\textwidth]{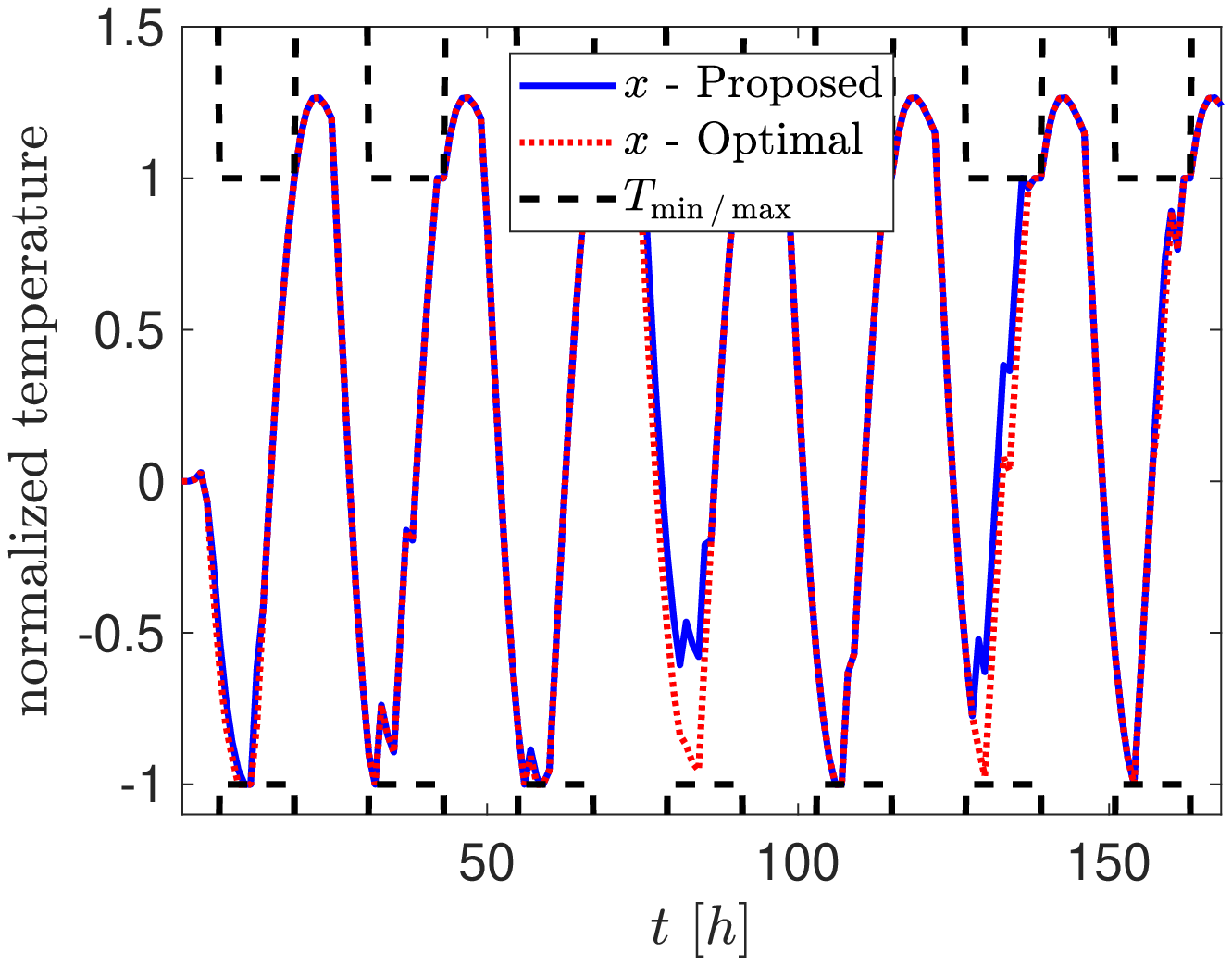}
\includegraphics[width=0.4\textwidth]{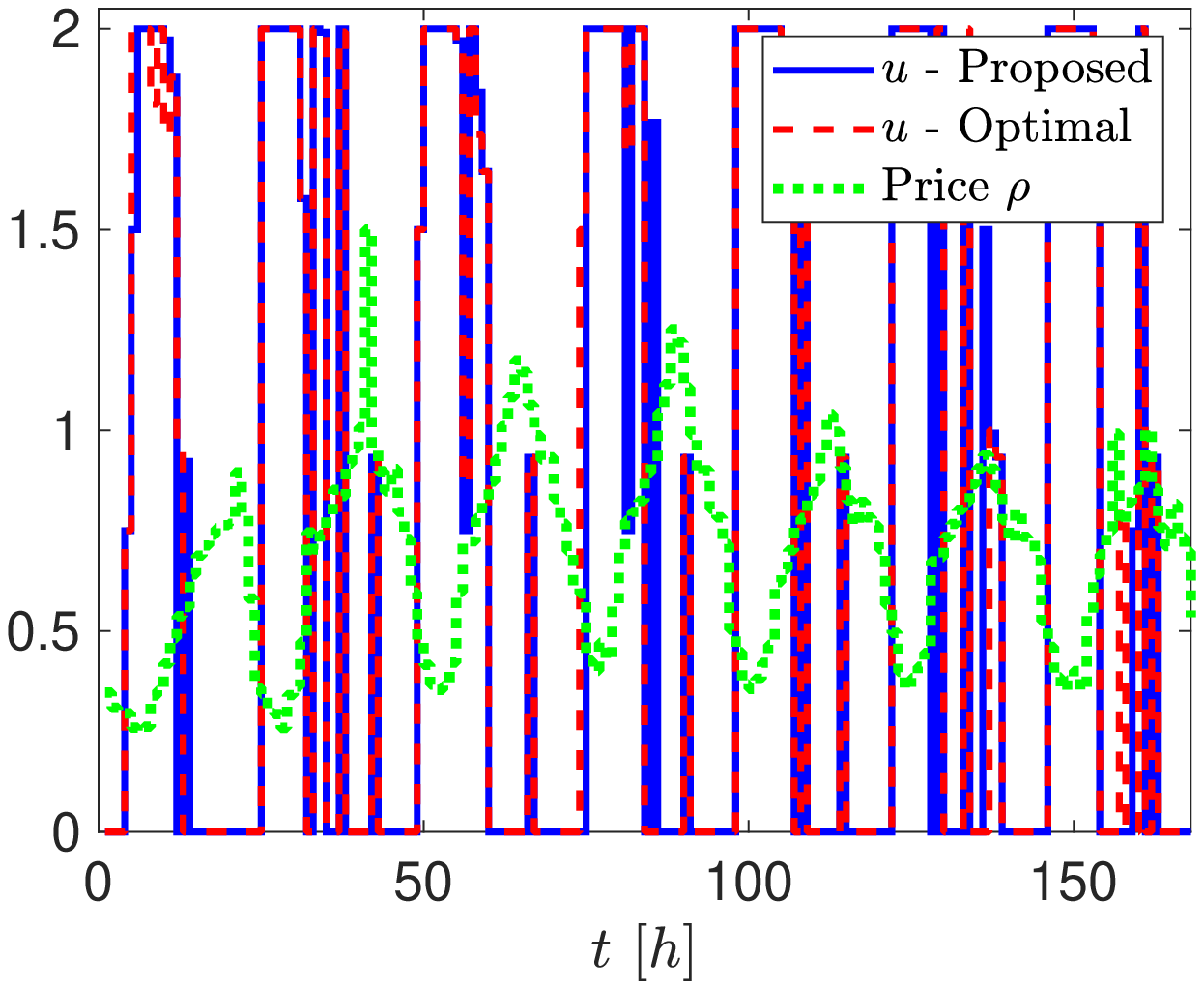}
\end{center}
\caption{%
Transient performance under online changing price signals $\rho$ for HVAC. 
Top: Closed-loop temperature $x=T$ of the proposed approach (blue, solid) and the optimal operation (red, dotted), with time-varying constraints $T_{\min/\max}$ (black, dashed). 
Bottom: Closed-loop applied cooling rate $u=q$ for the proposed approach (blue, solid) and optimal operation (red, dotted); and the price signal $\rho$ (green, dotted).  
}
\label{fig:HVAC}
\end{figure}

%!TEX root = ./EcoPeriodic.tex
%%%%%%%%%%%%%%%%%%%%%%%%%%%%%%%%%%%%%%%%%%%%%%%%%%%%%%%%%%%%%%%%%%%%%%%%%%%%%%% 
\subsection{Continuous stirred-tank reactor}
\label{sec:CSTR}
In the following time-invariant example, we first demonstrate average performance improvement of the proposed framework compared to fixed periodic operation or steady-state operation. 
Then we show reliable economic performance under online changing dynamic operation due to changing cost functions. 
\subsubsection*{System model}
We consider a continuous-time model of a continuous stirred tank reactor (CSTR)
\begin{align*}
\begin{pmatrix}
\dot{x}_1\\\dot{x}_2\\\dot{x}_3
\end{pmatrix}
=&\begin{pmatrix}
1-x_1-10^4x_1^2\exp\left(\frac{-1}{x_3}\right)-400x_1\exp\left(\frac{-0.55}{x_3}\right)\\
10^4 x_1^2\exp\left(\frac{-1}{x_3}\right)-x_2\\
%400x_1\exp(-0.55/x_3)-x_2\\
u-x_3
\end{pmatrix},
\end{align*}
where $u\in\mathbb{R}$ is related to the heat flux and $x=(x_1,~x_2,~x_3)^\top\in\mathbb{R}^3$ correspond to the concentration of the reaction, the desired product and the temperature, compare~\cite{bailey1971cyclic,muller2014convergence}, \cite[Sec.~3.4]{faulwasser2018economic}. 
The constraints are 
\begin{align*}
\mathcal{Z}_r=&[0.05,0.4]\times[0.05,0.2]^2\times[0.059,0.439],\\
\mathcal{Z}=&[0.03,1]^3\times[0.049,0.449],
\end{align*}
 and we consider the economic stage cost $\ell(x,u,y)=-x_2+y\cdot(u-u_s)^2$ with $u_s=0.1491$ and $y\in \mathbb{Y}=[0,1]$.
If the external parameter is $y=0$, the stage cost $\ell$ tries to maximize the production of the desired product $x_2$. 
The online tunable part in the cost function is a regularization of the input $u$ relative to the optimal steady-state input $u_s$. 
 For $y=1$ the system is optimally operated at a steady-state $(x_s,u_s)$, while for $y=0$ dynamic operation  significantly outperforms steady-state operation, compare~\cite[Sec.~3.4]{faulwasser2018economic}. 
Hence, treating $y$ as an external variable allows a user to smoothly transition between steady-state and dynamic operation. 
The discrete-time model is defined with a fourth order Runge-Kutta discretization and a sampling\footnote{
In~\cite{muller2014convergence,faulwasser2018economic} a sampling time of $h=0.1$ is used.  
However, with the considered fourth order explicit Runge-Kutta discretization, a sampling time of $h=0.1$ does not preserve stability of the continuous-time system. 
In addition, we consider $x_i\geq 0.03$ instead of $x_i\geq 0$, to avoid discretization errors for $x_i\approx 0$. } time of $h=0.05$. 
For the following simulations, the initial condition is always chosen as the optimal steady-state. % $x_s$.

\subsubsection*{Average performance improvement}
We first consider the problem of maximizing the concentration $x_2$ ($y=0$) to show average performance improvements. 
In the absence of transient changes ($y$ constant), the average performance of periodicity constraint MPC~\cite{houska2017cost,wang2018economic} and tracking formulations~\cite{limon2014single,JK_periodic_automatica} are equivalent (assuming that convergence is achieved).
Similarly, the proposed framework yields the same asymptotic performance as~\cite{zanon2017periodic,alessandretti2016convergence,risbeck2019unification} assuming that $r_T$ converges. 

We implemented the proposed approach with $T\in\{1,10,20\}$ and horizons $N\in[1,50]$ and tested different proposed designs regarding the terminal ingredients ($V_f,\mathcal{X}_f$,Corollary~\ref{corol:term_generic}, Remark~\ref{rk:combine_linear_quadratic_pdf}, Lemma~\ref{lemma:change_r_TEC})  and the cost function ($\tilde{V}_f$, Lemma~\ref{lemma:term_cost_mod}, Prop.~\ref{prop:av_perf_2}). 
The detailed numerical results for all the considered implementations can be found in Appendix~\ref{app:CSTR}. 
In the following, we only consider the approach utilizing the positive definite terminal cost $V_f$ from Remark~\ref{rk:combine_linear_quadratic_pdf} (based on Alg.~\ref{alg:offline}) in combination with the modified cost $\tilde{V}_f$ from Lemma~\ref{lemma:term_cost_mod}, which seems most suitable for practical applications (in terms of computational complexity and performance).
Figure~\ref{fig:EcoPerform} exemplarily shows the performance of this approach with $T\in\{1,10,20\}$ for increasing $N$ in comparison to the average cost at the optimal periodic orbit of length $T=\{10,20\}$ and the optimal steady-state ($T=1$).   
We note that, neglecting small initial deviations\footnote{%
The average performance is computed in the interval $t\in[1000,2000]$ starting with initial condition $x=x_s$. 
Thus, for very short horizons $N$, $\kappa$ has not yet converged.}, 
the proposed EMPC outperforms optimal periodic operation with the same period length $T$, even though a constant value $\beta$ is used (Ass.~\ref{ass:B1_2} does not hold).
We can see in general that the performance increases (for both purely periodic operation and the proposed approach) if we increase $T$ and $N$. 
This implies that the proposed framework utilizing periodic orbits ($T>1$) and additional predictions ($N\geq 1$) with a purely economic formulation can outperform periodicity constrained formulations~\cite{houska2017cost,wang2018economic} ($N=0$), steady-state formulations~\cite{muller2013economic,muller2014performance,fagiano2013generalized,ferramosca2014economic} ($T=1$) and periodic  tracking formulations~\cite{limon2014single,limon2016mpc,limon2018nonlinear,JK_periodic_automatica} ($\ell$ positive definite), even in case of fixed optimal operation ($y$ constant).

\begin{figure}[hbtp]
\begin{center}
\includegraphics[width=0.45\textwidth]{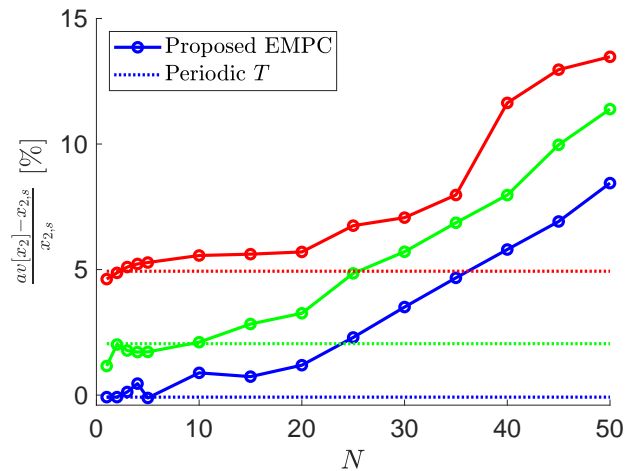}
\end{center}
\caption{
Average performance improvement due to dynamic operation relative to the optimal steady-state $x_s$ - CSTR. 
Periodic operation (dotted) vs. proposed economic MPC scheme (solid with circles) for $T=1$ (blue), $T=10$ (green) and $T=20$ (red).}
\label{fig:EcoPerform}
\end{figure}

\subsubsection*{Transient performance under online changing conditions}
  \begin{figure}[hbtp]
\begin{center}
\includegraphics[width=0.45\textwidth]{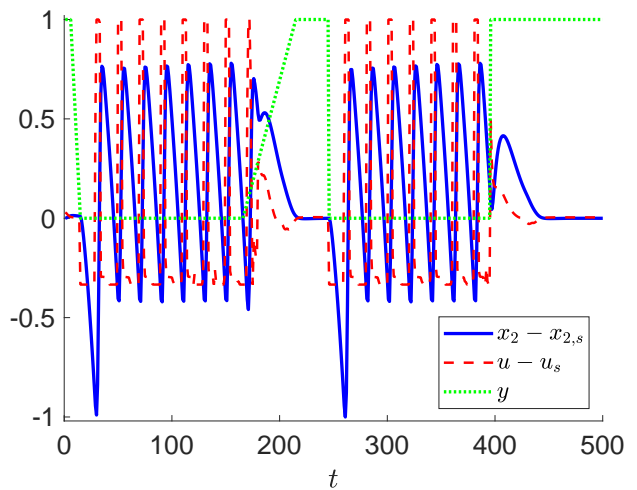}
\end{center}
\caption{Dynamic operation under online changing conditions: deviation in production $x_2-x_{2,s}$ (blue, solid), deviation in heat flux $u-u_s$ (red, dashed), price signal $y$ (green, dotted). All signals are normalized to $|x|\leq 1$. }
\label{fig:dynEcoPerform}
\end{figure}
In the following, we study the performance of the proposed scheme under online changing conditions, i.e., $y_t$ unpredictably time-varying. 
The resulting closed loop for the proposed MPC scheme with $N=10$, $T=20$ can be seen in Figure~\ref{fig:dynEcoPerform}.
As $y_t\rightarrow 0$ (e.g. $t=15$  or $t=246$), the system operates dynamically to increase production $x_2$ and once  the weight $y$ on input deviations increases the system quickly minimizes the control effort and smoothly converges (e.g. $t\in[185,246]$, $t\in[400,470]$) to the new optimal mode of operation, i.e.,  the steady-state $x_s$. 
In this scenario, the MPC scheme on average still increases production by $2.8\%$ compared to steady-state operation, while a $5\%$ increase was achieved for $y\equiv 0$ (c.f. Fig.~\ref{fig:EcoPerform}). 
For comparison, the performance of the proposed approach, the tracking MPC~\cite{limon2014single,JK_periodic_automatica} with $T=20$, $N\in[0,50]$  and the periodicity constraint MPC~\cite{houska2017cost,wang2018economic} with $T\in[0,90]$ can be seen in Figure~\ref{fig:dynEcoPerform2}. 
First, note that the number of decision variables in a condensed formulation are $n+m\cdot (T+N)$ for the proposed approach and the tracking MPC~\cite{limon2014single,JK_periodic_automatica},  and $m\cdot T$ for the periodicity constraint MPC~\cite{houska2017cost,wang2018economic} ($N=0$). 
Thus, the x-axis ($N+T$) in Figure~\ref{fig:dynEcoPerform2} is a measure for the computational complexity. 
First, note that we can further improve the performance of the proposed approach by increasing $N$. 
Similarly, the performance of the periodicity constraint MPC~\cite{houska2017cost,wang2018economic} improves for a larger period length $T$, but at a smaller pace.
Thus, given the same number of decision variables, the proposed approach can achieve a better performance. 
If we consider the tracking MPC, for small values of $N$, the performance is similar to the proposed economic formulation. 
However, in contrast to the proposed formulation, the economic performance of the tracking MPC does not improve significantly with a large horizon $N$. 
Additional numerical results can be found in Appendix~\ref{app:CSTR}. 

To summarize, in the considered example we have shown the applicability of the proposed approach to nonlinear economic optimal control problems. 
In particular, the proposed approach:
(i) improves performance compared to (fixed) steady-state or periodic operation,
(ii) reliably operates under online changing conditions,
(iii) in general achieves better performance than periodicity constraint formulations~\cite{houska2017cost,wang2018economic} or tracking formulations~\cite{limon2014single,JK_periodic_automatica}.

\begin{figure}[hbtp]
\begin{center}
\includegraphics[width=0.45\textwidth]{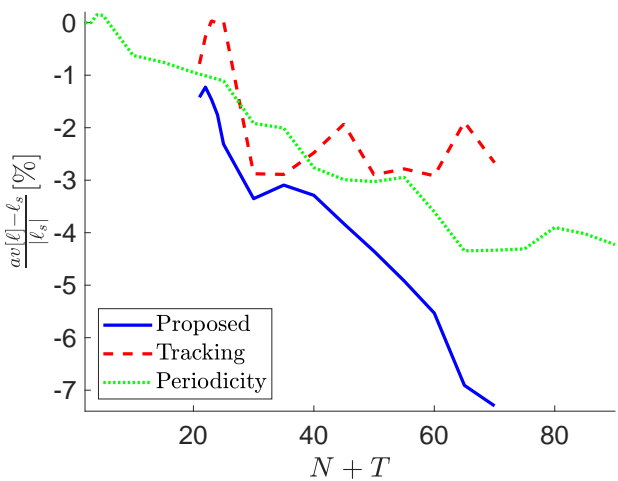}
\end{center}
\caption{%
Transient performance $\ell(x,u,y)=-x_2+y\cdot (u-u_s)^2$ in dynamic operation under online changing conditions for CSTR relative to steady-state operation $\ell_s=\ell(x_s,u_s)$: Proposed approach (blue, solid), tracking MPC~\cite{limon2014single,JK_periodic_automatica} (red, dashed) with $T=20$ and $N\in[0,50]$ and periodicity constraint MPC~\cite{houska2017cost,wang2018economic} (green, dotted) with $T\in[0,90]$.  }
\label{fig:dynEcoPerform2}
\end{figure}

%!TEX root = ./EcoPeriodic.tex
%%%%%%%%%%%%%%%%%%%%%%%%%%%%%%%%%%%%%%%%%%%%%%%%%%%%%%%%%%%%%%%%%%%%%%%%%%%%%%%
\section{Conclusion}
\label{sec:sum}
We have presented an economic MPC framework that is applicable to nonlinear (periodically) time-varying problems and online changing operation conditions. 
We have shown recursive feasibility, constraint satisfaction and derived performance guarantees relative to periodic (locally) optimal operation. 
Interestingly, the problem of economic periodic operation requires additional techniques compared to optimal steady-state operation, which was also the case in~\cite{muller2016economic}.    
In particular, we used a novel continuity condition on the economic cost of periodic orbits, to reformulate the update scheme and constraints on the artificial periodic reference trajectory. 
In addition, we proposed novel offline computations to obtain suitable terminal ingredients. 
We have demonstrated the applicability and practicality of the proposed framework for economic periodic control with an HVAC and a CSTR example. %benchmark examples. 
The strength of the framework comes from the fact that many existing schemes~\cite{muller2013economic,muller2014performance,zanon2017periodic,amrit2011economic,alessandretti2016design,angeli2012average} or modified versions thereof~\cite{fagiano2013generalized,houska2017cost,wang2018economic,alessandretti2016convergence,angeli2016theoretical} are contained as special cases.

%Future
The practical application of the proposed framework to large scale HVAC systems is part of future work. 
A theoretical analysis of (bounded) online changing constraint sets/dynamics with robust performance guarantees is an open problem.

\bibliographystyle{IEEEtran}  
\bibliography{Literature}  
\clearpage

\appendix
%!TEX root = ./EcoPeriodic.tex
%%%%%%%%%%%%%%%%%%%%%%%%%%%%%%%%%%%%%%%%%%%%%%%%%%%%%%%%%%%%%%%%%%%%%%%%%%%%%%% 
\subsection{Continuous stirred-tank reactor - details}
\label{app:CSTR}
In the following, we provide additional details regarding the implementation of the CSTR example in Section~\ref{sec:CSTR} and investigate the economic performance of the different considered implementations in more detail.
\subsubsection*{Implementation details} 
In the implementation, we consider $c_{\kappa}=100$ $\beta=10$. 
The offline computation is done with an Intel Core i7 using the semidefinite programming (SDP) solver SeDuMi-1.3 \cite{sturm1999using} and the online optimization is done with CasADi~\cite{andersson2012casadi}. 
 The iterations in CasADi were stopped after $1000$ iterations, although typically the standard tolerance was satisfied.
Since the optimization problem~\eqref{eq:MPC} is not solved to optimality but a fixed number of iterations are done online, the resulting reference $r$ is not necessarily a feasible periodic orbit. 
Hence, the primal infeasibility of the reference $r$ should be taken into account when defining $\kappa_j$ in~\eqref{eq:close_3} (e.g. in terms of a penalty factor), to avoid persistent feasibility issues. 
Thus, we replaced \eqref{eq:close_3} by
\begin{align*}
\kappa_j(t+1)=& \ell(r^*_T(j+1|t),t+N+1+j,y(t+1))+10^3 \epsilon,
\end{align*}
 where $\epsilon$ is the largest constraint violation (measured in terms of the infinity norm)  in the periodicity constraint~\eqref{eq:MPC_periodic}. 
In addition,  we replaced the resulting optimized trajectory by the feasible candidate solution in case it has a worse cost (up to $10^{-4}$) or does not satisfy the constraints (up to $10^{-3}$), which ensures that the performance guarantees in Proposition~\ref{prop:av_perf} remain valid independent of numerical issues\footnote{%
While ideally any solver should guarantee that the resulting solution is no worse than any provided feasible initial solution~\cite{scokaert1999suboptimal}, this was not always the case in the considered implementation with IPOPT, probably due to the difficulty of initializing the dual variables.}.

\subsubsection*{Offline computations}
For the design of the terminal cost $V_f$  and the terminal set $\mathcal{X}_f$, we proceed along the lines of Algorithm~\ref{alg:offline}. 
We consider $S=\text{diag}(0,I_m)\succeq\ell_{rr}$ and $2\epsilon+\tilde{\epsilon}=0.1$ in~\eqref{eq:stage_output}. 
The matrices $P(r)$, $K(r)$ are computed using the LMIs from \cite[Lemma~2]{JK_QINF} and gridding $(x_1,x_2,x_3,u,u^+)$ using $20^3\cdot 2=16.000$ points, which takes 37~minutes.
The terminal set is $\mathcal{X}_f=\{x|~\|x-x_r\|_{P(r)}^2\leq \alpha\}$, with $\alpha=1.2\cdot 10^{-8}$.
The verification of $\alpha_1=\alpha_2=1.2\cdot 10^{-8}$ along the lines of \cite[Alg.~1]{JK_QINF} took approximately $45$ minutes.
For this example,  the convex formulation~\cite[Prop.~1]{JK_QINF} can only be used if the sampling time $h$ is reduced and the resulting terminal cost tends to be very conservative. 
For details on the different formulations, compare also~\cite[Example~1]{JK_QINF}. 
In addition, we computed a constant $c\approx 21$ offline, such that the simpler terminal cost proposed in Remark~\ref{rk:combine_linear_quadratic_pdf} also satisfies Assumptions~\ref{ass:term_simple} and \ref{ass:term}.  

 \subsection*{Average performance improvement}
We first consider the general economic performance with the fixed economic stage cost $\ell(x,u)=-x_2$, using $y=0$ and compare the performance relative to the performance at the optimal steady-state $x_s=(x_{1,s},x_{2,s},x_{3,s})=(0.0832;0.0846;0.1491)$. 
 The performance is specified as improvement (in $\%$) of the average concentration $x_2$ and is computed over the time-interval $t\in[1000,2000]$, where the first $1000$ steps are discarded to better reflect the asymptotic average performance.

 First, Figure~\ref{fig:CSTR_App1} shows that we can  in general achieve better performance by using a dynamic operation. 
 We can see that increasing the length $T$ of the optimal $T$-periodic orbit \eqref{eq:opt_periodic} can increase the average production by approximately $8\%$. 
 We would like to point out, that the ``optimal'' T-periodic orbits where computed using CasADi and may thus also represent local minima. 
In addition to the periodic orbit, we also considered and implemented an economic MPC scheme without any terminal ingredients or artificial periodic trajectories~\cite{grune2013economic,muller2016economic,grune2018economic}. 
We would first like to point out again, that in general for this setup (nonlinear system, unknown mode of optimal dynamic operation), guaranteeing a priori desired performance bounds with such an implementation  is difficult. 
Nevertheless, this implementation supports two important arguments of the proposed approach. 
First, we can see that for short horizons $N$ (e.g. $N\leq 20$ here) the performance of such a simple implementation without suitable modifications can be worse than steady-state performance and thus worse than implementing a simple tracking/stabilizing MPC.
Second, we can see for large horizons $N$, e.g. $N\geq 50$, that it is possible to obtain significantly better performance than periodic operation with a fixed period length $T$, which can be obtained with periodic tracking MPC schemes like~\cite{limon2014single,limon2016mpc,JK_periodic_automatica}. 
We also included a scheme with multi-step implementation, since there exist theoretical reasons to expect better performance if the \textit{optimal period length} is used~\cite{muller2016economic}.
However, as it is not clear if the system is optimally periodically operated and/or with which period length, we simply implemented a small value $\nu=3$, which, however, did not show any performance benefits (except for the reduced computational demand).

\begin{figure}[hbtp]
\begin{center}
\includegraphics[scale=0.45]{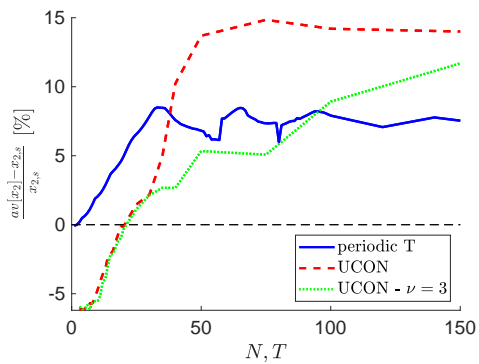}
\end{center}
\caption{Average performance improvement due to dynamic operation relative to optimal steady-state $x_s$ - CSTR. 
Periodic orbit (blue, solid, periodic) with period length $T$, unconstrained economic MPC (red, dashed, UCON) with horizon $N$ and unconstrained economic MPC with multi-step implementation using $\nu=3$ (green, dotted, UCON - $\nu=3$).}
\label{fig:CSTR_App1}
 \end{figure}

We implemented the proposed approach (Problem~\eqref{eq:MPC}) using $N\in[0,50]$, $T\in\{1,10,20\}$, $\beta=10$ and\footnote{%
We did not formally verify satisfaction of Assumption~\ref{ass:cont_orbit} with $c_{\kappa}=100$. Instead ,we simply checked numerically that the proposed approach was able to converge to (local) optimal periodic orbits $r_T$ with the chosen value.
} $c_{\kappa}=100$. 

For the terminal ingredients; we implemented the design in  Corollary~\ref{corol:term_generic} (QINF), the positive definite terminal cost $V_f$ from Remark~\ref{rk:combine_linear_quadratic_pdf} (QINF-pdf), and a terminal equality constraint (TEC) with $\nu=3$ (Lemma~\ref{lemma:change_r_TEC}).
The corresponding results can be seen in Figure~\ref{fig:CSTR_App2}.
%implicit p
 For the economic terminal cost $V_f$ from Corollary~\ref{corol:term_generic}, the constraints~\eqref{eq:compute_p} to compute $p(\cdot|t)$ are implemented using~\eqref{eq:con_p_XY} from Remark~\ref{rk:compute_p}.
 We also tested the alternative explicit formulation~\eqref{eq:direct_p}, however, except for the trivial case $T=1$, this lead to numerical difficulties and a significant increase in online iterations without any major benefit. 
%  multistep
 We also implemented the terminal equality constraint (TEC) MPC using a one-step implementation ($\nu=1$), which resulted in an improved performance for $T,N$ large.

If we use an artificial setpoint $(T=1$, top figure), we can a) clearly see that the performance further improves if we increase the prediction horizon $N$; b) we can see that a terminal cost/set (QINF) improves the performance relative to a terminal equality constraint (TEC).
For example, the performance with $N=50,T=1$ with TEC is worse than the performance with a terminal cost using $N=40$, $T=1$, thus showing the potential for performance improvement/computational saving using suitable terminal ingredients. 
On the other hand, the difference between the positive definite  terminal cost (QINF-pdf) and the economic terminal cost (QINF) is often small and thus the simple design in Remark~\ref{rk:combine_linear_quadratic_pdf} may be most favourable for practical implementations. 
For $T=10$ (middle figure) and $T=20$ (bottom figure) we can see similar performance differences for larger horizons $N$: We can achieve better performance with a large horizon $N$ or larger period length $T$, and we can achieve better performance if we use a terminal cost (QINF, QINF-pdf) instead of a terminal equality constraint (TEC).

\begin{figure}[hbtp]
\includegraphics[width=0.4\textwidth]{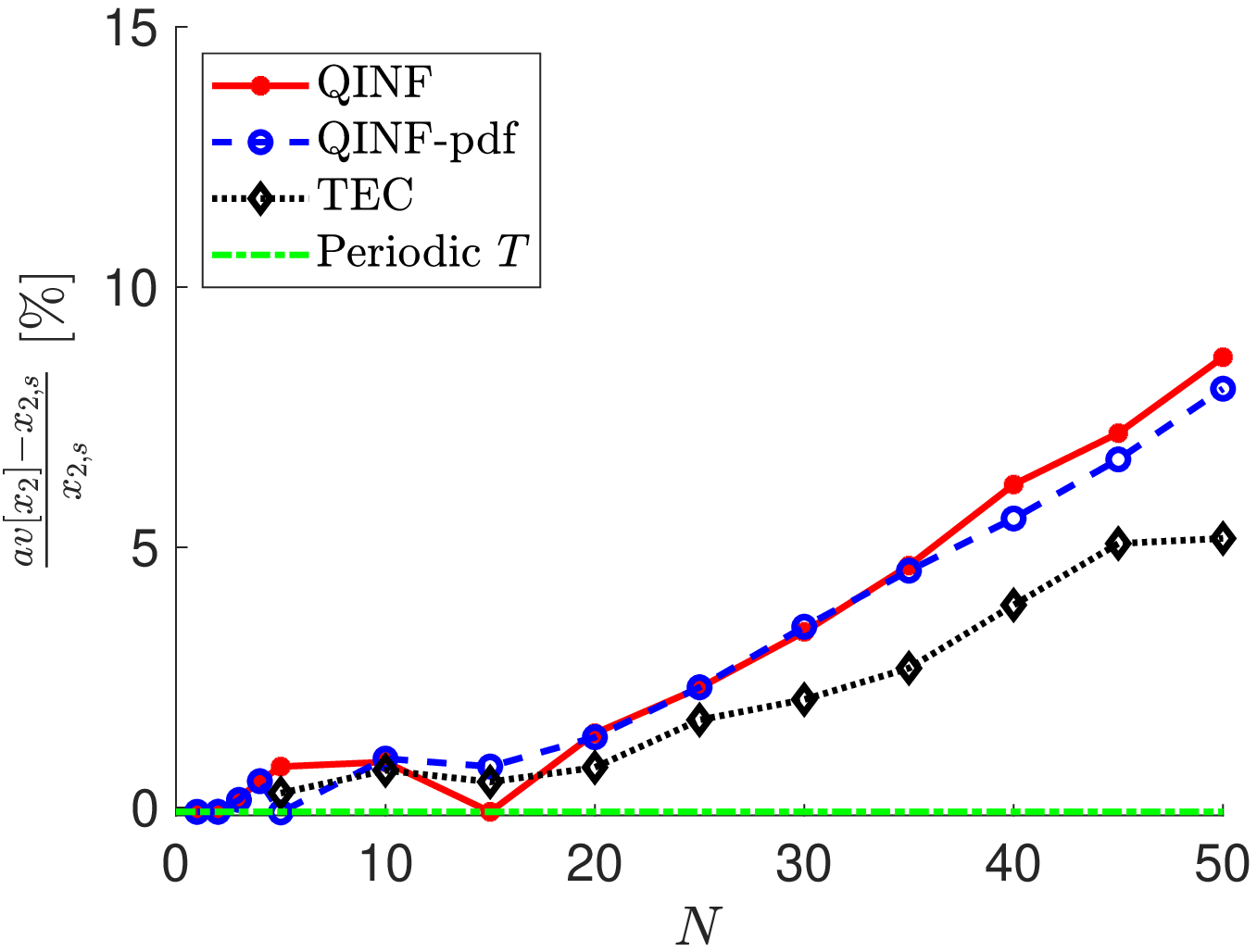}\\
\includegraphics[width=0.4\textwidth]{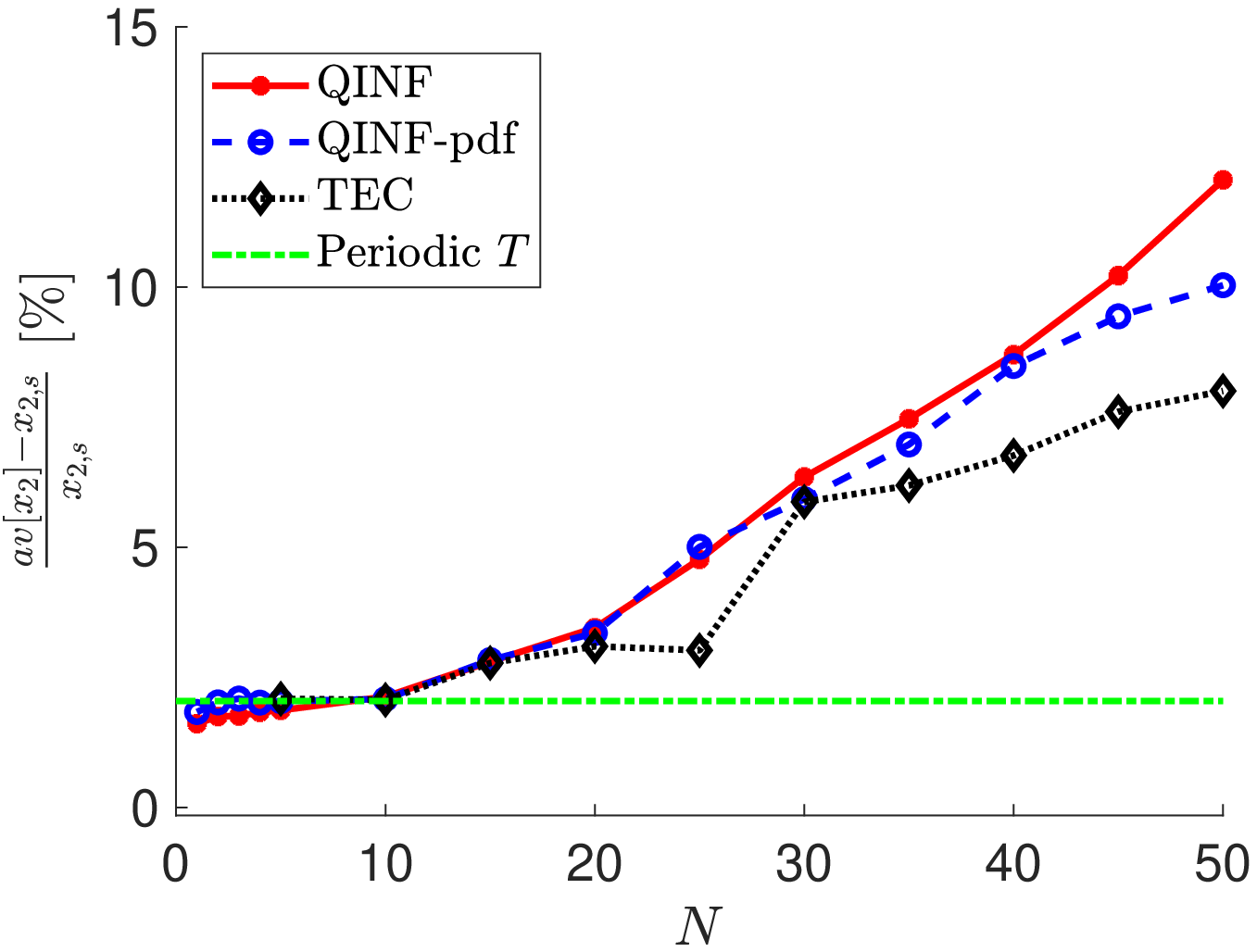}\\
\includegraphics[width=0.4\textwidth]{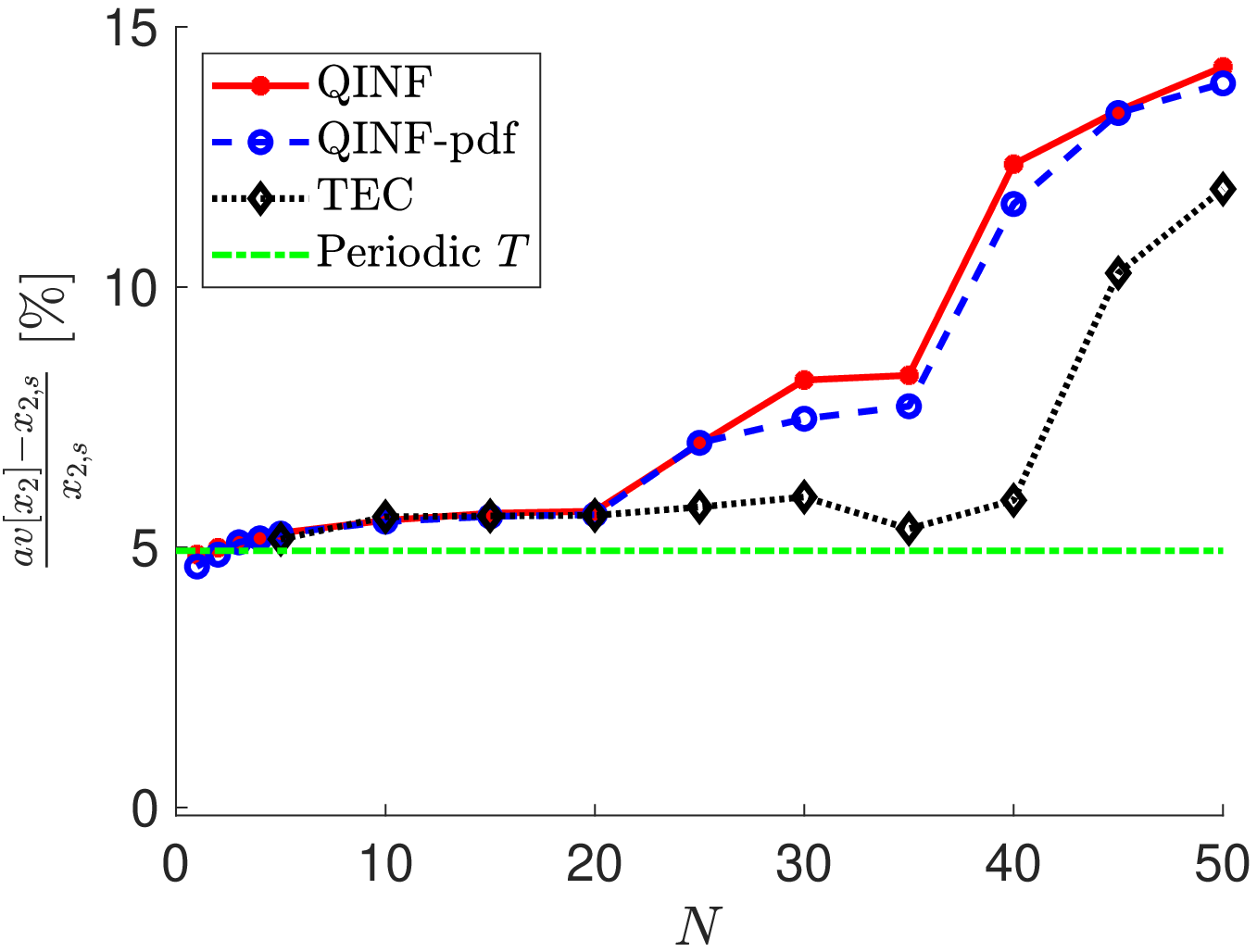}
\caption{Average performance improvement relative to optimal steady-state $x_s$ for $T=1$ (top), $T=10$ (middle) and $T=20$ (bottom). Approach with economic terminal cost (red, solid, QINF); positive definite terminal cost (blue, dashed, QINF-pdf);  terminal equality constraint (black, dotted, TEC) with prediction horizon $N\in[1,50]$.
Performance at the optimal $T$-periodic orbit is show in green, dashed for comparison.}
\label{fig:CSTR_App2}
 \end{figure}

We also implemented the different terminal ingredients with the modified cost $\tilde{V}_f$ from Lemma~\ref{lemma:term_cost_mod}, compare Figure~\ref{fig:CSTR_App3}. 
If we compare the results in Figure~\ref{fig:CSTR_App2} and Figure~\ref{fig:CSTR_App3}, the performance is essentially similar. 
Thus, the modified formula $\tilde{V}_f$ (Lemma~\ref{lemma:term_cost_mod}) does not seem to have any particularly negative impact on the performance, while reducing the computational demand by  replacing the constraints~\eqref{eq:kappa_con_1}--\eqref{eq:kappa_con_2} with \eqref{eq:kappa_con_new}.

\begin{figure}[hbtp]
\includegraphics[width=0.4\textwidth]{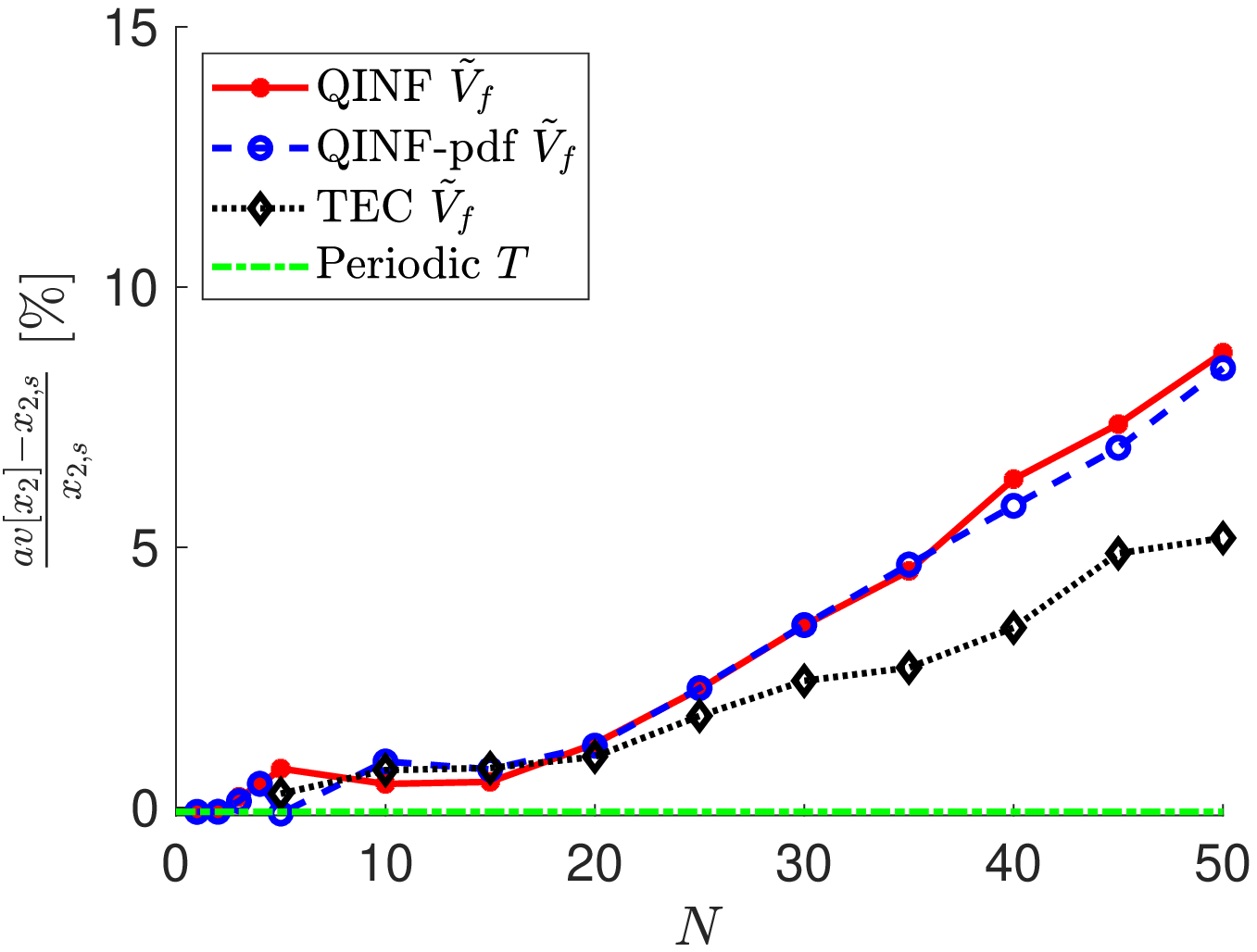}\\
\includegraphics[width=0.4\textwidth]{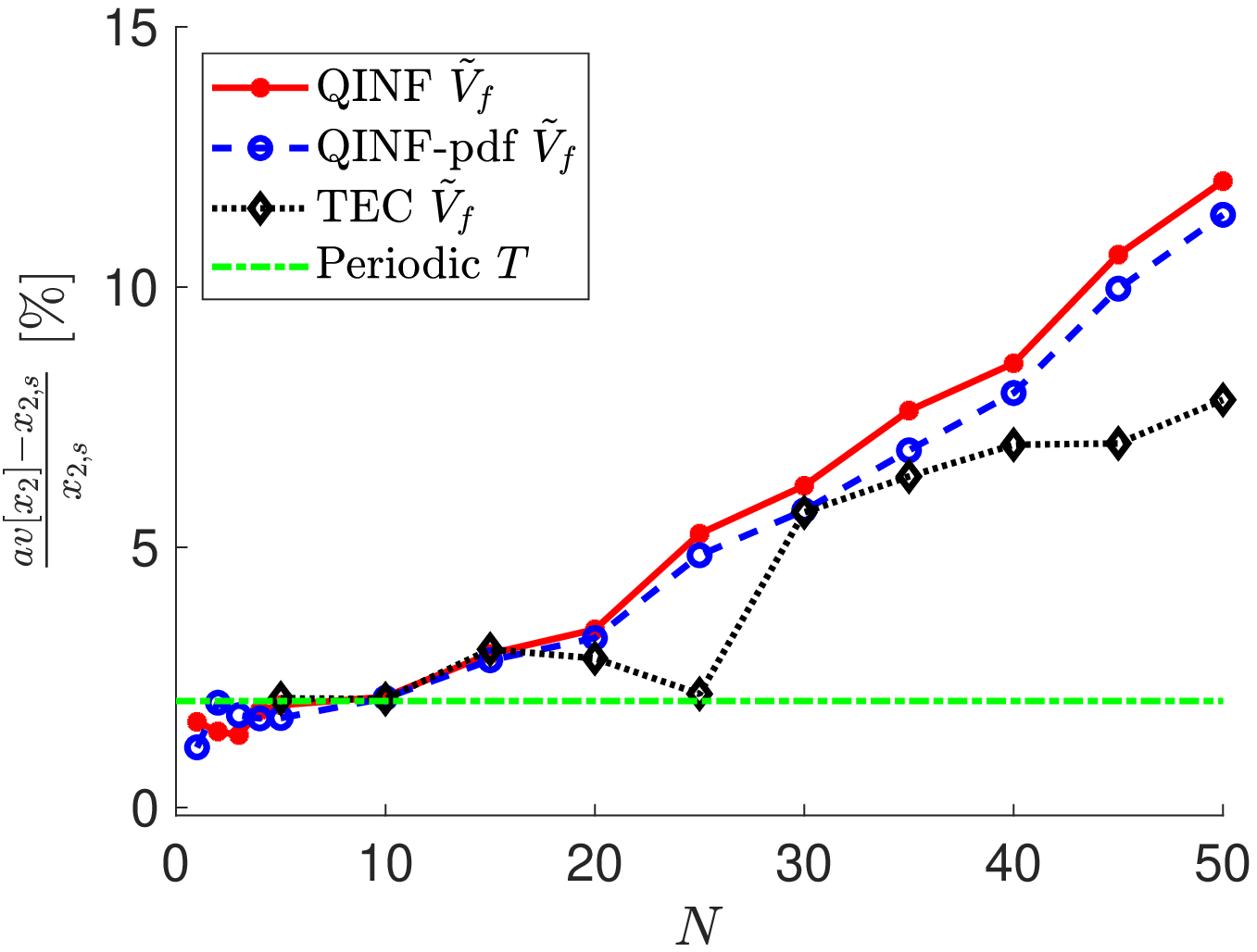}\\
\includegraphics[width=0.4\textwidth]{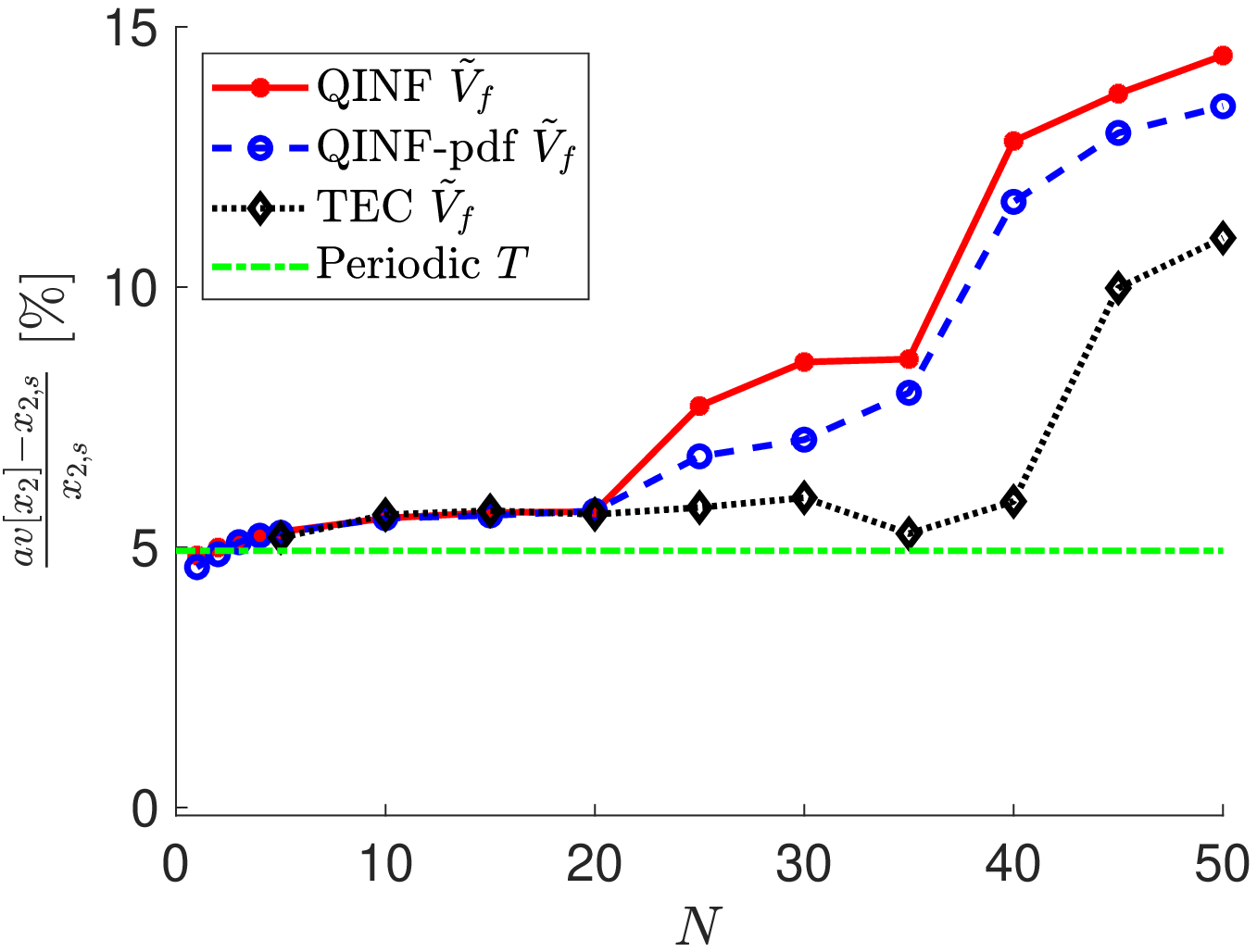}
\caption{%
 Average performance improvement relative to optimal steady-state $x_s$ for $T=1$ (top), $T=10$ (middle) and $T=20$ (bottom) using the modified cost $\tilde{V}_f$. Approach with economic terminal cost (red, solid, QINF); positive definite terminal cost (blue, dashed, QINF-pdf);  terminal equality constraint (black, dotted, TEC) with prediction horizon $N\in[1,50]$.
 Performance at the optimal $T$-periodic orbit is show in green, dashed for comparison. }
\label{fig:CSTR_App3}
 \end{figure}

We also implemented the continuous-time formulation discussed in Remark~\ref{rk:cont_time} with a variable sampling time $h\in[h_{\min},h_{\max}]=[10^{-2},5\cdot 10^{-2}]$. 
 The resulting performance for the  optimal $T$-periodic orbit \eqref{eq:opt_periodic}, and an economic MPC scheme without any terminal ingredients (UNCON) can be seen in Figure~\ref{fig:CSTR_App4}. 
 For comparison, we also implemented the proposed approach with a terminal equality constraint (TEC, Lemma~\ref{lemma:change_r_TEC}), $\nu=3$ and $T=20$.
 We can see that the variable sampling time $h$ significantly improves the overall performance compared to the fixed sampling time implementation shown in Figure~\ref{fig:CSTR_App1}. 
Obtaining improved terminal ingredients similar to Corollary~\ref{corol:term_generic} for this scenario, however, requires further research\footnote{%
As the continuous-time analogue to \eqref{eq:P_ineq} in Prop.~\ref{prop:term_generic}, we can design a quadratic term satisfying $\frac{d}{dt} \|x-x_r\|_{P(r_T,t)}^2\leq -\|x-x_r\|_{Q^*(r_T,t)}^2$ using   \cite[Lemma~4/Prop.~5]{JK_QINF}, which only requires $10$s and $53$s, respectively (compared to $37$ min for the discrete-time formulation).
However, a suitable terminal cost $V_f$ also requires a simple means to compute a corresponding gradient correction term $p$ in continuous-time. 
In addition, due to the average cost function proposed in Remark~\ref{rk:cont_time}, we need $\frac{dV_f}{dt}\leq -\ell_c/h$, which might only hold by scaling the terminal cost conservatively using $\frac{1}{h_{\min}}$. 
}.

We also tested the naive implementation discussed in Section~\ref{sec:pitfal}.
Although this approach is in general not satisfactory (as shown in Sec.~\ref{sec:pitfal}), in the considered example this approach often performs similar to the proposed formulation.
However, for small horizons $N$ and a small weighting $\beta$, neglecting the constraints~\eqref{eq:kappa_con_1}--\eqref{eq:kappa_con_2} can yield worse performance than steady-state operation, similar to the implementation without terminal ingredients (UCON) in Figure~\ref{fig:CSTR_App1}.

\begin{figure}[hbtp]
\includegraphics[width=0.4\textwidth]{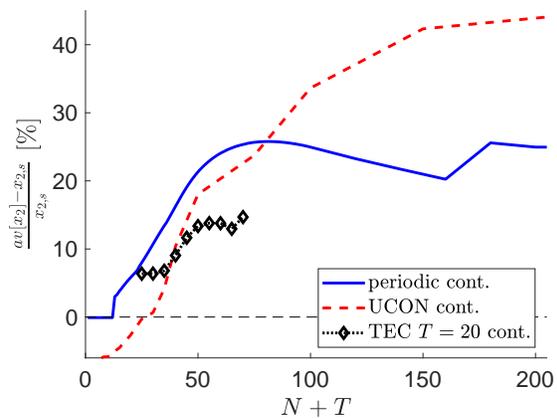}
\caption{Average performance improvement relative to optimal steady-state $x_s$ with flexible sampling time $h$. 
Periodic orbit (blue, solid, periodic cont.) with period length $T$, unconstrained economic MPC (red, dashed, UCON cont.) with horizon $N$ and terminal equality constraint using $\nu=3$ and $T=20$ (green, dotted, TEC $T=20$ cont.)}
\label{fig:CSTR_App4}
 \end{figure} 

\subsubsection*{Summary}
%better than periodic
\textbf{Improved performance: }
In this scenario, we have only considered the case of fixed known stage cost $\ell$ ($y$ constant). 
In this case, the proposed framework reduces to the periodic economic MPC approaches in~\cite{zanon2017periodic,alessandretti2016convergence,risbeck2019unification}, with a fixed periodic trajectory $r_T$ (assuming convergence of $r_T$ and considering the special case of periodic operation in~\cite{alessandretti2016convergence}). 
We have seen that the proposed fully economic formulation yields a better performance than operating the system at the optimal steady-state ($T=1$) or optimal periodic orbit $(T>1$). 
 In particular, this implies that the proposed framework utilizing periodic orbits ($T>1$) and additional predictions ($N\geq 1$), both with the economic stage cost $\ell$, outperforms periodicity constrained formulations~\cite{houska2017cost,wang2018economic} ($N=0$), steady-state formulations~\cite{muller2013economic,muller2014performance,fagiano2013generalized,ferramosca2014economic} ($T=1$) and periodic  tracking formulations~\cite{limon2014single,limon2016mpc,limon2018nonlinear,JK_periodic_automatica} ($\ell$ positive definite), even in case of fixed optimal operation ($y$ constant).
 The second scenario investigates this difference in performance for the transient case of changing operation conditions ($y$ not constant). 

\textbf{Different formulations: } 
A second key finding in this scenario is the comparison of the different design choices possible in the proposed framework, in particular the terminal ingredients ($V_f,\mathcal{X}_f$,  Ass.~\ref{ass:term_simple}, \ref{ass:term} Corollary~\ref{corol:term_generic}, Remark~\ref{rk:combine_linear_quadratic_pdf}, Lemma~\ref{lemma:change_r_TEC}) and the modified cost function ($\tilde{V}_f$, Lemma~\ref{lemma:term_cost_mod}, Prop.~\ref{prop:av_perf_2}). 

%1. tilde_V_f
First, even though the cost formulation $\tilde{V}_f$ proposed in Lemma~\ref{lemma:term_cost_mod} may be unconventional, we have not found any disadvantages in the performance comparison. At the same time, this formulation enjoys stronger theoretical properties (does not requires Ass.~\ref{ass:cont_orbit}) and is easier to implement ($T$ nonlinear constraints~\eqref{eq:kappa_con_1}--\eqref{eq:kappa_con_2} are replaced by one constraint~\eqref{eq:kappa_con_new}). 

%2. V_f
Regarding the terminal cost $V_f$, we can see that a properly designed terminal cost $V_f$ (Corollary~\ref{corol:term_generic}, Remark~\ref{rk:combine_linear_quadratic_pdf}) allows us to achieve a better performance with a shorter horizon $N$ compared to a simple terminal equality constraint (TEC, Lemma~\ref{lemma:change_r_TEC}). 
However, the difference in performance between the terminal cost $V_f$ in   Corollary~\ref{corol:term_generic} and Remark~\ref{rk:combine_linear_quadratic_pdf} seems rather small, while the design in Corollary~\ref{corol:term_generic} requires $T\cdot n$ additional nonlinear constraints~\eqref{eq:con_p_XY}, which does not seem appropriate for $T>>1$ given the small performance improvement. 

Given this comparison, the most suitable formulation seems to be the positive definite terminal cost $V_f$ from Remark~\ref{rk:combine_linear_quadratic_pdf} combined with the modified cost $\tilde{V}_f$ in Lemma~\ref{lemma:term_cost_mod}, although the simple terminal equality constraint may also be appropriate if the design in Alg.~\ref{alg:offline} is not applicable. 
In addition, further investigations into continuous-time formulations for dynamic operations along the lines of Remark~\ref{rk:cont_time} may yield significant performance improvements.

\subsection*{Transient performance under online changing conditions}
In the following, we provide additional details regarding the transient performance comparison.  
The tracking MPC uses the quadratic part of the economic terminal cost from Prop.~\ref{prop:term_generic} $V_{f,tr}(x,r)=\|x-x_r\|^2_{P(r)}$ (similar to~\cite{JK_QINF}) and the quadratic tracking cost $\ell_{tr}(x,u,r)=\|x-x_r\|_Q^2+\|u-u_r\|_R^2$ with $Q=0.05$, $R=1$. 

First, we consider $T=20$ and $N=10$ as in Figure~\ref{fig:dynEcoPerform}. 
In Figure~\ref{fig:dynEcoPerform3}, we see the closed-loop cost $\ell$ in the transition periods $t\in[182,220]$ and $t\in[398,440]$ ($y_t\rightarrow 1$), for the proposed approach, the periodicity-constraint MPC~\cite{houska2017cost,wang2018economic} and the tracking MPC~\cite{limon2014single,JK_periodic_automatica}. 
We can see that all approaches quickly converge to the optimal steady-state.
Crucially, in the transient period, while the system converges to the optimal steady-state, the proposed approach and the tracking MPC~\cite{limon2014single,JK_periodic_automatica} seem to be able to utilize the transition period to further minimize the stage cost $\ell$, while the periodicity constraint MPC~\cite{houska2017cost,wang2018economic} does not seem to have the necessary degrees of freedom. 
The periodicity constraint MPC cannot really take advantage of the transient changes yielding faster convergence but overall small improvement  (only $0.9\%$) compared to steady-state.
For comparison, the proposed approach and the tracking MPC~\cite{limon2014single,JK_periodic_automatica} can improve overall performance by $3.4\%$ and $2.9\%$, respectively. 

Thus, the proposed approach achieves better performance than the periodicity constraint MPC~\cite{houska2017cost,wang2018economic} and the tracking MPC~\cite{limon2014single,JK_periodic_automatica}, even in case of online changing operation conditions ($y_t$)  by utilizing a purely economic formulation with additional degrees of freedom.  

\begin{figure}[hbtp]
\begin{center}
\includegraphics[width=0.4\textwidth]{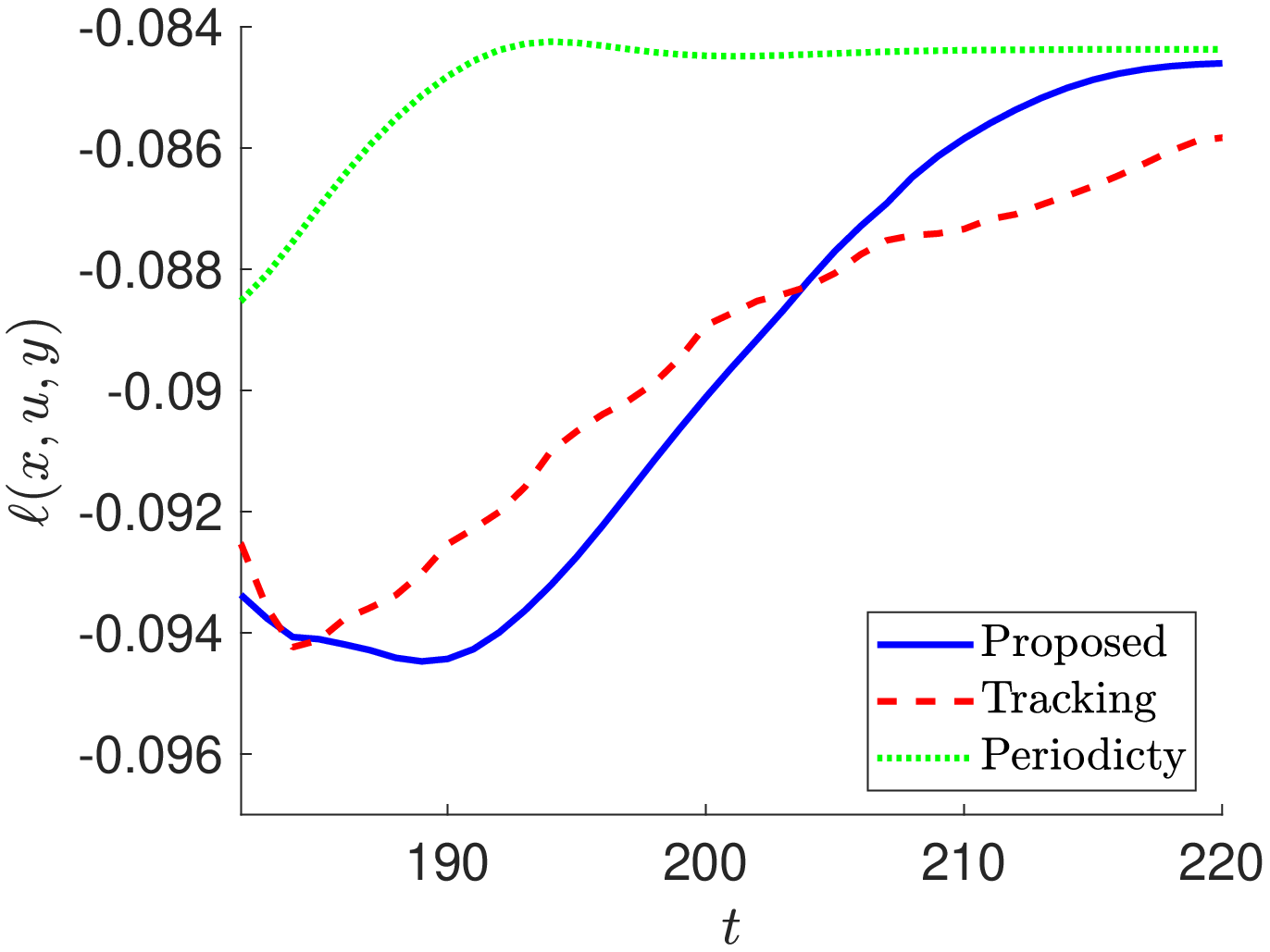}
\includegraphics[width=0.4\textwidth]{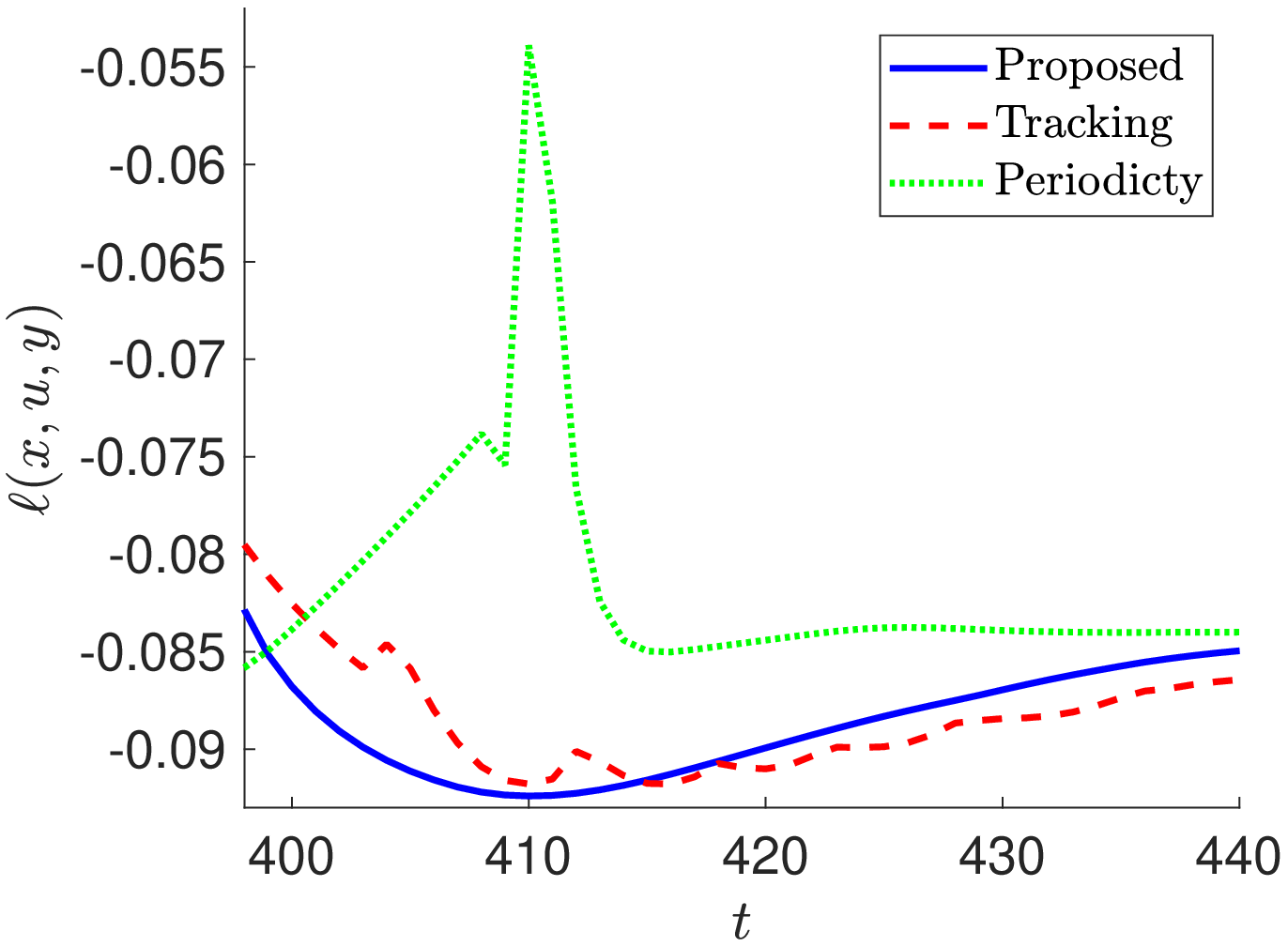}
\end{center}
\caption{%
Transient performance in dynamic operation under online changing conditions for CSTR: Proposed approach (blue, solid), tracking MPC~\cite{limon2014single,JK_periodic_automatica} (red, dashed) with $T=20$ and $N=10$ and periodicity constraint MPC~\cite{houska2017cost,wang2018economic} (green, dotted) with $T=20$ at $t\in[182,220]$ (top) and $t\in[398,440]$ (bottom).  }
\label{fig:dynEcoPerform3}
\end{figure}

\end{document}